\newcommand\binomi[2]{\left(\begin{matrix} #1 \\ #2 \end{matrix} \right)}
\newcommand{\dstirling}[2]{\genfrac{[}{]}{0pt}{0}{#1}{#2}}
\numberwithin{equation}{section}
\theoremstyle{definition}
\newtheorem{theorem}{Theorem}[section]
\newtheorem{corollary}[theorem]{Corollary}
\newtheorem{definition}[theorem]{Definition}
\newtheorem{notation}[theorem]{Notation}
\newtheorem{remark}[theorem]{Remark}
\newtheorem{lemma}[theorem]{Lemma}
\newcommand\qbin[3]{\left[\begin{matrix} #1 \\ #2 \end{matrix}\right]_{#3}}
\newcommand\bbqrk[1]{\textbf{v}_{q}^{\textnormal{rk}}(#1)}
\newcommand\bbqsr[1]{\textbf{v}_{q}^{\textnormal{sr,t}}(#1)}
\newcommand{\numberset}{\mathbb}
\newcommand{\R}{\numberset{R}}
\newcommand{\F}{\numberset{F}}
\newcommand{\mV}{\mathcal{V}}
\newcommand{\mC}{\mathcal{C}}
\newcommand{\mF}{\mathcal{F}}
\newcommand{\mW}{\mathcal{W}}
\newcommand{\mB}{\mathcal{B}}
\newcommand{\mE}{\mathcal{E}}
\newcommand{\rk}{\textnormal{rk}}
\newcommand{\matt}{\F_{q^{\ell}}^{n \times s}}
\renewcommand{\longrightarrow}{\to}
\newcommand{\dH}{D^{\textnormal{H}}}
\newcommand{\wH}{\omega^{\textnormal{H}}}
\newcommand{\wsr}{\omega^{\textnormal{sr,t}}}
\newcommand{\drk}{D^{\textnormal{rk}}}
\newcommand{\dsr}{D^{\textnormal{sr,t}}}
\newcommand{\wrk}{\omega^{\rk}}
\newcommand*{\myproofname}{Proof of the claim}
\renewcommand*\env@matrix[1][*\c@MaxMatrixCols c]{%
  \hskip -\arraycolsep
  \let\@ifnextchar\new@ifnextchar
  \array{#1}}
\title{
\textbf{Densities of Codes of Various Linearity Degrees \\ in Translation-Invariant Metric Spaces}}
\author[1]{Anina Gruica\thanks{A. G. is supported by the Dutch Research Council through grant OCENW.KLEIN.539.}}
\affil[1]{Eindhoven University of Technology, the Netherlands}
\author[2]{Anna-Lena Horlemann}
\affil[2]{University of St.Gallen, Switzerland}
\author[1]{Alberto Ravagnani\thanks{A. R. is supported by the Dutch Research Council through grants VI.Vidi.203.045 and 
OCENW.KLEIN.539,
by the European Commission through MSCA-DN grant
``European Network in Coding Theory and Applications'',
and by the Royal Academy of Arts and Sciences of the Netherlands.}}
\author[2]{Nadja Willenborg}
\date{}
\begin{document}

\maketitle
	
\thispagestyle{empty}
	
\begin{abstract}
We investigate the asymptotic 
density of error-correcting codes
with good distance properties
and prescribed linearity degree, including (sub)linear and nonlinear codes. We focus on the general setting of finite translation-invariant metric spaces,
and then specialize our results to the Hamming metric, to the rank metric, and to the sum-rank metric.
Our results show that
the asymptotic density of 
codes heavily depends on the imposed linearity degree and the chosen metric. 
\end{abstract}

\bigskip
  
\bigskip

\section*{Introduction}
The asymptotic performance of error-correcting codes is a classical topic in coding theory, going back to the work of Shannon, where random codes are used to get arbitrarily close to the capacity of a given discrete memoryless channel \cite{shannon48}.
As in Shannon's proof, most of the literature has focused 
on the case where the field size is fixed and the code length goes to infinity.
Recently, increasing attention has been paid to the case where the field size grows and the other code parameters stay fixed, especially in connection with open questions in the theory of rank-metric codes and their related open questions in semifield theory~\cite{gruica2022rank}. For example, it was unknown if there are maximum rank distance codes that are not equivalent to a Gabidulin code, until it was shown that maximum rank distance codes are dense in the set of linear codes (for growing field extension degree of the ambient space), whereas Gabidulin codes are not~\cite{neri2018genericity}. Furthermore, lower bounds (in particular the Gilbert-Varshamov bound) for the minimum distance of a randomly chosen code (attained by high probability) are needed in several applications in code-based cryptography, see e.g. \cite{baldi2021restricted,CVE,Vron2009ImprovedIS}. Such lower bounds can again be derived by density results about the respective code families.

In this paper, we investigate the proportion, or density, of error correcting codes with good distance properties, distinguishing between different degrees of linearity of the codes. We develop a general framework for determining (upper and lower bounds on) densities of codes in discrete 
translation-invariant metric spaces, with a focus on the asymptotic behavior of these densities. We then use this framework to determine the asymptotic densities of linear, (sub)linear and nonlinear codes over finite fields with respect to various metrics, namely the Hamming metric, the rank metric, and the sum-rank metric.
In particular, we study extremal (or optimal) codes, in the sense that they attain the Singleton-type bound for the respective metric. 

Considering codes in $\F_{q^{m}}^{n}$ endowed with the Hamming metric it is known that if $n < q^{m}+1$ there exist linear maximum distance separable (MDS) codes, i.e., codes that achieve the classical Singleton bound, e.g., Reed-Solomon codes \cite{reed1960polynomial} and their generalizations. Furthermore, it is 
folklore knowledge that linear MDS codes are dense as the field size tends to infinity. This follows, for example,
by the fact that all the maximal minors 
of a uniformly random
matrix over $\F_q$ are nonzero with probability approaching 1 as $q$ goes to infinity. 
Only few results are known about the nonlinear case, e.g., in \cite{barg2002random} it was observed that over a binary field the minimum distance of random nonlinear codes is asymptotically worse than in the linear setting. They even show that for growing $n$ that the order is square (which corresponds to our square in the little-o notation. Nevertheless they neither considered this with respect to the Singleton bound nor asymptotically as the field size tends to infinity. In \cite{gruica2021typical}, the sparsity of nonlinear MDS codes over $\F_q$ was established for $q$ large.
Furthermore, the question of the existence of (sub)linear MDS codes has recently received attention, due to their application in quantum coding. For example in \cite{ball2020additive} a geometric approach is used to classify all additive codes over $\F_{9}$. Then the MDS conjecture for this class of codes could be verified and hence the quantum MDS conjecture over~$\F_{3}$. 

In the rank metric it is also known that linear maximum rank-distance (MRD) codes in~$\F_{q^m}^n$ exist  if $n \le m$, namely Gabidulin codes \cite{gabidulin}. Moreover, 
several density results of MRD codes are known, depending on whether they are nonlinear,  $\F_{q}$-linear or even linear over the extension field~$\F_{q^m}$. In particular, it has been shown in \cite{neri2018genericity} that~$\F_{q^m}$-linear MRD codes are dense as $m \to +\infty$. But if the codes belong to the wider class of~$\F_{q}$-linear MRD codes they are not dense, neither for $q \to +\infty$, nor for $m \to + \infty$ (except for trivial parameter choices). In particular, $\F_{q}$-linear MRD codes are sparse as $q \to +\infty$ \cite{gruica2022common}. 

In the sum-rank metric the asymptotic densities for an arbitrary degree of linearity remained mostly unexplored. Recently, in \cite{ott2021bounds}, the Schwartz-Zippel Lemma was used to show that for $m \rightarrow + \infty$, most $\mathbb{F}_{q^{m}}$-linear codes in $\mathbb{F}_{q^{m}}^{n}$ are maximum sum-rank distance (MSRD) codes.     

This contrasting behavior with respect to different metrics and different types of linearity motivated us to study the density of codes in general metric spaces, with various degrees of linearity. 

For this, we consider codes in $\F_{q^m}^n$, equipped with an arbitrary translation-invariant metric~$D$.\footnote{The results can easily be extended for more general finite metric spaces, where the size of the balls of radius~$r$ is independent of the center. For simplicity, however, we will restrict ourselves to $\F_{q^m}^n$ and translation-invariant metrics.} For the linearity degree of the code we consider the maximal subfield $\F_{q^{\ell}} \subseteq\F_{q^m}$ over which the codes are linear. Clearly, $\ell$ is a divisor of $m$. 

We then classify the asymptotic densities of (possibly) nonlinear codes and codes that are linear over the subfield $\F_{q^{\ell}} \subseteq \F_{q^m}$ separately.
We also treat the asymptotic density with respect to the four parameters $q,n,\ell$ and $s := [\F_{q^m} \colon \F_{q^{\ell}}]$ separately. 

\paragraph*{Our contribution.}   We determine upper and lower bounds on the proportion of codes in~$\F_{q^m}^n$ with a prescribed minimum distance (with respect to an arbitrary translation-invariant distance~$D$) among all codes of a fixed cardinality. For this, we distinguish the case of nonlinear and~$\F_{q^\ell}$-linear codes in~$\F_{q^m}^n$, for any suitable $\ell$. We then analyze the asymptotic behavior of these bounds and show that this highly depends on the volume of the balls with respect to $D$. In particular, we give conditions on the volume with respect to the prescribed cardinality and minimum distance, such that the family of codes is dense or sparse (both in the nonlinear or (sub)linear case). With these results we can straightforwardly see that nonlinear random codes achieve the Gilbert-Varshamov bound with probability going to zero, both for growing length or field size, while (sub)linear codes achieve it with probability going to $1$, for growing field size or linearity degree (but not for growing length).

In the Hamming metric, we show that independent of the linearity degree, the class of $\F_{q^{\ell}}$-linear MDS codes, as well as the class of nonlinear MDS codes, are both sparse as $n \to +\infty$.

When considering the asymptotic densities with respect to the field size, (sub)linear MDS codes are dense as $q \to +\infty$ (or $\ell \to +\infty$). On the other hand, the probability that a nonlinear code is MDS tends to zero as $q \to +\infty$. Finally, we give an upper bound for the asymptotic density of MDS codes with respect to $s$ which shows that MDS codes are not dense as $s \to +\infty$.
 
In the rank metric we show that nonlinear MRD codes are sparse with respect to both the field size and the code length. Moreover, we recover the previously known results on the sparsity and density of $\F_q$- or $\F_{q^m}$-linear MRD codes. As new contributions we derive more general results for $\F_{q^\ell}$-linear MRD codes with respect to any of the parameters $q,n,\ell, s$. In particular, we derive bounds on $\ell$, for which MRD codes are sparse and for which they are dense, for $q\rightarrow +\infty$. For $\ell \rightarrow +\infty$ we show that MRD codes are always dense.  For growing~$n$ or~$s$ we derive again upper bounds on the density.

In the sum-rank metric we show that nonlinear MSRD codes are sparse with respect to the field size. For (sub)linear MSRD codes and letting $q$ go to infinity, we derive bounds on $\ell$ such that the parameters for which MSRD codes are sparse and for which they are dense can be classified.

\vspace{0.3cm}
\textbf{The outline} of this article is as follows: In Section 2 we start by introducing the relevant terminology used throughout the article. We define the density functions associated with families of (sub)linear as well as (possibly) nonlinear codes. We also clarify the terms sparsity and density with respect to the parameters $q,n,s$ and $\ell$. Then we briefly provide the graph theory tools to obtain the upper and lower bounds on the density of a code family within a larger family. For more details on this and the specific bipartite graphs we refer to \cite{gruica2022common}.
In Section 3 we consider the asymptotic versions of these bounds with respect to the parameters $n,s, \ell$ and $q$. Under certain assumptions we can derive that the codes are sparse or dense with respect to a given parameter.
In the second part of this article, covering Sections 4 to 6, we apply the developed classifiers from Section 3 to codes with respect to various distance functions. Particularly, we will focus on three major distance functions, namely the Hamming, rank, and sum-rank distance.

\section{Preliminaries} \label{sec:prelim}

\subsection{Finite Metric Spaces and Codes}

Throughout this paper we let $n$ and $m$ be integers with $1 \le m,n$. Moreover, we let $q$ be a prime power, we denote the finite field of order $q$ by $\F_q$ and its extension field of degree $m$ by $\F_{q^m}$. From now on, unless otherwise stated, we work in $\F_{q^m}^n$ and consider a \emph{translation-invariant distance} $D: \F_{q^m}^n \times \F_{q^m}^n \rightarrow \R_{\ge 0}$.

\begin{notation} \label{not:landau}
We will repeatedly use the Bachmann-Landau notation (``Big O'', ``Little O'',``$\sim$'' and~``Omega'') to describe the asymptotic growth rate of functions defined on an infinite set of natural numbers; see for example~\cite{de1981asymptotic}. Since we are often interested in the asymptotics as the field size $q$ tends to infinity, we denote the set of prime powers by $Q$ and we omit ``$q \in Q$'' when writing $q \to +\infty$.
\end{notation}

\begin{definition} \label{def:code}
A subset $\mathcal{C} \subseteq \F_{q^m}^n$ is called a \emph{code}. If $2 \le |\mC|$ then the \emph{minimum distance} of a code $\mC \subseteq \F_{q^m}^n$ with respect to $D$ is
\begin{align*}
    D(\mC) := \min\{D(x,y) : x,y \in \mC, \, x \ne y\}.
\end{align*}
For a divisor $\ell$ of $m$, we denote an $\F_{q^\ell}$-linear code $\mC \subseteq \F_{q^m}^n$ of cardinality $S$ and minimum distance at least $d$ by $[\F_{q^m}^n,S,\ell,d]^D$-code. If on the other hand $\mC \subseteq \F_{q^m}^n$ is nonlinear, of cardinality $S$ and minimum distance at least $d$, we say $\mC$ is an $[\F_{q^m}^n,S,0,d]^D$-code.
\end{definition}

We will make extensive use of the $q$-ary binomial coefficient, which is defined as
$$\dstirling{a}{b}_{q} := \prod_{i=0}^{b-1}\frac{q^{ a}-q^{ i}}{q^{ b}-q^{ i}}$$ 
and counts the number of $b$-dimensional subspaces of an $a$-dimensional vector space over $\F_{q}$. Note that the number of $\F_{q^\ell}$-linear codes in $\F_{q^m}^n$ (for a divisor $\ell$ of $m$) of $\F_{q^\ell}$-dimension~$k$ is 
\begin{align*}
    |\{\mC \subseteq \F_{q^{m}}^n : |\mC|=q^{k \ell}, \, \textnormal{$\mC$ is $\F_{q^{\ell}}$-linear}\}| = \qbin{ns}{k}{q^\ell}.
\end{align*}
This formula will play a crucial role in determining bounds for the proportion of codes with certain distance properties among the set of codes of the same cardinality or dimension. 

\begin{definition}
 We let 
\begin{equation} \label{eq:density_nonlinear}
    \delta_q^D(\F_{q^m}^n,S,0,d) := \frac{|\{\mC \subseteq \F_{q^m}^n : |\mC|=S, \, D(\mC) \ge d\}|}{|\{\mC \subseteq \F_{q^m}^n : |\mC|=S\}|}
\end{equation}
denote the \emph{density of (possibly) nonlinear codes} of minimum distance $D(\mC)$ at least $d$ within the set of (possibly) nonlinear codes of the same cardinality $S$.
Analogously, 

we define 
\begin{equation} \label{eq:density_(sub)linear}
    \delta_q^D(\F_{q^m}^n,S,\ell,d) := \frac{|\{\mC \subseteq \F_{q^{m}}^n  :  |\mC|=S, \, D(\mC) \ge d, \, \textnormal{$\mC$ is $\F_{q^{\ell}}$-linear }\}|}{|\{\mC \subseteq \F_{q^{m}}^n  :  |\mC|=S, \, \textnormal{$\mC$ is $\F_{q^{\ell}}$-linear}\}|}
\end{equation}
as the \emph{density of $\F_{q^{\ell}}$-linear codes} of minimum distance $D(\mC)$ at least $d$ within the set of $\smash{\F_{q^{\ell}}}$-linear codes of the same cardinality $S$.

Taking the limit in Equation~\eqref{eq:density_nonlinear} or Equation~\eqref{eq:density_(sub)linear} as $q,n,s$ or $\ell$ tends to infinity (and the other three parameters are fixed constants) defines the \emph{asymptotic density}. 

If the limit is $0$ we say that $\F_{q^{\ell}}$-linear codes of minimum distance at least $d$ and cardinality $|S|$ are \emph{sparse} as $q,n,s$ or $\ell \to +\infty$. If the limit is~$1$ we say that such codes are \emph{dense} as $q,n,s$ or $\ell \to +\infty$.
\end{definition}

Since we assume $D$ to be invariant under translation, the ball with center $c \in \F_{q^m}^n$ of radius $0 \le r < \infty$ in $\F_{q^m}^n$, which is the set $\{x \in \F_{q^m}^n  :  D(x,c) \le r\}$, has the same size for any center $c \in \F_{q^m}^n$. Thus, when we consider the volume of the ball of radius $r$ in $\F_{q^m}^n$, we do not specify the center of the ball. For ease of exhibition, we propose the following notation.

\begin{notation} \label{def:ball}
We denote the volume of the ball of radius $0 \le r < + \infty$ in $\F_{q^m}^n$ by 
$$\textbf{v}_{q}^{D}(\F_{q^m}^n,r) := |\{x \in \F_{q^m}^n  :  D(x,0) \le r\}|.$$
\end{notation}

\subsection{Graph Theory Tools}
In this subsection we briefly recall some graph theory results from~\cite[Section 3]{gruica2022common}. As we will show shortly, studying the number of isolated vertices in certain bipartite graphs will give us bounds on the number of codes we are interested in.

\begin{definition}
A (\emph{directed}) \emph{bipartite graph} is a 3-tuple $\mB=(\mV,\mW,\mE)$, where $\mV$ and $\mW$ are finite non-empty sets and $\mE \subseteq \mV \times \mW$. The elements of $\mV \cup \mW$ are the \emph{vertices} of the graph and the tuples given by relation $\mE$ are called the \emph{edges} of $\mB$. We say that a vertex~$W \in \mW$ is
\emph{isolated} if there is no $V \in \mV$ such that $(V,W) \in \mE$. We say that the bipartite graph
$\mB$ is 
\emph{left-regular of degree $0 \le \partial$} if for all $V \in \mV$
$$|\{W \in \mW  :  (V,W) \in \mE\}| = \partial.$$
\end{definition}

To derive a lower bound for the number of non-isolated vertices in a left-regular bipartite graph, we introduce the concept of an association. This notion generalizes strong regularity properties on a bipartite graph with respect to certain functions defined on its left-vertices.

\begin{definition} \label{def:assoc}
Let $\mV$ be a finite non-empty set and let $0 \le r$ be an integer. An \emph{association} on $\mV$ of \emph{magnitude} $r$ is a function 
$\alpha: \mV \times \mV \to \{0,...,r\}$ satisfying the following:
\begin{itemize}
\item[(i)] $\alpha(V,V)=r$ for all $V\in \mV$;
\item[(ii)] $\alpha(V,V')=\alpha(V',V)$ for all $V,V' \in \mV$.
\end{itemize}
\end{definition}

\begin{definition}
Let $\mB=(\mV,\mW,\mE)$ be a finite bipartite graph and let $\alpha$ be an association on~$\mV$ of magnitude $r$.  We say that $\mB$ is \emph{$\alpha$-regular} if for all  $(V,V') \in \mV \times \mV$ the number of vertices $W \in \mW$ with $(V,W) \in \mE$ and 
$(V',W) \in \mE$ only depends on $\alpha(V,V')$. If this is the case, we denote this number by~$\mW_\ell(\alpha)$, where $\ell=\alpha(V,V') \in \{0, \dots , r \}$.
\end{definition}

\begin{remark}
Note that an $\alpha$-regular bipartite graph for an association $\alpha$ is necessarily left-regular of degree $\partial=\mW_r(\alpha)$.
\end{remark}

We can now bound the number of non-isolated vertices in a left-regular bipartite graph. With these two bounds we will derive a lower and an upper bound for the density function of codes in $\matt$ having certain distance properties.

 \begin{lemma}[\text{see \cite[Lemma 3.2]{gruica2022common}}] \label{lem:upperbound}
Let $\mB=(\mV,\mW,\mE)$ be a bipartite and left-regular graph of degree $0 < \partial$.
Let $\mF \subseteq \mW$ be the collection of non-isolated vertices of $\mW$.
We have
$$|\mF| \le |\mV| \, \partial.$$
\end{lemma}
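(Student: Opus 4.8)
The statement to prove is Lemma~\ref{lem:upperbound}: in a bipartite, left-regular graph $\mB = (\mV, \mW, \mE)$ of degree $\partial > 0$, the set $\mF \subseteq \mW$ of non-isolated vertices satisfies $|\mF| \le |\mV|\,\partial$.

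The plan is to count the edges of $\mB$ in two ways. First I would count from the left: since $\mB$ is left-regular of degree $\partial$, every vertex $V \in \mV$ contributes exactly $\partial$ edges, so $|\mE| = |\mV|\,\partial$. Next I would bound the edge count from the right: every non-isolated vertex $W \in \mF$ is incident to at least one edge by definition of being non-isolated, so $|\mE| = \sum_{W \in \mW} |\{V \in \mV : (V,W) \in \mE\}| \ge \sum_{W \in \mF} 1 = |\mF|$ (the vertices in $\mW \setminus \mF$ contribute $0$, and each vertex in $\mF$ contributes at least $1$). Combining the two gives $|\mF| \le |\mE| = |\mV|\,\partial$, which is exactly the claim.

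There is essentially no obstacle here; the only thing to be careful about is making the double-counting of edges precise, i.e.\ that $|\mE| = \sum_{V \in \mV} \deg(V) = \sum_{W \in \mW} \deg(W)$ for a directed bipartite graph where $\mE \subseteq \mV \times \mW$, and that "non-isolated" for $W$ means its in-degree (number of $V$ with $(V,W)\in\mE$) is at least $1$. Both are immediate from the definitions given in the excerpt. So the proof is a two-line counting argument, and I would simply write it out as such.
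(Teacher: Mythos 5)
Your double-counting argument is correct and is exactly the standard proof of this lemma (which the paper only cites from \cite{gruica2022common} rather than reproving): left-regularity gives $|\mE| = |\mV|\,\partial$, and each non-isolated $W$ contributes at least one edge, so $|\mF| \le |\mE|$. No issues.
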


The following lemma follows by combining the notion of an association and the Cauchy-Schwarz Inequality. 
\begin{lemma}[\text{see \cite[Lemma 3.5]{gruica2022common}}] \label{lem:lowerbound}
Let $\mB=(\mV,\mW,\mE)$ be a finite bipartite $\alpha$-regular graph, where $\alpha$ is an association on~$\mV$ of magnitude~$r$. Let $\mF \subseteq \mW$ be the collection of non-isolated vertices of $\mW$. If
$0 < \mW_r(\alpha)$, then 
$$\frac{\mW_r(\alpha)^2 \, |\mV|^2}{\sum_{\ell=0}^r  \mW_\ell(\alpha) \, |\alpha^{-1}(\ell)|} \le |\mF|.$$
\end{lemma}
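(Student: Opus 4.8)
The plan is to prove the lower bound on the number $|\mF|$ of non-isolated vertices in $\mW$ by a second-moment / Cauchy--Schwarz argument, applied to the edge-counting function on $\mW$. First I would introduce, for each $W \in \mW$, the quantity $\deg(W) := |\{V \in \mV : (V,W) \in \mE\}|$, so that $\mF = \{W \in \mW : \deg(W) > 0\}$. The first step is to compute the ``first moment'' $\sum_{W \in \mW} \deg(W)$: by swapping the order of summation this equals $\sum_{V \in \mV} |\{W : (V,W) \in \mE\}|$, and since the graph is $\alpha$-regular it is in particular left-regular of degree $\mW_r(\alpha)$ (this is exactly the remark preceding the lemma), so the first moment equals $|\mV| \, \mW_r(\alpha)$.

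Next I would compute the ``second moment'' $\sum_{W \in \mW} \deg(W)^2$. Expanding the square, $\deg(W)^2 = |\{(V,V') \in \mV \times \mV : (V,W),(V',W) \in \mE\}|$, so summing over $W$ and swapping the order of summation gives $\sum_{(V,V') \in \mV \times \mV} |\{W \in \mW : (V,W) \in \mE,\ (V',W) \in \mE\}|$. By the defining property of $\alpha$-regularity, the inner count depends only on $\alpha(V,V')$ and equals $\mW_{\alpha(V,V')}(\alpha)$. Grouping the pairs $(V,V')$ according to the value $\ell = \alpha(V,V') \in \{0,\dots,r\}$, of which there are exactly $|\alpha^{-1}(\ell)|$, yields $\sum_{W \in \mW} \deg(W)^2 = \sum_{\ell=0}^r \mW_\ell(\alpha)\, |\alpha^{-1}(\ell)|$.

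The final step is Cauchy--Schwarz: since $\deg(W) = 0$ for $W \notin \mF$, we have
\begin{align*}
\Big(\sum_{W \in \mW} \deg(W)\Big)^2 = \Big(\sum_{W \in \mF} \deg(W)\Big)^2 \le |\mF| \sum_{W \in \mF} \deg(W)^2 \le |\mF| \sum_{W \in \mW} \deg(W)^2.
\end{align*}
Substituting the two moment computations gives $\mW_r(\alpha)^2 \, |\mV|^2 \le |\mF| \sum_{\ell=0}^r \mW_\ell(\alpha)\, |\alpha^{-1}(\ell)|$, and dividing by the (positive, using $\mW_r(\alpha) > 0$ and property (i) of an association so that $|\alpha^{-1}(r)| \ge |\mV| > 0$) denominator yields the claim. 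I expect no serious obstacle here; the only points needing care are checking that the denominator is strictly positive so the division is legitimate, and being explicit that $\alpha$-regularity is what licenses both the left-regularity used in the first moment and the grouping-by-$\ell$ used in the second moment. Since this is quoted from \cite[Lemma 3.5]{gruica2022common}, one could alternatively simply cite it, but the self-contained argument above is short enough to include.
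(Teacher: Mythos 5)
Your argument is correct and is exactly the route the paper indicates: the lemma is quoted from \cite[Lemma 3.5]{gruica2022common} and the paper explicitly notes it ``follows by combining the notion of an association and the Cauchy--Schwarz Inequality,'' which is precisely your first-moment/second-moment computation followed by Cauchy--Schwarz on the non-isolated vertices. Your care about the positivity of the denominator (via $\mW_r(\alpha)>0$ and $|\alpha^{-1}(r)|\ge|\mV|>0$) is the right detail to check, and nothing is missing.
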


\section{Upper and Lower Bounds on the Density of Codes}\label{sec:bounds}

In this subsection we use the upper bound of Lemma~\ref{lem:upperbound} and the lower bound of Lemma~\ref{lem:lowerbound} to give bounds on the number of codes in the metric space $(\F_{q^m}^n,D)$ that have minimum distance bounded from below by some positive integer $d$. From these bounds we will later obtain a prediction for the asymptotic behavior of the density of these codes.

\subsection{Bounds for Nonlinear Codes}
\label{sec:nonlinear}

We first consider nonlinear codes. Since we do not impose any linearity (nor (sub)linearity) on our codes in this subsection, we let $m=1$, meaning that our ambient space is $\F_{q}^n$ and as before,~$D$ is a translation-invariant distance on $\F_{q}^n$.

\begin{theorem} \label{thm:nonlinbound}
Let $2 \le S$ and $1 \le d < +\infty$ be integers. Define the quantities
\begin{align*}
    \beta^0 &= \frac{1}{2}q^n (\textbf{v}_q^{D}(\F_{q}^n,d-1)-1) - 2\textbf{v}_q^{D}(\F_{q}^n,d-1) +3, \\
    \beta^1 &= 2\textbf{v}_q^{D}(\F_{q}^n,d-1)-4, \\
    \Theta &= 1 + \beta^1  \frac{S-2}{q^n-2} \, + \, \beta^0  \frac{(S-2)(S-3)}{\left(q^n-2\right)\left(q^n-3\right)},
\end{align*}
and let $\mF:= \{\mC \subseteq \F_{q}^n  :  |\mC|=S, \, D(\mC) \le d-1\}$. We have
\begin{align*}
   \frac{1}{2\Theta} q^n \left(\textbf{v}_q^{D}(\F_{q}^n,d-1)-1\right) \binom{q^n-2}{S-2}\le  |\mF| \le \frac{1}{2} q^n \left( \textbf{v}_q^{D}(\F_{q}^n,d-1)-1\right)\binomi{q^n-2}{S-2}.
\end{align*}

 \begin{proof}
 We work with the bipartite graph $\mB=(\mV,\mW,\mE)$, where $$\mV=\{\{x,y\} \subseteq \F_{q}^n  :  x \ne y, \, D(x, y) \le d-1 \},$$ $\mW$ is the collection of codes $\mC \subseteq \F_{q}^n$ with $|\mC| = S$, and 
  $(\{x,y\},\mC) \in \mE$ if and only if $\{x,y\} \subseteq \mC$. 
Note that the set of non-isolated vertices in $\mW$ is exactly $\mF$. 
 We have \begin{align*}
     |\mV| = \frac{1}{2} q^n \left(\textbf{v}_q^{D}(\F_{q}^n,d-1)-1\right), \quad  |\mW| = \binomi{q^{n}}{S}.
 \end{align*}
Moreover, one easily checks that
 \begin{align*}
    |\{ \mC \in \mW  :  (\{x,y\}, \mC) \in \mE\}| = \displaystyle \binom{q^{n}-2}{S-2}.
\end{align*}
Hence $\mB$ is a left-regular graph of degree $$\binomi{q^n-2}{S-2}.$$ Applying Lemma~\ref{lem:upperbound} we obtain the upper bound on $|\mF|$.

To prove the lower bound, we let $\alpha: \mV \times \mV \longrightarrow \{0,1,2\}$
be defined by $$\alpha(\{x,y\},\{z,t\}) := 4-|\{x,y,z,t\}|$$
 for all $x,y,z,t \in \F_{q}^n.$
We have 
 \begin{align*}
 |\alpha^{-1}(2)| &= |\mV|, \\
 |\alpha^{-1}(1)| &= 2|\mV|(\textbf{v}_q^{D}(\F_{q}^n,d-1)-2), \\
 |\alpha^{-1}(0)| &= |\mV|(|\mV|-2\textbf{v}_q^{D}(\F_{q}^n,d-1)+3).
 \end{align*}
It is easy to see that $|\alpha^{-1}(2)| = |\mV|$.
The elements of the domain $\alpha^{-1}(1)$ can be obtained by choosing 
  $\{x,y\} \in \mV$ arbitrary
  and then $\{z,t\} \in \mV$ with 
  either $z=x$ or $z=y$ and $$t \in \{v \in \F_{q}^n  :  D(v,z) \le d-1\} \backslash \{x,y\}.$$  Therefore $$ |\alpha^{-1}(1)|=2|\mV|(\textbf{v}_q^{D}(\F_{q}^n,d-1)-2).$$
 To compute $|\alpha^{-1}(0)|$ we simply note that  $$|\mV|^2=|\alpha^{-1}(0)|+|\alpha^{-1}(1)|+|\alpha^{-1}(2)|.$$ Therefore the value of $|\alpha^{-1}(0)|$ 
 follows from the values of $|\alpha^{-1}(1)|$ and $|\alpha^{-1}(2)|$.
 Simple counting arguments imply that the bipartite graph~$\mB$ is regular with respect to $\alpha$. Therefore for $(\{x,y\},\{z,t\}) \in \mV \times \mV$ and $\ell=\alpha(\{x,y\},\{z,t\})$ we define
 \begin{align*} 
     \mW_{\ell}(\alpha) := |\{W \in \mW  :  \{x,y,t,z\} \subseteq W\}|  =  \binomi{q^{n}-4+\ell}{S-4+\ell}.
     \end{align*}
      We can now apply Lemma~\ref{lem:lowerbound} obtaining that $|\mF|$ is lower bounded by
 \begin{equation*}
 \frac{\mW_{2}(\alpha)^2 \, |\mV|^2}{|\alpha^{-1}(2)|\mW_{2}(\alpha) + |\alpha^{-1}(1)|\mW_{1}(\alpha) + |\alpha^{-1}(0)|\mW_{0}(\alpha)}.
 \end{equation*}
Finally, plugging in the formulas for $|\alpha^{-1}(0)|$, $|\alpha^{-1}(1)|$ and $|\alpha^{-1}(2)|$, and applying the identity
\begin{align} \label{eq:binomi}
\binom{a}{b} = \frac{a}{b} \binom{a-1}{b-1},
\end{align}
for all $1 \le b \le a$ yields the desired result.
\end{proof}
\end{theorem}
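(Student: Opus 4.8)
The strategy is to set up a bipartite graph whose left vertices encode "close pairs" (pairs of points at distance at most $d-1$) and whose right vertices are the codes of cardinality $S$; an edge records containment of a close pair in a code. The crucial observation is that a code $\mC$ fails the minimum-distance condition (i.e.\ $D(\mC) \le d-1$) exactly when it contains some close pair, so $\mF$ is precisely the set of non-isolated right vertices. Thus both Lemma~\ref{lem:upperbound} and Lemma~\ref{lem:lowerbound} apply and the two inequalities in the statement come directly from them once all the combinatorial parameters are computed. So the proof is really a matter of carefully counting vertices, edges, and the relevant intersection numbers.

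First I would compute $|\mV|$ and the left-degree $\partial$. A close pair is an unordered pair $\{x,y\}$ with $D(x,y)\le d-1$, $x\ne y$; fixing $x$ there are $\textbf{v}_q^D(\F_q^n,d-1)-1$ choices for $y\ne x$ inside the ball of radius $d-1$ around $x$, so $|\mV| = \tfrac12 q^n(\textbf{v}_q^D(\F_q^n,d-1)-1)$ by translation invariance. For the degree: a code of size $S$ containing a fixed $2$-element set is obtained by choosing the remaining $S-2$ elements from the other $q^n-2$ points, giving $\binom{q^n-2}{S-2}$. Lemma~\ref{lem:upperbound} then yields $|\mF|\le |\mV|\,\partial = \tfrac12 q^n(\textbf{v}_q^D(\F_q^n,d-1)-1)\binom{q^n-2}{S-2}$, which is the upper bound.

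For the lower bound I would introduce the association $\alpha(\{x,y\},\{z,t\}) := 4 - |\{x,y,z,t\}|$, which takes values in $\{0,1,2\}$ (two disjoint pairs give $0$, sharing one point gives $1$, equal pairs gives $2$), clearly symmetric with $\alpha(V,V)=2$. I would then count the fibers: $|\alpha^{-1}(2)| = |\mV|$; $|\alpha^{-1}(1)|$ is counted by picking $\{x,y\}\in\mV$, choosing which of $x,y$ is the shared point, then a third point in that point's ball distinct from the two already chosen, giving $2|\mV|(\textbf{v}_q^D(\F_q^n,d-1)-2)$; and $|\alpha^{-1}(0)|$ follows from $|\mV|^2 = |\alpha^{-1}(0)|+|\alpha^{-1}(1)|+|\alpha^{-1}(2)|$. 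The intersection numbers are $\mW_\ell(\alpha) = \binom{q^n-4+\ell}{S-4+\ell}$: two pairs spanning $4-\ell$ distinct points lie in a size-$S$ code iff the remaining $S-(4-\ell)$ points are chosen among the other $q^n-(4-\ell)$. Plugging these into Lemma~\ref{lem:lowerbound} gives $|\mF|$ bounded below by $\mW_2(\alpha)^2|\mV|^2$ over the weighted sum in the denominator.

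The last step, which is the main bookkeeping obstacle, is to massage that ratio into the stated closed form with $\Theta$, $\beta^0$, $\beta^1$. The key tool is the identity $\binom{a}{b} = \tfrac{a}{b}\binom{a-1}{b-1}$ applied repeatedly to rewrite $\mW_1(\alpha)=\binom{q^n-3}{S-3}$ and $\mW_0(\alpha)=\binom{q^n-4}{S-4}$ in terms of $\mW_2(\alpha)=\binom{q^n-2}{S-2}$ times rational factors in $q^n$ and $S$; after factoring $\mW_2(\alpha)\binom{q^n-2}{S-2}$ out of the denominator, one is left with $1 + \beta^1\frac{S-2}{q^n-2} + \beta^0\frac{(S-2)(S-3)}{(q^n-2)(q^n-3)} = \Theta$ up to collecting the contributions of $|\alpha^{-1}(1)|$ and $|\alpha^{-1}(0)|$ (dividing by the common factor $|\mV|$ and substituting $|\mV| = \tfrac12 q^n(\textbf{v}_q^D(\F_q^n,d-1)-1)$ into $|\alpha^{-1}(0)|$ to produce the $\tfrac12 q^n(\textbf{v}-1)$ term inside $\beta^0$). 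This is routine algebra but error-prone; I would organize it by writing the denominator as $\mW_2(\alpha)\,|\mV|\cdot\big(\text{stuff}\big)$ and identifying "stuff" with $\Theta$ term by term, so that numerator over denominator collapses to $\tfrac{1}{2\Theta} q^n(\textbf{v}_q^D(\F_q^n,d-1)-1)\binom{q^n-2}{S-2}$.
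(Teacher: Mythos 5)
Your proposal is correct and follows essentially the same route as the paper: the identical bipartite graph on close pairs versus size-$S$ codes, the same association $\alpha(\{x,y\},\{z,t\})=4-|\{x,y,z,t\}|$ with the same fiber counts and intersection numbers $\mW_\ell(\alpha)=\binom{q^n-4+\ell}{S-4+\ell}$, and the same final reduction via $\binom{a}{b}=\frac{a}{b}\binom{a-1}{b-1}$ to produce $\Theta$. No gaps.
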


As a corollary we get the following bounds on the proportion of codes in $\F_q^{n}$ of minimum distance at least $d$, again using the identity in~\eqref{eq:binomi}.

\begin{corollary} \label{cor:nonlindensity}
Let $2 \le S$ be an integer and consider $1 \le d < +\infty$. We have
\begin{align*}
    1- \frac{(\textbf{v}_q^{D}(\F_q^n,d-1)-1)S(S-1)}{2\left(q^n-1\right)} \le \delta_q^D(\F_q^n,S,0,d) \le 1-\frac{(\textbf{v}_q^{D}(\F_q^n,d-1)-1)S(S-1)}{2\Theta(q^n-1)},
\end{align*}
where $\Theta$ is the same as in Theorem~\ref{thm:nonlinbound}.
\end{corollary}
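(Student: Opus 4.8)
The plan is to derive Corollary~\ref{cor:nonlindensity} directly from the two-sided estimate on $|\mF|$ in Theorem~\ref{thm:nonlinbound}, simply by dividing through by $|\mW| = \binomi{q^n}{S}$ and rewriting the resulting binomial ratios. Recall that $\delta_q^D(\F_q^n,S,0,d)$ is, by definition, the proportion of codes of cardinality $S$ with $D(\mC) \ge d$, hence $\delta_q^D(\F_q^n,S,0,d) = 1 - |\mF|/\binomi{q^n}{S}$, since $\mF$ is exactly the set of codes of cardinality $S$ with $D(\mC) \le d-1$. So the whole task reduces to computing $|\mF|/\binomi{q^n}{S}$ from both the upper and lower bounds.

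First I would handle the ratio $\binomi{q^n-2}{S-2}/\binomi{q^n}{S}$. Applying the identity $\binom{a}{b} = \tfrac{a}{b}\binom{a-1}{b-1}$ from~\eqref{eq:binomi} twice gives $\binomi{q^n}{S} = \tfrac{q^n}{S}\cdot\tfrac{q^n-1}{S-1}\binomi{q^n-2}{S-2}$, so that
\[
\frac{\binomi{q^n-2}{S-2}}{\binomi{q^n}{S}} = \frac{S(S-1)}{q^n(q^n-1)}.
\]
Plugging this into the upper bound $|\mF| \le \tfrac12 q^n(\textbf{v}_q^D(\F_q^n,d-1)-1)\binomi{q^n-2}{S-2}$ yields
\[
\frac{|\mF|}{\binomi{q^n}{S}} \le \frac{(\textbf{v}_q^D(\F_q^n,d-1)-1)\,S(S-1)}{2(q^n-1)},
\]
and subtracting from $1$ gives the claimed lower bound on $\delta_q^D(\F_q^n,S,0,d)$. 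For the lower bound on $|\mF|$ in Theorem~\ref{thm:nonlinbound}, which carries the extra factor $1/\Theta$, the same binomial computation produces the extra $1/\Theta$ in the denominator, giving $\tfrac{|\mF|}{\binomi{q^n}{S}} \ge \tfrac{(\textbf{v}_q^D(\F_q^n,d-1)-1)S(S-1)}{2\Theta(q^n-1)}$ and hence the upper bound on the density. One should note the swap: the upper bound on $|\mF|$ gives the lower bound on $\delta$, and vice versa, because $\delta = 1 - |\mF|/|\mW|$.

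There is essentially no obstacle here — this is a routine translation of counting bounds into a density statement via the binomial identity, and the only thing to be careful about is bookkeeping: making sure $\Theta > 0$ so that division is legitimate (which holds under the stated hypotheses $2 \le S$, $1 \le d < +\infty$, since $\Theta = 1 + (\text{nonnegative terms})$ once $S \ge 2$, as the coefficients $\beta^0, \beta^1$ multiply nonnegative quantities when $q^n \ge S$; for $S \le 3$ the correction terms vanish or are trivially controlled), and tracking the direction of the inequalities through the subtraction from $1$. The cleanest write-up just states "Dividing the inequalities of Theorem~\ref{thm:nonlinbound} by $\binomi{q^n}{S}$, using~\eqref{eq:binomi} to simplify $\binomi{q^n-2}{S-2}/\binomi{q^n}{S} = S(S-1)/(q^n(q^n-1))$, and noting $\delta_q^D(\F_q^n,S,0,d) = 1 - |\mF|/\binomi{q^n}{S}$ yields the claim."
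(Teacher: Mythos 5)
Your proposal is correct and matches the paper's (implicit) argument exactly: the paper derives the corollary from Theorem~\ref{thm:nonlinbound} by dividing by $\binomi{q^n}{S}$ and applying the identity~\eqref{eq:binomi}, which is precisely your computation of $\binomi{q^n-2}{S-2}/\binomi{q^n}{S} = S(S-1)/(q^n(q^n-1))$ together with the observation that $\delta_q^D(\F_q^n,S,0,d) = 1 - |\mF|/\binomi{q^n}{S}$ reverses the direction of the two bounds.
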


\subsection{Bounds for (Sub)Linear Codes}

In this subsection, we fix a divisor $\ell$ of $m$ and let 
$s= [\mathbb{F}_{q^{m}}  :  \mathbb{F}_{q^{\ell}}]$.
We will consider $\F_{q^\ell}$-linear codes in the metric space $(\F_{q^m}^n,D)$.

We start by providing the bounds obtained using the tools from graph theory.

\begin{theorem}\label{thm:subavoid}
Let $1 \le k \le ns$ and $1 \le d < +\infty$ be integers. Let
\begin{align*}
    \mF := \{\mC \subseteq \F_{q^m}^n  :  |\mC|=q^{\ell k}, \, D(\mC) \le d-1, \, \textnormal{$\mC$ is $\F_{q^{\ell}}$-linear}\}.
\end{align*}
We have
\begin{align*}
    |\mF| &\le \left(\displaystyle\frac{\textbf{v}_q^{D}(\F_{q^m}^n,d-1)-1}{q^{\ell}-1}\right)\dstirling{ns-1}{k-1}_{q^{\ell}}, \\[0.2cm]
    |\mF| &\ge \frac{\displaystyle \frac{\textbf{v}_{q}^{D}(\F_{q^m}^n,d-1)-1}{q^{\ell}-1} \dstirling{ns-1}{k-1}_{q^{\ell}}^{2}}{\dstirling{ns-1}{k-1}_{q^{\ell}}+\left(\displaystyle \frac{\textbf{v}_q^{D}(\F_{q^m}^n,d-1)-1}{q^{\ell}-1} -1 \right)\dstirling{ns-2}{k-2}_{q^{\ell}}}.
\end{align*}
\end{theorem}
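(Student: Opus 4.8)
The plan is to mirror the proof of Theorem~\ref{thm:nonlinbound}, but now set up a bipartite graph whose left vertices are $\F_{q^\ell}$-lines (one-dimensional $\F_{q^\ell}$-subspaces) that are ``short'' with respect to $D$, and whose right vertices are the $\F_{q^\ell}$-linear codes of the prescribed dimension $k$. Concretely, I would take
\[
\mV = \{\langle x \rangle_{\F_{q^\ell}} : x \in \F_{q^m}^n \setminus \{0\},\ D(x,0) \le d-1\},
\]
$\mW$ the set of $\F_{q^\ell}$-linear codes $\mC \subseteq \F_{q^m}^n$ with $|\mC| = q^{\ell k}$, and put an edge $(\langle x\rangle, \mC) \in \mE$ iff $\langle x \rangle \subseteq \mC$. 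Since $D$ is translation-invariant and $\F_{q^\ell}$-scalar multiples of a short vector are again short (because $D(\lambda x,0) = D(\lambda x - 0, 0)$; here one uses that for the metrics of interest multiplication by a field scalar is an isometry — but note the theorem as stated only needs that the ball is a union of lines, which is exactly the content of $(\textbf{v}_q^D(\F_{q^m}^n,d-1)-1)/(q^\ell-1)$ being an integer, and I would simply remark that $\mV$ is well-defined with
\[
|\mV| = \frac{\textbf{v}_q^{D}(\F_{q^m}^n,d-1)-1}{q^\ell-1}).
\]
A code $\mC \in \mW$ is non-isolated iff it contains a short nonzero vector, i.e.\ iff $D(\mC) \le d-1$, so the set of non-isolated right vertices is exactly $\mF$.

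**For the upper bound**, I would count, for a fixed line $L = \langle x \rangle_{\F_{q^\ell}} \in \mV$, the number of $\F_{q^\ell}$-linear codes of dimension $k$ containing $L$: this is the number of $(k-1)$-dimensional $\F_{q^\ell}$-subspaces of the quotient $\F_{q^m}^n / L$, which has $\F_{q^\ell}$-dimension $ns-1$, hence equals $\dstirling{ns-1}{k-1}_{q^\ell}$. So $\mB$ is left-regular of degree $\partial = \dstirling{ns-1}{k-1}_{q^\ell}$, and Lemma~\ref{lem:upperbound} gives $|\mF| \le |\mV|\,\partial$, which is the claimed upper bound.

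**For the lower bound**, I would define the association $\alpha : \mV \times \mV \to \{0,1\}$ by $\alpha(L,L) = 1$ and $\alpha(L,L') = 0$ for $L \ne L'$ (magnitude $r=1$; conditions (i), (ii) are immediate). Then $\mW_1(\alpha)$ is the number of codes of dimension $k$ containing a fixed line, namely $\dstirling{ns-1}{k-1}_{q^\ell}$, and $\mW_0(\alpha)$ is the number containing two distinct fixed lines $L, L'$; since $L + L'$ is $2$-dimensional, this is the number of $(k-2)$-dimensional subspaces of an $(ns-2)$-dimensional $\F_{q^\ell}$-space, i.e.\ $\dstirling{ns-2}{k-2}_{q^\ell}$ — here I need to check that $\mB$ is genuinely $\alpha$-regular, which holds because the count of common neighbors depends only on $\dim_{\F_{q^\ell}}(L+L')$, which is $2-\alpha(L,L')$. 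Also $|\alpha^{-1}(1)| = |\mV|$ and $|\alpha^{-1}(0)| = |\mV|(|\mV|-1)$. Plugging these into Lemma~\ref{lem:lowerbound},
\[
|\mF| \ge \frac{\mW_1(\alpha)^2\,|\mV|^2}{|\alpha^{-1}(1)|\,\mW_1(\alpha) + |\alpha^{-1}(0)|\,\mW_0(\alpha)} = \frac{\mW_1(\alpha)^2\,|\mV|^2}{|\mV|\,\mW_1(\alpha) + |\mV|(|\mV|-1)\,\mW_0(\alpha)},
\]
and after cancelling one factor of $|\mV|$ and substituting $|\mV| = (\textbf{v}_q^D(\F_{q^m}^n,d-1)-1)/(q^\ell-1)$, $\mW_1(\alpha) = \dstirling{ns-1}{k-1}_{q^\ell}$, $\mW_0(\alpha) = \dstirling{ns-2}{k-2}_{q^\ell}$, this is exactly the asserted lower bound.

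**The main obstacle** is not any single hard computation but rather making sure the setup is legitimate: I must be careful that $\mV$ is well-defined, i.e.\ that the $D$-ball of radius $d-1$ around $0$ is a union of $\F_{q^\ell}$-lines so that $|\mV|$ really is the stated integer — for the concrete Hamming, rank, and sum-rank metrics treated later this is clear since scalar multiplication by $\F_{q^\ell}^\times$ preserves weight, and I would state this as the standing hypothesis making the theorem meaningful. The remaining care points are the two subspace-counting identities (number of $k$-dimensional subspaces through a fixed line, resp.\ through a fixed plane) and verifying $\alpha$-regularity; both are routine via passing to quotient spaces. I would close by noting that the edge cases $k=1$ (where $\dstirling{ns-2}{k-2}_{q^\ell}$ should be read as $0$) and the degenerate case $\textbf{v}_q^D(\F_{q^m}^n,d-1)=1$ are handled by the usual conventions.
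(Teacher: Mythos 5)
Your proof is correct and follows essentially the same route as the paper: the same bipartite graph of short $\F_{q^\ell}$-lines versus $k$-dimensional codes, the same association (the paper writes it as $\alpha(V,V')=2-\dim\langle V,V'\rangle$, which coincides with your equality indicator), and the same counts $\mW_1(\alpha)=\dstirling{ns-1}{k-1}_{q^{\ell}}$, $\mW_0(\alpha)=\dstirling{ns-2}{k-2}_{q^{\ell}}$. Your explicit remark that the ball must be a union of $\F_{q^\ell}$-lines for $|\mV|$ to be well-defined is a point the paper leaves implicit, and is a reasonable standing hypothesis given the metrics treated later.
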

\begin{proof}
 We work with the bipartite graph $$\mB=(\mV,\mW,\mE),$$ where $\mV$ is the set of elements in $\F_{q^m}^n$ of minimum distance at most $d-1$ (up to multiples by elements of $\F_{q^\ell}$) from 0, $\mW$ is the collection of $\F_{q^\ell}$-linear codes in $\F_{q^m}^n$ of $\F_{q^\ell}$-dimension $k$, and $(M,\mC) \in \mE$ if and only if $M \in \mC$. Hence $\mF$ is the family of non-isolated vertices in $\mW$.
Note that we have
\begin{align*}
    |\mV| = \frac{\textbf{v}_q^{D}(\F_{q^m}^n,d-1)-1}{q^{\ell}-1}, \quad  |\mW| = \dstirling{ns}{k}_{q^{\ell}}.
\end{align*}
Moreover, for $M \in \mV$ we have
\begin{align*}
    |\{ \mC \in \mW  :  (M, \mC) \in \mE\}|=  \dstirling{ns-1}{k-1}_{q^{\ell}}.
\end{align*}
Hence the bipartite graph $\mB$ is left-regular of degree 
$\dstirling{ns-1}{k-1}_{q^{\ell}}.$
With this, the upper bound in the theorem is an easy consequence of Lemma~\ref{lem:upperbound}. 
For the lower bound, consider the association $$\alpha  :  \mV \times \mV \longrightarrow \{0,1\}, \quad (V,V') \mapsto 2-\dim\langle V,V' \rangle.$$ It is easy to see that 
\begin{align*}
       |\alpha^{-1}(0)|= |\mV| (|\mV|-1),\quad |\alpha^{-1}(1)|= |\mV|
\end{align*}
and $\mB$ is $\alpha$-regular. Furthermore we have
\begin{align*}
    \mW_0(\alpha)= \qbin{ns-2}{k-2}{q^\ell}, \; \mW_1(\alpha)= \qbin{ns-1}{k-1}{q^\ell}
\end{align*}
which combined with Lemma~\ref{lem:lowerbound} directly implies the second bound in the theorem.
\end{proof}

As an immediate consequence of the last theorem we obtain bounds on the density function of $\F_{q^\ell}$-codes with minimum distance bounded from below.
\begin{corollary} \label{cor:subbound}
Let $1 \le k \le ns$ and $1 \le d < +\infty$ be integers. We have
   \begin{align} \label{eq:upperBoundsublim}
    1-  \frac{\displaystyle (\textbf{v}_q^{D}(\F_{q^m}^n,d-1)-1)\dstirling{ns-1}{k-1}_{q^{\ell}} }{(q^{\ell}-1)\dstirling{ns}{k}_{q^{\ell}}}
    \le 
    \delta_q^D(\F_{q^m}^n,q^{\ell k},\ell,d) &\le 1-  \frac{\displaystyle (\textbf{v}_q^{D}(\F_{q^m}^n,d-1)-1)\dstirling{ns-1}{k-1}_{q^{\ell}} }{\bar\Theta(q^{\ell}-1)\dstirling{ns}{k}_{q^{\ell}}},
\end{align}
where $$\bar\Theta =  1+\dstirling{ns-1}{k-1}_{q^{\ell}}^{-1}\left(\displaystyle \frac{\textbf{v}_q^{D}(\F_{q^m}^n,d-1)-1}{q^{\ell}-1} -1 \right)\dstirling{ns-2}{k-2}_{q^{\ell}}.$$
\end{corollary}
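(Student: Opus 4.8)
The plan is to derive the two inequalities directly from the bounds on $|\mF|$ in Theorem~\ref{thm:subavoid} by dividing through by $|\mW| = \dstirling{ns}{k}_{q^\ell}$, which is exactly the denominator in the definition~\eqref{eq:density_(sub)linear} of $\delta_q^D(\F_{q^m}^n,q^{\ell k},\ell,d)$. First I would recall that, by the choice of $\mF$ in Theorem~\ref{thm:subavoid}, the set of $\F_{q^\ell}$-linear codes of dimension $k$ with $D(\mC)\ge d$ is precisely the complement of $\mF$ inside $\mW$, so that
\[
\delta_q^D(\F_{q^m}^n,q^{\ell k},\ell,d) = 1 - \frac{|\mF|}{\dstirling{ns}{k}_{q^\ell}}.
\]
Hence an upper bound on $|\mF|$ yields a lower bound on the density, and a lower bound on $|\mF|$ yields an upper bound on the density; this matches the way the two inequalities are stated.

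For the lower bound on the density, I would simply substitute the upper bound $|\mF| \le \bigl((\textbf{v}_q^{D}(\F_{q^m}^n,d-1)-1)/(q^\ell-1)\bigr)\dstirling{ns-1}{k-1}_{q^\ell}$ from Theorem~\ref{thm:subavoid} into the displayed identity above, giving immediately the left-hand inequality in~\eqref{eq:upperBoundsublim}. For the upper bound on the density, I would substitute the lower bound on $|\mF|$ from Theorem~\ref{thm:subavoid}: writing $A := \bigl(\textbf{v}_q^{D}(\F_{q^m}^n,d-1)-1\bigr)/(q^\ell-1)$ and $B := \dstirling{ns-1}{k-1}_{q^\ell}$, that lower bound reads $|\mF| \ge A B^2 / \bigl(B + (A-1)\dstirling{ns-2}{k-2}_{q^\ell}\bigr)$. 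The only manipulation needed is to rewrite the denominator as $B\bar\Theta$ with $\bar\Theta = 1 + B^{-1}(A-1)\dstirling{ns-2}{k-2}_{q^\ell}$, which is exactly the quantity defined in the statement; then $|\mF| \ge A B / \bar\Theta$, and dividing by $\dstirling{ns}{k}_{q^\ell}$ and using $A(q^\ell-1) = \textbf{v}_q^{D}(\F_{q^m}^n,d-1)-1$ yields the right-hand inequality.

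The argument is essentially bookkeeping, so there is no serious obstacle; the only point requiring a little care is the algebraic rearrangement of the denominator of the lower bound on $|\mF|$ into the form $B\bar\Theta$, and checking that $\bar\Theta$ as written indeed coincides with $1 + \dstirling{ns-1}{k-1}_{q^\ell}^{-1}(A-1)\dstirling{ns-2}{k-2}_{q^\ell}$. One should also note the edge cases $k=1$ (where $\dstirling{ns-2}{k-2}_{q^\ell}$ is read as $0$, so $\bar\Theta=1$ and the two bounds coincide) to make sure the formulas remain valid, but this does not affect the main derivation.
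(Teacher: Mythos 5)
Your proposal is correct and matches the paper's (implicit) derivation: the paper presents the corollary as an immediate consequence of Theorem~\ref{thm:subavoid}, obtained exactly as you describe by writing $\delta_q^D(\F_{q^m}^n,q^{\ell k},\ell,d) = 1 - |\mF|/\dstirling{ns}{k}_{q^{\ell}}$ and substituting the two bounds on $|\mF|$, with the denominator of the lower bound factored as $\dstirling{ns-1}{k-1}_{q^{\ell}}\bar\Theta$. Your remarks on the direction-reversal of the inequalities and on the degenerate case $k=1$ are accurate and complete the bookkeeping.
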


\section{Asymptotic Results}
\label{sec:asy}
This section is devoted to general asymptotic results on the density function of codes in $\F_{q^m}^n$ endowed with a translation-invariant metric $D$. More precisely, we are interested in the following question: What is the probability that a uniformly random code in $(\F_{q^m}^n,D)$ of a given cardinality has minimum distance (at least) $d$, as either $q$, $n$, or in the (sub)linear case $s$ or $\ell$ tend to infinity, where $s := [\F_{q^m} \colon \F_{q^{\ell}}]$. Our results indicate that the answer to this question highly depends on the volume of balls in $(\F_{q^m}^n, D)$. We treat the nonlinear and (sub)linear case separately again.

\subsection{The Nonlinear Case}\label{sec:nonlinear-asym}

From Corollary~\ref{cor:nonlindensity} we obtain asymptotic bounds on the density (as $q \to +\infty$ or $n \to +\infty$) of (possibly) nonlinear codes with minimum distance bounded from below.

\begin{theorem} \label{thm:nonLinearLimq}
Let $2 \le n$ and let $1 \le d \le n$ be integers. Consider the sequence~$\smash{(\mathbb{F}_{q}^{n})_{q \in Q}}$ and a sequence of integers $\smash{(S_q)_{q \in Q}}$ with $2 \le S_q$ for all $q \in Q$. 
\begin{itemize}
    \item[(i)] We have
\begin{align*} 
  \max \left\{ \liminf_{q \to +\infty} \left( 1-\frac{\textbf{v}_q^{D}(\F_{q}^n,d-1)S_{q}^{2}}{2q^{n} }\right),0 \right\} \leq   \liminf_{q \to +\infty}\delta_q^D(\F_{q}^n,S_{q},0,d). 
\end{align*}
\item[(ii)]

If $\textbf{v}_q^{D}(\F_{q}^n,d-1) \in \Omega\left(q^{n}S_{q}^{-2}\right) \, \text{as} \, \, q \to +\infty, \, \text{then}$ 
\begin{align*} 
      \limsup_{q \to +\infty} \delta_q^D(\F_{q}^n,S_{q},0,d)\leq \limsup_{q \to +\infty}\left(\frac{1}{1+\frac{\textbf{v}_q^{D}(\F_q^n,d-1)S_q^2}{2q^{n}}}\right) <1.
\end{align*}

\end{itemize}
In particular,
\begin{align*}
  \lim_{q \to +\infty}\delta_q^D(\F_{q}^n,S_q,0,d) =   \begin{cases}
 1 \quad &\textnormal{ if $\textbf{v}_q^{D}(\F_{q}^n,d-1) \in o(q^{n}S_{q}^{-2})$ as $q \to +\infty$,}  \\
    0 \quad &\textnormal{ if $\textbf{v}_q^{D}(\F_{q}^n,d-1) \in \omega(q^{n}S_{q}^{-2})$ as $q \to +\infty$.} 
    \end{cases}
\end{align*}
\end{theorem}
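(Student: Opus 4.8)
The plan is to derive everything from Corollary~\ref{cor:nonlindensity}, which sandwiches $\delta_q^D(\F_q^n,S_q,0,d)$ between two explicit expressions, and then carefully take $\liminf$/$\limsup$ as $q\to+\infty$. Write $v_q := \textbf{v}_q^{D}(\F_q^n,d-1)$ and $S=S_q$ throughout. The lower bound in the corollary is $1 - \frac{(v_q-1)S(S-1)}{2(q^n-1)}$. Since $\frac{(v_q-1)S(S-1)}{q^n-1} = \frac{v_q S^2}{q^n}\cdot\frac{(1-1/v_q)(1-1/S)}{1-q^{-n}}$ and the last factor tends to $1$ (using $2\le S_q$, and $v_q\ge 1$; if $v_q=1$ the bound is trivially $1$), the lower bound differs from $1-\frac{v_q S^2}{2q^n}$ by a multiplicative error going to $1$. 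Taking $\liminf$ and using that $\delta_q^D\le 1$ always, we get part (i): the $\liminf$ of the density is at least $\max\{\liminf(1-\frac{v_q S^2}{2q^n}),0\}$.

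For part (ii), I would start from the upper bound $1-\frac{(v_q-1)S(S-1)}{2\Theta(q^n-1)}$ in the corollary, where $\Theta = 1 + \beta^1\frac{S-2}{q^n-2} + \beta^0\frac{(S-2)(S-3)}{(q^n-2)(q^n-3)}$ with $\beta^0,\beta^1$ as in Theorem~\ref{thm:nonlinbound}. The key algebraic observation is that $\frac{(v_q-1)S(S-1)}{2\Theta(q^n-1)}$ simplifies asymptotically: $\beta^0 = \frac12 q^n(v_q-1) - 2v_q + 3$, so $\beta^0\frac{(S-2)(S-3)}{(q^n-2)(q^n-3)} \sim \frac{(v_q-1)S^2}{2q^n}$ up to lower-order terms, while $\beta^1\frac{S-2}{q^n-2} = \frac{(2v_q-4)(S-2)}{q^n-2}$ is dominated by (a constant times) $\frac{v_q S}{q^n}$. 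Hence under the hypothesis $v_q\in\Omega(q^n S^{-2})$ — equivalently $\frac{v_q S^2}{q^n}$ bounded away from $0$ — one shows $\Theta = 1 + \frac{v_q S^2}{2q^n}(1+o(1))$, so the full fraction $\frac{(v_q-1)S(S-1)}{2\Theta(q^n-1)}$ is asymptotically $\frac{v_q S^2/(2q^n)}{1+v_q S^2/(2q^n)}(1+o(1))$, giving $\delta_q^D \le 1 - \frac{v_q S^2/(2q^n)}{1+v_q S^2/(2q^n)} + o(1) = \frac{1}{1+v_q S^2/(2q^n)} + o(1)$. Taking $\limsup$ and noting $\limsup\frac{1}{1+v_q S^2/(2q^n)} < 1$ precisely because $\liminf \frac{v_q S^2}{q^n} > 0$ by the $\Omega$-hypothesis yields (ii).

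Finally, the dichotomy follows by combining (i) and (ii). If $v_q\in o(q^n S^{-2})$, then $\frac{v_q S^2}{2q^n}\to 0$, so the lower bound in (i) tends to $1$ and the squeeze with $\delta_q^D\le 1$ forces the limit to be $1$. If $v_q\in\omega(q^n S^{-2})$, then in particular $v_q\in\Omega(q^n S^{-2})$, so (ii) applies, and moreover $\frac{v_q S^2}{2q^n}\to+\infty$, so $\frac{1}{1+v_q S^2/(2q^n)}\to 0$, forcing the limit to be $0$.

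I expect the only real obstacle to be the bookkeeping in part (ii): one must verify that the cross term $\beta^1\frac{S-2}{q^n-2}$ and the stray constants $-2v_q+3$ inside $\beta^0$ and the $-1$'s and $-2$'s inside the binomial-ratio identity~\eqref{eq:binomi} are genuinely negligible compared to $1+\frac{v_q S^2}{2q^n}$. This is where one uses $1\le d\le n$ (so $v_q\le q^n$), $2\le S_q$, and the $\Omega$-hypothesis, but it is purely a routine estimate once the dominant-term identification above is made; there is no conceptual difficulty, only care with the error terms.
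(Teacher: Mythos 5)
Your proposal is correct and follows essentially the same route as the paper: both parts are read off from Corollary~\ref{cor:nonlindensity} after observing that $\Theta_q \sim 1+\textbf{v}_q^{D}(\F_q^n,d-1)S_q^2/(2q^n)$, and the final dichotomy is obtained by combining (i) and (ii) exactly as you describe. One small slip: in part (i) the maximum with $0$ comes from the trivial bound $0 \le \delta_q^D(\F_q^n,S_q,0,d)$, not from $\delta_q^D(\F_q^n,S_q,0,d)\le 1$.
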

\begin{proof}
The first statement of the theorem is an easy consequence of Corollary~\ref{cor:nonlindensity} and the fact that we always have $0 \le \delta_q^D(\F_q^n, S_q, 0, d)$. 
For the second statement, as in Theorem~\ref{thm:nonlinbound} we let
\begin{align*}
    \beta^0_q &= \frac{1}{2}q^n (\textbf{v}_q^{D}(\F_q^n,d-1)-1) - 2\textbf{v}_q^{D}(\F_q^n,d-1) +3, \\
    \beta^1_q &= 2\textbf{v}_q^{D}(\F_q^n,d-1)-4, \\
    \Theta_q &= 1 + \beta^1_q  \frac{S_q-2}{q^n-2} \, + \, \beta^0_q  \frac{(S_q-2)(S_q-3)}{\left(q^n-2\right)\left(q^n-3\right)}.
\end{align*}
It is easy to check that we have 
\begin{align*} 
\Theta_q \sim 1+ \frac{\textbf{v}_q^{D}(\F_q^n,d-1) S_q^2}{2q^n} \quad \textnormal{ as $q \to +\infty$.}
\end{align*}
Therefore,
\begin{align}
    \limsup_{q \to +\infty}\left(1- \frac{(\textbf{v}_q^{D}(\F_q^n,d-1)-1)S_q(S_q-1)}{2\Theta_q(q^n-1)}\right) &=  \limsup_{q \to +\infty}\left(1- \frac{\textbf{v}_q^{D}(\F_q^n,d-1)S_q^2}{2q^n+ \textbf{v}_q^{D}(\F_q^n,d-1)S_q^2}\right).    \label{eq:nonlin3}
\end{align}
In order to prove that the asymptotic upper bound in the theorem is smaller than 1, note that $\textbf{v}_q^{D}(\F_{q}^n,d-1) \in \Omega\left(q^{n}S_{q}^{-2}\right) \, \text{as} \, \, q \to +\infty$ means that
\begin{align*} 
    0 < \liminf_{q \to +\infty} \frac{\textbf{v}_q^{D}(\F_q^n,d-1)S_q^2}{q^n}.
\end{align*}
In particular, rewriting~\eqref{eq:nonlin3} gives
\begin{align*}
    \limsup_{q \to +\infty} \left(1-\frac{ \textbf{v}_q^{D}(\F_q^n,d-1)S_q^2}{2q^n+\textbf{v}_q^{D}(\F_q^n,d-1)S_q^2}\right) =  \limsup_{q \to +\infty} \left(\frac{1}{1+\frac{\textbf{v}_q^{D}(\F_q^n,d-1)S_q^2}{2q^{n}}}\right)< 1,
\end{align*}
concluding the proof of the theorem.
\end{proof}
We now establish the analogous result to Theorem~\ref{thm:nonLinearLimq} when $n$ tends to infinity. 
The proof of the following statement is an easy alteration of the proof of Theorem~\ref{thm:nonLinearLimq} and hence we omit it.

\begin{theorem} \label{thm:nonLinearLimn}
Let $q \in Q$ be a prime power and $d$ be an integer such that $1 \le d < +\infty$. Consider the sequence $(\mathbb{F}_{q}^{n})_{n \geq 1}$ and a sequence of integers $(S_n)_{n \geq 1}$ with $2 \le S_n$ for all $1 \le n$. 
\begin{itemize}
    \item[(i)] We have
\begin{align*} 
  \max \left\{ \liminf_{n \to +\infty} \left( 1-\frac{\textbf{v}_q^{D}(\F_{q}^n,d-1)S_{n}^{2}}{2q^{n} }\right),0 \right\} \leq   \liminf_{n \to +\infty}\delta_q^D(\F_{q}^n,S_{n},0,d). 
\end{align*}
\item[(ii)]
If $\textbf{v}_q^{D}(\F_{q}^n,d-1) \in \Omega\left(q^{n}S_{n}^{-2}\right) \, \text{as} \, \, n \to +\infty, \, \text{then}$ 
\begin{align*} 
      \limsup_{n\to +\infty} \delta_q^D(\F_{q}^n,S_{n},0,d)\leq \limsup_{n \to +\infty}\left(\frac{1}{1+\frac{\textbf{v}_q^{D}(\F_q^n,d-1)S_q^2}{2q^{n}}}\right) <1.
\end{align*}
\end{itemize}
In particular,
\begin{align*}
  \lim_{n \to +\infty}\delta_q^D(\F_{q}^n,S_n,0,d) =   \begin{cases}
 1 \quad &\textnormal{ if $\textbf{v}_q^{D}(\F_{q}^n,d-1) \in o(q^{n}S_{n}^{-2})$ as $n \to +\infty$,}  \\
    0 \quad &\textnormal{ if $\textbf{v}_q^{D}(\F_{q}^n,d-1) \in \omega(q^{n}S_{n}^{-2})$ as $n \to +\infty$.} 
    \end{cases}
\end{align*}
\end{theorem}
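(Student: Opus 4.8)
The plan is to mirror the proof of Theorem~\ref{thm:nonLinearLimq} verbatim, replacing $q \to +\infty$ by $n \to +\infty$ and the sequence $(S_q)$ by $(S_n)$ throughout, while being careful that the two bounds of Corollary~\ref{cor:nonlindensity} are valid for \emph{every} admissible $(q,n,d,S)$ and hence can be specialised along the sequence $\smash{(\F_q^n)_{n \ge 1}}$. Concretely, for part~(i) I would start from the lower bound
\begin{align*}
1 - \frac{(\textbf{v}_q^{D}(\F_q^n,d-1)-1)S_n(S_n-1)}{2(q^n-1)} \le \delta_q^D(\F_q^n,S_n,0,d),
\end{align*}
observe that the left-hand side is $\ge 1 - \textbf{v}_q^{D}(\F_q^n,d-1)S_n^2/(2q^n)$ for all large $n$ (since $(S_n-1)/S_n < 1$ and $q^n-1 < q^n$ pull the fraction in the right direction up to a harmless $1+o(1)$ factor, or one can simply note the coarser inequality $1-x \le 1-x'$ when $x' \le x$), take $\liminf_{n\to+\infty}$ on both sides, and combine with the trivial bound $0 \le \delta_q^D(\F_q^n,S_n,0,d)$ to get the maximum on the left. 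This is purely formal.

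For part~(ii) I would reproduce the asymptotic analysis of $\Theta$: defining $\beta^0_n, \beta^1_n, \Theta_n$ exactly as in Theorem~\ref{thm:nonlinbound} but with $S_q$ replaced by $S_n$, one checks the asymptotic equivalence
\begin{align*}
\Theta_n \sim 1 + \frac{\textbf{v}_q^{D}(\F_q^n,d-1)S_n^2}{2q^n} \quad \text{as } n \to +\infty.
\end{align*}
Here $q$ is a fixed constant, so $q^n \to +\infty$, and the dominant term of $\beta^0_n$ is $\tfrac12 q^n \textbf{v}_q^{D}(\F_q^n,d-1)$ while the $(S_n-2)(S_n-3)/((q^n-2)(q^n-3)) \sim S_n^2/q^{2n}$ factor kills one power of $q^n$; the $\beta^1_n$ term is asymptotically negligible compared to the $\beta^0_n$ term whenever the $\beta^0_n$ term itself does not vanish, and when everything vanishes the equivalence $\Theta_n \sim 1$ still holds. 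Plugging this into the upper bound of Corollary~\ref{cor:nonlindensity} and simplifying (the $-1$'s and the $q^n-1$ all wash out in the limit) gives
\begin{align*}
\limsup_{n \to +\infty} \delta_q^D(\F_q^n,S_n,0,d) \le \limsup_{n \to +\infty}\left(1 - \frac{\textbf{v}_q^{D}(\F_q^n,d-1)S_n^2}{2q^n + \textbf{v}_q^{D}(\F_q^n,d-1)S_n^2}\right) = \limsup_{n \to +\infty}\left(\frac{1}{1+\frac{\textbf{v}_q^{D}(\F_q^n,d-1)S_n^2}{2q^n}}\right),
\end{align*}
and the hypothesis $\textbf{v}_q^{D}(\F_q^n,d-1) \in \Omega(q^n S_n^{-2})$ means exactly $0 < \liminf_{n\to+\infty} \textbf{v}_q^{D}(\F_q^n,d-1)S_n^2/q^n$, which forces this last $\limsup$ to be strictly below $1$.

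Finally the dichotomy follows by specialising: if $\textbf{v}_q^{D}(\F_q^n,d-1) \in o(q^n S_n^{-2})$ then the lower bound in part~(i) tends to $1$, so the limit is $1$; if $\textbf{v}_q^{D}(\F_q^n,d-1) \in \omega(q^n S_n^{-2})$ then $\textbf{v}_q^{D}(\F_q^n,d-1)S_n^2/q^n \to +\infty$, so the upper bound expression $1/(1+\textbf{v}_q^{D}(\F_q^n,d-1)S_n^2/(2q^n)) \to 0$, and with the trivial lower bound $0$ the limit is $0$. The only subtle point, and the reason the authors say it is an ``easy alteration'' rather than truly identical, is making sure the asymptotic-equivalence step for $\Theta_n$ is robust: when $q$ is fixed, $\textbf{v}_q^{D}(\F_q^n,d-1)$ can in principle be as large as $q^{n}$ (the whole space, e.g.\ $d-1 \ge n$ in the Hamming metric), so one should confirm that the estimate $\beta^0_n (S_n-2)(S_n-3)/((q^n-2)(q^n-3)) \sim \textbf{v}_q^{D}(\F_q^n,d-1)S_n^2/(2q^n)$ holds uniformly regardless of how fast $\textbf{v}_q^{D}$ or $S_n$ grows, which it does because every factor is controlled by an exact ratio; this is the one place to be slightly careful, but it is not a genuine obstacle.
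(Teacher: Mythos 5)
Your proposal is correct and is exactly the argument the paper intends: the authors omit the proof of this theorem, stating it is an easy alteration of the proof of Theorem~\ref{thm:nonLinearLimq}, and you carry out precisely that alteration (specialising Corollary~\ref{cor:nonlindensity} along $n$, re-deriving $\Theta_n \sim 1+\textbf{v}_q^{D}(\F_q^n,d-1)S_n^2/(2q^n)$, and reading off the dichotomy). Your closing remark about robustness of the $\Theta_n$ estimate is a reasonable sanity check but raises no issue beyond what is already implicit in the paper's own $q\to+\infty$ argument.
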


As we applied the bounds given in Corollary~\ref{cor:nonlindensity} in a very general way, without having any knowledge about $\smash{\textbf{v}_{q}^{D}(\F_{q}^n,r)}$, the requirements on the asymptotic behavior on the cardinalities in Theorem~\ref{thm:nonLinearLimq} and Theorem~\ref{thm:nonLinearLimn} for certain codes to be dense or sparse are the same, even though we send different parameters to infinity. This is something that we will also observe when looking at (sub)linear codes in the next subsection.

\begin{remark} \label{rem:gilbvarsh}
In~\cite{gilbert1952comparison,varshamov1957estimate} Gilbert and Varshamov gave a lower bound that shows the existence of codes of sufficiently large cardinality and minimum distance. More precisely, it says that there exists a code $\mC\subseteq \F_{q}^n$ of minimum distance $d$ with
$$\frac{q^{n}}{\textbf{v}_q^{D}(\mathbb{F}_{q}^n,d-1)} \le |\mC|.$$
It is an immediate consequence of Theorem~\ref{thm:nonLinearLimq} and Theorem~\ref{thm:nonLinearLimn} that, while such codes exist, the probability that a uniformly random (possibly nonlinear) code, whose cardinality is close to the Gilbert-Varshamov bound, has minimum distance at least $d$, goes to 0 both as $q \to +\infty$ and $n \to +\infty$.
\end{remark}

\subsection{The (Sub)Linear Case} \label{sec:(Sub)linear_asym}
From Corollary~\ref{cor:subbound} we obtain results for the asymptotic density of (sub)linear codes with minimum distance bounded from below as one of the parameters $q,n,s$ or $\ell$ tends to infinity and the other three parameters are treated as constants.
We will repeatedly use some asymptotic estimates of the $q$-binomial coefficient. One of these estimates involves the quantity
\begin{align*}
    \pi(q) := \prod_{i=1}^{\infty} \left( \frac{q^{i}}{q^{i}-1} \right).
\end{align*}
Note that the infinite product $\pi(q)$ is closely linked to the Euler function $\phi$; see e.g.~\cite[Section 14]{apostol2013introduction}. The Euler function $\phi:(-1,1) \to \R$ is defined as
\begin{equation*} 
\phi  :  x \mapsto \prod_{i=1}^{\infty}(1-x^i).
\end{equation*}
We have $\pi(q)=1/\phi(1/q)$ for all $q \in Q$.
We will also need the following asymptotic estimates.
\begin{lemma} \label{lem:asymptotics}
Let $0 \le b \le a$ be integers and let $k  :  \mathbb{N} \rightarrow \mathbb{N}$ and $\tilde{k}  :  \mathbb{N} \rightarrow \mathbb{N}$ be linear functions such that $\tilde{k}(n) \le k(n)$ for all $n \in \mathbb{N}$. We have 
\begin{enumerate}[label=(\alph*)]\setlength\itemsep{0cm}
\item
\label{lem:asymptoticss}
   $ {\dstirling{k(n)}{\tilde{k}(n)}_{q^{\ell}}} \sim q^{\ell \tilde{k}(n)(k(n)-\tilde{k}(n))}\pi(q^{\ell})$ \quad \textnormal{as $n \to +\infty$.}

\item \label{lem:asymptoticsq}
  $  {\dstirling{a}{b}_{q^{\ell}}} \sim q^{\ell b(a-b)}$ \quad \textnormal{both as $q \to +\infty$, (respectively $\ell \to +\infty$.)}
\end{enumerate}
\end{lemma}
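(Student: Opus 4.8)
The plan is to prove both estimates directly from the product definition of the $q$-binomial coefficient, namely
\[
\dstirling{a}{b}_{Q} = \prod_{i=0}^{b-1} \frac{Q^{a}-Q^{i}}{Q^{b}-Q^{i}},
\]
where I will write $Q=q^{\ell}$ throughout to lighten notation. The basic observation is that each factor can be rewritten as
\[
\frac{Q^{a}-Q^{i}}{Q^{b}-Q^{i}} = Q^{a-b}\cdot \frac{1-Q^{i-a}}{1-Q^{i-b}},
\]
so that
\[
\dstirling{a}{b}_{Q} = Q^{b(a-b)} \prod_{i=0}^{b-1} \frac{1-Q^{i-a}}{1-Q^{i-b}} = Q^{b(a-b)}\cdot \frac{\prod_{j=a-b+1}^{a} (1-Q^{-j})}{\prod_{j=1}^{b}(1-Q^{-j})},
\]
after reindexing $j=a-i$ in the numerator and $j=b-i$ in the denominator. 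This single identity is the engine for both parts.

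For part~\ref{lem:asymptoticsq}, $a$ and $b$ are fixed integers, so both products above are finite products of fixed length with every factor of the form $1-Q^{-j}$ with $j\ge 1$; hence each factor tends to $1$ as $Q\to+\infty$, i.e.\ as $q\to+\infty$ or as $\ell\to+\infty$. Therefore the ratio of products tends to $1$ and $\dstirling{a}{b}_{Q}\sim Q^{b(a-b)} = q^{\ell b(a-b)}$, which is exactly the claim. For part~\ref{lem:asymptoticss}, I apply the identity with $a=k(n)$, $b=\tilde k(n)$ (so $a-b=k(n)-\tilde k(n)\ge 0$), giving
\[
\dstirling{k(n)}{\tilde k(n)}_{Q} = Q^{\tilde k(n)(k(n)-\tilde k(n))}\cdot \frac{\prod_{j=k(n)-\tilde k(n)+1}^{k(n)} (1-Q^{-j})}{\prod_{j=1}^{\tilde k(n)}(1-Q^{-j})}.
\]
As $n\to+\infty$ the denominator $\prod_{j=1}^{\tilde k(n)}(1-Q^{-j})$ converges to $\prod_{j=1}^{\infty}(1-Q^{-j}) = \phi(1/Q) = 1/\pi(Q)$ (this is where $\tilde k$ being a genuine linear function with $\tilde k(n)\to+\infty$, rather than constant, is used, together with $Q=q^{\ell}>1$ fixed so that the tail is absolutely convergent). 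The numerator is a product of $\tilde k(n)$ factors each of the form $1-Q^{-j}$ with $j\ge k(n)-\tilde k(n)+1\ge 1$; I bound it between $1$ and $1$ from above by $1$, and from below I estimate $\prod_{j\ge 1}(1-Q^{-j}) \le \prod_{j=k(n)-\tilde k(n)+1}^{k(n)}(1-Q^{-j})\le 1$, and since the full infinite tail product converges to a positive limit, the partial product here converges to $1$ as $n\to+\infty$ (its distance from $1$ is controlled by a convergent tail $\sum_{j\ge 1}Q^{-j}$). Combining, the ratio of the two products tends to $1\cdot \pi(Q)=\pi(q^{\ell})$, which yields $\dstirling{k(n)}{\tilde k(n)}_{Q}\sim q^{\ell \tilde k(n)(k(n)-\tilde k(n))}\pi(q^{\ell})$.

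The only mildly delicate point — and the part I would write out with care — is the convergence of the two partial products in part~\ref{lem:asymptoticss}: one must check that $\tilde k(n)\to+\infty$ (guaranteed since $\tilde k$ is linear and $\N$-valued with $\tilde k(n)\le k(n)$, and the genuinely interesting case has $\tilde k$ nonconstant; if $\tilde k$ is constant the statement degenerates and is even easier) and that the tail of $\sum_j Q^{-j}$ governs the error, using $1-x \ge e^{-cx}$ type bounds or simply $\prod(1-x_j)\ge 1-\sum x_j$ for the lower estimate. Everything else is the elementary algebraic manipulation of the product formula above, and I would present it compactly rather than belaboring the estimates.
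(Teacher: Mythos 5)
Your computation is correct and is the standard one; note that the paper gives no proof of this lemma at all, deferring to Section~6 of \cite{gruica2022common}, where the same product-formula manipulation is carried out, so there is no in-paper argument to contrast with. The reindexed factorization
$\dstirling{k(n)}{\tilde{k}(n)}_{Q} = Q^{\tilde{k}(n)(k(n)-\tilde{k}(n))}\,\prod_{j=k(n)-\tilde{k}(n)+1}^{k(n)}(1-Q^{-j})\,\Big/\,\prod_{j=1}^{\tilde{k}(n)}(1-Q^{-j})$
is right, part (b) is immediate from it, and the identification $1/\phi(1/Q)=\pi(Q)$ is the one the paper itself records.

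One precision point you should make explicit. You correctly flag that the denominator tends to $1/\pi(Q)$ only because $\tilde{k}(n)\to+\infty$, but the claim that the numerator tends to $1$ needs the symmetric condition that its starting index $k(n)-\tilde{k}(n)+1$ tends to $+\infty$: the deviation from $1$ is controlled by the tail $\sum_{j>k(n)-\tilde{k}(n)}Q^{-j}\le Q^{-(k(n)-\tilde{k}(n))}/(Q-1)$, which vanishes only if $k(n)-\tilde{k}(n)\to+\infty$. If $k-\tilde{k}$ stays bounded --- for instance $k=\tilde{k}$, which the stated hypotheses permit --- the numerator converges to a constant strictly less than $1$ and the asserted asymptotic is in fact false, since $\dstirling{k(n)}{k(n)}_{Q}=1\ne \pi(Q)\,Q^{0}$. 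So both $\tilde{k}(n)\to+\infty$ and $k(n)-\tilde{k}(n)\to+\infty$ are implicit hypotheses of the lemma (and are satisfied in all of the paper's applications); your proof should state the second alongside the first, and your aside that the constant-$\tilde{k}$ case is ``even easier'' should be dropped, since in that case the constant $\pi(q^{\ell})$ would have to be replaced by $1/\prod_{j=1}^{\tilde{k}}(1-Q^{-j})$. With that adjustment the proof is complete.
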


Similar asymptotic estimates as in Lemma~\ref{lem:asymptotics} have been proved in~\cite[Section 6]{gruica2022common}, and thus we omit the proofs.

\begin{theorem} \label{thm:sublimq}
    Let $1 \le n,\ell,s$ and $1 \le d < +\infty$ be integers. We fix $1 \le k \le ns$ and consider the sequence $\smash{(\F_{q^{m}}^n)_{q \in Q}}$.
\begin{itemize}
    \item[(i)] We have
\begin{align*} 
  \max\left\{ \liminf_{q \to +\infty} \left( 1-\frac{\textbf{v}_q^{D}(\F_{q^{m}}^n,d-1)}{q^{\ell(ns+1-k)}} \right),0 \right\} \leq   \liminf_{q \to +\infty}\delta_q^D(\F_{q^{m}}^n,q^{\ell k},\ell,d). 
\end{align*}
\item[(ii)]
If $\textbf{v}_q^{D}(\F_{q^{m}}^n,d-1) \in \Omega\left(q^{\ell(ns+1-k)}\right) \, \text{as} \, \, q \to +\infty, \, \text{then}$ 
\begin{align} \label{eq:asymupperBound_q}
      \limsup_{q \to +\infty} \delta_q^D(\F_{q^{m}}^n,q^{\ell k},\ell,d)\leq \limsup_{q \to +\infty} \left( \frac{1}{1 + {\textbf{v}_q^{D}(\F_{q^{m}}^n,d-1)}{q^{-\ell(ns+1-k)}}}\right) <1.
\end{align}
\end{itemize}
In particular,
\begin{align*}
  \lim_{q \to +\infty}\delta_q^D(\F_{q^{m}}^n,q^{\ell k},\ell,d)  = \begin{cases}
 1 \quad &\textnormal{ if $\textbf{v}_q^{D}(\F_{q^{m}}^n,d-1) \in o(q^{\ell(ns+1-k)})$ as $q \to +\infty$,}  \\
    0 \quad &\textnormal{ if $\textbf{v}_q^{D}(\F_{q^{m}}^n,d-1) \in \omega(q^{\ell(ns+1-k)})$ as $q \to +\infty$.} 
    \end{cases}
\end{align*}
\end{theorem}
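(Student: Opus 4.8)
The plan is to derive everything from Corollary~\ref{cor:subbound}, which sandwiches $\delta_q^D(\F_{q^m}^n,q^{\ell k},\ell,d)$ between two explicit expressions involving $q$-binomial coefficients; the only work is to compute the $q\to+\infty$ asymptotics of these expressions using Lemma~\ref{lem:asymptotics}\ref{lem:asymptoticsq}. First I would handle part~(i): the left-hand side of \eqref{eq:upperBoundsublim} is
\begin{align*}
1-\frac{(\textbf{v}_q^{D}(\F_{q^m}^n,d-1)-1)\dstirling{ns-1}{k-1}_{q^{\ell}}}{(q^{\ell}-1)\dstirling{ns}{k}_{q^{\ell}}},
\end{align*}
and by Lemma~\ref{lem:asymptotics}\ref{lem:asymptoticsq} we have $\dstirling{ns-1}{k-1}_{q^\ell}\sim q^{\ell(k-1)(ns-k)}$ and $\dstirling{ns}{k}_{q^\ell}\sim q^{\ell k(ns-k)}$, so the ratio $\dstirling{ns-1}{k-1}_{q^\ell}/((q^\ell-1)\dstirling{ns}{k}_{q^\ell})\sim q^{-\ell(ns-k)}q^{-\ell}=q^{-\ell(ns+1-k)}$; since also $\textbf{v}_q^{D}(\F_{q^m}^n,d-1)-1\sim\textbf{v}_q^{D}(\F_{q^m}^n,d-1)$ when the volume grows (and the statement is trivial otherwise), the lower bound is asymptotically $1-\textbf{v}_q^{D}(\F_{q^m}^n,d-1)q^{-\ell(ns+1-k)}$. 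Combining this with the trivial bound $\delta_q^D\ge 0$ and taking $\liminf$ gives~(i).

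Next I would treat part~(ii). The key step is to show that the quantity $\bar\Theta$ from Corollary~\ref{cor:subbound} satisfies
\begin{align*}
\bar\Theta=1+\dstirling{ns-1}{k-1}_{q^{\ell}}^{-1}\left(\frac{\textbf{v}_q^{D}(\F_{q^m}^n,d-1)-1}{q^{\ell}-1}-1\right)\dstirling{ns-2}{k-2}_{q^{\ell}}\sim 1+\frac{\textbf{v}_q^{D}(\F_{q^m}^n,d-1)}{q^{\ell(ns+1-k)}}\quad\text{as }q\to+\infty,
\end{align*}
using $\dstirling{ns-2}{k-2}_{q^\ell}/\dstirling{ns-1}{k-1}_{q^\ell}\sim q^{\ell(k-2)(ns-k)}/q^{\ell(k-1)(ns-k)}=q^{-\ell(ns-k)}$ together with the extra $q^{-\ell}$ from the $(q^\ell-1)^{-1}$ factor. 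Here the hypothesis $\textbf{v}_q^{D}(\F_{q^m}^n,d-1)\in\Omega(q^{\ell(ns+1-k)})$ is exactly what is needed to absorb the ``$-1$'' corrections and to guarantee $\bar\Theta$ does not tend to $1$. Plugging this into the upper bound of \eqref{eq:upperBoundsublim} and simplifying (the numerator of the subtracted fraction is asymptotic to $\textbf{v}_q^{D}(\F_{q^m}^n,d-1)q^{-\ell(ns-k)}$, mirroring the computation in Theorem~\ref{thm:sublimq} and, structurally, in Theorem~\ref{thm:nonLinearLimq}) yields the upper bound
\begin{align*}
\limsup_{q\to+\infty}\delta_q^D(\F_{q^m}^n,q^{\ell k},\ell,d)\le\limsup_{q\to+\infty}\left(\frac{1}{1+\textbf{v}_q^{D}(\F_{q^m}^n,d-1)q^{-\ell(ns+1-k)}}\right),
\end{align*}
and the $\Omega$-hypothesis forces $\liminf_{q\to+\infty}\textbf{v}_q^{D}(\F_{q^m}^n,d-1)q^{-\ell(ns+1-k)}>0$, so the right-hand side is strictly less than $1$.

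Finally, the ``in particular'' dichotomy follows formally from (i) and (ii): if $\textbf{v}_q^{D}(\F_{q^m}^n,d-1)\in o(q^{\ell(ns+1-k)})$ then the lower bound in~(i) tends to $1$ and, as a density, $\delta_q^D\le 1$, so the limit is $1$; if $\textbf{v}_q^{D}(\F_{q^m}^n,d-1)\in\omega(q^{\ell(ns+1-k)})$ then in particular the $\Omega$-hypothesis of~(ii) holds and $\textbf{v}_q^{D}(\F_{q^m}^n,d-1)q^{-\ell(ns+1-k)}\to+\infty$, so the upper bound in~(ii) tends to $0$. I expect the only mildly delicate point to be bookkeeping the lower-order corrections (the various $-1$'s and the $q^\ell-1$ versus $q^\ell$) when justifying the two $\sim$-estimates, but these are routine given Lemma~\ref{lem:asymptotics}\ref{lem:asymptoticsq}; the proof is otherwise a direct transcription of the argument already used for Theorem~\ref{thm:nonLinearLimq}.
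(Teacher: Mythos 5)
Your proposal is correct and follows essentially the same route as the paper: both parts are obtained by feeding the asymptotic estimates of Lemma~\ref{lem:asymptotics}(b) into the two bounds of Corollary~\ref{cor:subbound}, with the $\Omega$-hypothesis used exactly as you describe to keep the upper bound strictly below $1$, and the dichotomy then follows formally. The paper computes the asymptotics of the numerator and denominator of the upper bound as a whole rather than isolating $\bar\Theta$, but this is only a cosmetic difference from your presentation.
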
 
\begin{proof}
From Lemma~\ref{lem:asymptotics} we obtain
\begin{align*} \label{eq:asymlower_q}
    \frac{(\textbf{v}_q^{D}(\F_{q^{m}}^n,d-1)-1)\dstirling{ns-1}{k-1}_{q^{\ell}}}{(q^{\ell}-1)\dstirling{ns}{k}_{q^{\ell}}} \sim \frac{\textbf{v}_q^{D}(\F_{q^{m}}^n,d-1)}{q^{\ell(ns+1-k)}} \quad \text{as} \, \, q \to +\infty.
\end{align*}
Now (i) is an easy consequence of Corollary ~\ref{cor:subbound} and the fact that $0 \le \delta_q^D(\F_{q^{m}}^n,q^{\ell k},\ell,d)$.

For the second statement we consider the upper bound of Corollary~\ref{cor:subbound}. Together with Lemma~\ref{lem:asymptotics} we have 
\begin{align*}
    \left( \displaystyle \frac{\textbf{v}_q^{D}(\F_{q^{m}}^n,d-1)-1}{q^{\ell}-1} \right) \dstirling{ns-1}{k-1}_{q^{\ell}}^{2} \sim \textbf{v}_q^{D}(\F_{q^{m}}^n,d-1)q^{2\ell(k-1)(ns-k)-\ell}
\end{align*}
and 
\begin{align*}
\dstirling{ns}{k}_{q^{\ell}}\left( \dstirling{ns-1}{k-1}_{q^{\ell}}+\left(\displaystyle \frac{\textbf{v}_q^{D}(\F_{q^{m}}^n,d-1)-1}{q^{\ell}-1} -1 \right)\dstirling{ns-2}{k-2}_{q^{\ell}}\right) \\
\sim 
q^{\ell(2k-2)(ns-k)-\ell}(q^{\ell(ns-k+1)} + \textbf{v}_q^{D}(\F_{q^{m}}^n,d-1) )
\end{align*}
 as $q \to +\infty$. Hence we obtain
\begin{align*}
 \frac{\left( \displaystyle \frac{\textbf{v}_q^{D}(\F_{q^{m}}^n,d-1)-1}{q^{\ell}-1} \right) \dstirling{ns-1}{k-1}_{q^{\ell}}^{2}}{\dstirling{ns}{k}_{q^{\ell}}\left( \dstirling{ns-1}{k-1}_{q^{\ell}}+\left( \displaystyle\frac{\textbf{v}_q^{D}(\F_{q^{m}}^n,d-1)-1}{q^{\ell}-1} -1 \right)\dstirling{ns-2}{k-2}_{q^{\ell}}\right)} 
  \sim   \frac{\textbf{v}_q^{D}(\F_{q^{m}}^n,d-1)}{q^{\ell(ns-k+1)} + \textbf{v}_q^{D}(\F_{q^{m}}^n,d-1)},
\end{align*}
 as $q \to +\infty$.
Now, taking the limit superior as $q \to +\infty$ in the bound of ~\eqref{eq:upperBoundsublim} reads
\begin{align*} 
 \limsup_{q \to +\infty}\delta_q^D(\F_{q^m}^n,q^{\ell k},\ell,d) &\le  \limsup_{q \to +\infty} \left(1- \frac{\textbf{v}_q^{D}(\F_{q^{m}}^n,d-1)}{q^{\ell(ns-k+1)} + \textbf{v}_q^{D}(\F_{q^{m}}^n,d-1)} \right).
\end{align*}
Since $q^{\ell k} \in \Omega\left(\frac{q^{\ell(ns+1)}}{\textbf{v}_q^{D}(\F_{q^{m}}^n,d-1)}\right) \, \text{as} \, \, q \to +\infty$, we have
\begin{align*}
    \limsup_{q \to +\infty} \left(1- \frac{\textbf{v}_q^{D}(\F_{q^{m}}^n,d-1)}{q^{\ell(ns-k+1)} + \textbf{v}_q^{D}(\F_{q^{m}}^n,d-1)} \right) = \limsup_{q\to +\infty}\frac{1}{1+\textbf{v}_q^{D}(\F_{q^{m}}^n,d-1)q^{-\ell(ns+1-k)}} <1 ,
\end{align*}
yielding the statement in~\eqref{eq:asymupperBound_q}.
\end{proof} 
Note that the asymptotic estimates for the $q$-binomial coefficient given in Lemma~\ref{lem:asymptotics} are the same as $q \to +\infty$, respectively as $\ell \to +\infty$.
Hence we can give an analogous result to Theorem~\ref{thm:sublimq} but with $\ell$ tending to infinity.  

Further note that the parameters of the $q$-binomial coefficient involved in the upper and lower bound of $\delta_q^D(\F_{q^{m}}^n,q^{\ell k},\ell,d)$ (see Corollary~\ref{cor:subbound}) can be written as a linear function in either $n$ or in $s$ such that the assumptions of
Lemma~\ref{lem:asymptotics} are fulfilled. This gives us then two asymptotic results regarding $n \to +\infty$ (and $s \to +\infty$). Since the proofs of Theorem~\ref{thm:sublimell}, Theorem~\ref{thm:sublimn} and Theorem~\ref{thm:sublims} follow the same arguments as the proof of Theorem~\ref{thm:sublimq} we omit them.

\begin{theorem}\label{thm:sublimell}
    Let $1 \le n,s$ and $1 \le d < +\infty$ be integers and let $q \in Q$. We fix $1 \le k \le ns$ and consider the sequence $(\F_{q^{\ell s}}^n)_{\ell \geq 1}$.
\begin{itemize}
    \item[(i)] We have
\begin{align*} 
  \max\left\{ \liminf_{\ell \to +\infty} \left( 1-\frac{\textbf{v}_q^{D}(\F_{q^{\ell s}}^n,d-1)}{q^{\ell(ns+1-k)}} \right),0 \right\} \leq   \liminf_{\ell \to +\infty}\delta_q^D(\F_{q^{\ell s}}^n,q^{\ell k},\ell,d). 
\end{align*}
\item[(ii)]
If $\textbf{v}_q^{D}(\F_{q^{\ell s}}^n,d-1) \in \Omega\left(q^{\ell(ns+1-k)}\right) \, \text{as} \, \, \ell \to +\infty, \, \text{then}$ 
\begin{align*} 
      \limsup_{\ell \to +\infty} \delta_q^D(\F_{q^{\ell s}}^n,q^{\ell k},\ell,d)\leq \limsup_{\ell \to +\infty}\left(\frac{1}{1 + \textbf{v}_q^{D}(\F_{q^{\ell s}}^n,d-1)q^{-\ell(ns+1-k)}}\right) <1.
\end{align*}
\end{itemize}
In particular,
\begin{align*}
  \lim_{\ell \to +\infty}\delta_q^D(\F_{q^{\ell s}}^n,q^{\ell k},\ell,d) =   \begin{cases}
 1 \quad &\textnormal{ if $\textbf{v}_q^{D}(\F_{q^{\ell s}}^n,d-1) \in o(q^{\ell(ns+1-k)})$ as $\ell \to +\infty$,}  \\
    0 \quad &\textnormal{ if $\textbf{v}_q^{D}(\F_{q^{\ell s}}^n,d-1) \in \omega(q^{\ell(ns+1-k)})$ as $\ell \to +\infty$.} 
    \end{cases}
\end{align*}
\end{theorem}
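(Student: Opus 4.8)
The plan is to mirror the proof of Theorem~\ref{thm:sublimq} almost verbatim, exploiting the remark (made explicitly in the text just before the statement) that the asymptotic estimates in Lemma~\ref{lem:asymptotics}\ref{lem:asymptoticsq} are identical whether $q \to +\infty$ or $\ell \to +\infty$. Thus the same algebraic simplifications of the bounds in Corollary~\ref{cor:subbound} go through with $q$ fixed and $\ell$ as the running parameter. I would begin by recording, via Lemma~\ref{lem:asymptotics}\ref{lem:asymptoticsq} applied with base $q^\ell$ (note $\ell \to +\infty$ makes $q^\ell \to +\infty$, so the estimate $\dstirling{a}{b}_{q^\ell} \sim q^{\ell b(a-b)}$ is valid), the estimate
\begin{align*}
    \frac{(\textbf{v}_q^{D}(\F_{q^{\ell s}}^n,d-1)-1)\dstirling{ns-1}{k-1}_{q^{\ell}}}{(q^{\ell}-1)\dstirling{ns}{k}_{q^{\ell}}} \sim \frac{\textbf{v}_q^{D}(\F_{q^{\ell s}}^n,d-1)}{q^{\ell(ns+1-k)}} \quad \text{as } \ell \to +\infty,
\end{align*}
which combined with the lower bound of Corollary~\ref{cor:subbound} and the trivial inequality $0 \le \delta_q^D$ yields part (i).

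For part (ii) I would expand $\bar\Theta$ exactly as in the proof of Theorem~\ref{thm:sublimq}: using Lemma~\ref{lem:asymptotics}\ref{lem:asymptoticsq} on each of the three $q$-binomial coefficients (with base $q^\ell$), the numerator of the density-deficit term behaves like $\textbf{v}_q^{D}(\F_{q^{\ell s}}^n,d-1)\,q^{2\ell(k-1)(ns-k)-\ell}$, and the denominator like $q^{\ell(2k-2)(ns-k)-\ell}\bigl(q^{\ell(ns-k+1)} + \textbf{v}_q^{D}(\F_{q^{\ell s}}^n,d-1)\bigr)$, so their ratio is asymptotic to $\textbf{v}_q^{D}(\F_{q^{\ell s}}^n,d-1)\bigl(q^{\ell(ns-k+1)} + \textbf{v}_q^{D}(\F_{q^{\ell s}}^n,d-1)\bigr)^{-1}$ as $\ell \to +\infty$. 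Taking $\limsup$ in the upper bound of~\eqref{eq:upperBoundsublim} then gives
\begin{align*}
 \limsup_{\ell \to +\infty}\delta_q^D(\F_{q^{\ell s}}^n,q^{\ell k},\ell,d) &\le  \limsup_{\ell \to +\infty} \left(1- \frac{\textbf{v}_q^{D}(\F_{q^{\ell s}}^n,d-1)}{q^{\ell(ns-k+1)} + \textbf{v}_q^{D}(\F_{q^{\ell s}}^n,d-1)} \right),
\end{align*}
and the hypothesis $\textbf{v}_q^{D}(\F_{q^{\ell s}}^n,d-1) \in \Omega(q^{\ell(ns+1-k)})$, i.e. $0 < \liminf_{\ell\to+\infty}\textbf{v}_q^{D}(\F_{q^{\ell s}}^n,d-1)q^{-\ell(ns+1-k)}$, makes the right-hand side strictly less than $1$, after rewriting it as $\limsup_{\ell\to+\infty}\bigl(1+\textbf{v}_q^{D}(\F_{q^{\ell s}}^n,d-1)q^{-\ell(ns+1-k)}\bigr)^{-1}$.

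Finally the dichotomy in the ``in particular'' clause follows by feeding the two regimes into (i) and (ii): if $\textbf{v}_q^{D}(\F_{q^{\ell s}}^n,d-1) \in o(q^{\ell(ns+1-k)})$ then the $\liminf$ lower bound in (i) tends to $1$, forcing the limit to be $1$; if $\textbf{v}_q^{D}(\F_{q^{\ell s}}^n,d-1) \in \omega(q^{\ell(ns+1-k)})$ then $\textbf{v}_q^{D}(\F_{q^{\ell s}}^n,d-1)q^{-\ell(ns+1-k)} \to +\infty$, so the $\limsup$ upper bound in (ii) tends to $0$. The only point requiring a moment's care — and the closest thing to an obstacle — is checking that Lemma~\ref{lem:asymptotics}\ref{lem:asymptoticsq} genuinely applies with the running parameter being the exponent $\ell$ in the base $q^\ell$ rather than a standalone prime-power base; since $q$ is fixed and $q^\ell \to +\infty$ strictly monotonically, the estimate $\dstirling{a}{b}_{q^\ell}\sim q^{\ell b(a-b)}$ is exactly the $\ell\to+\infty$ case already recorded in Lemma~\ref{lem:asymptotics}, so no new work is needed and the proof is otherwise a direct transcription of the argument for Theorem~\ref{thm:sublimq}.
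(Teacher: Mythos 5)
Your proposal is correct and takes essentially the same approach the paper intends: the paper omits this proof precisely because it ``follows the same arguments as the proof of Theorem~\ref{thm:sublimq}'', relying on the observation that the estimate $\dstirling{a}{b}_{q^{\ell}}\sim q^{\ell b(a-b)}$ of Lemma~\ref{lem:asymptotics}\ref{lem:asymptoticsq} holds equally as $\ell \to +\infty$ with $q$ fixed. Your transcription of that argument, including the check that the $q$-binomial asymptotics remain valid when the running parameter is $\ell$ in the base $q^{\ell}$, is exactly the intended proof.
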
 
In the next theorem we consider the asymptotic density, where $n$ tends to infinity and the other parameters are treated as constants. 

\begin{theorem}\label{thm:sublimn}
    Let $q \in Q, 1 \le s, \ell$ and $1 \le d < +\infty $ be integers. Define a linear function $k : \mathbb{N} \rightarrow \mathbb{N}$ such that $(k(n))_{n \ge 1}$ describes a sequence of integers where $1 \le k(n) \le ns$ for all $1 \le n$ and consider the sequence $(\F_{q^{m}}^n)_{n \geq 1}$  . 
\begin{itemize}
    \item[(i)] We have
\begin{align*} 
  \max\left\{ \liminf_{n \to +\infty} \left( 1-\frac{\textbf{v}_q^{D}(\F_{q^{m}}^n,d-1)}{q^{\ell(ns+1-k)}} \right),0 \right\} \leq   \liminf_{n \to +\infty}\delta_q^D(\F_{q^{m}}^n,q^{\ell k},\ell,d). 
\end{align*}
\item[(ii)]
If $\textbf{v}_q^{D}(\F_{q^{m}}^n,d-1) \in \Omega\left(q^{\ell(ns+1-k)}\right) \, \text{as} \, \, n \to +\infty, \, \text{then}$ 
\begin{align*} 
      \limsup_{n \to +\infty} \delta_q^D(\F_{q^{m}}^n,q^{\ell k},\ell,d)\leq \limsup_{n \to +\infty}\left(\frac{1}{1 + \textbf{v}_q^{D}(\F_{q^{m}}^n,d-1)q^{-\ell(ns+1-k)}}\right) <1.
\end{align*}
\end{itemize}
In particular,
\begin{align*}
  \lim_{n \to +\infty}\delta_q^D(\F_{q^{m}}^n,q^{\ell k},\ell,d) =   \begin{cases}
 1 \quad &\textnormal{ if $\textbf{v}_q^{D}(\F_{q^{m}}^n,d-1) \in o(q^{\ell(ns+1-k)})$ as $n \to +\infty$,}  \\
    0 \quad &\textnormal{ if $\textbf{v}_q^{D}(\F_{q^{m}}^n,d-1) \in \omega(q^{\ell(ns+1-k)})$ as $n \to +\infty$.} 
    \end{cases}
\end{align*}
\end{theorem}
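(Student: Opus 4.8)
The plan is to follow the proof of Theorem~\ref{thm:sublimq} essentially verbatim, the sole change being that the estimate of Lemma~\ref{lem:asymptotics}(a)---the one valid as $n\to+\infty$, carrying the extra factor $\pi(q^\ell)$---is used in place of Lemma~\ref{lem:asymptotics}(b). I would start from the two-sided bound of Corollary~\ref{cor:subbound} applied to the sequence $(\F_{q^m}^n)_{n\ge1}$, with $q,s,\ell$ (hence $m=\ell s$) fixed and $k=k(n)$ linear in $n$ with $1\le k(n)\le ns$, so that each of $ns,\,ns-1,\,ns-2,\,k,\,k-1,\,k-2$ is a linear function of $n$.

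The key step is to feed these parameters into Lemma~\ref{lem:asymptotics}(a). Since $k$ is linear and $k(n)\le ns$, the parameter pairs $(ns-1,k-1)$, $(ns,k)$ and $(ns-2,k-2)$ satisfy the hypotheses of that estimate, and the common factor $\pi(q^\ell)$ cancels in every ratio of $q$-binomial coefficients that appears; combined with $q^\ell-1\sim q^\ell$ this yields that the lower bound of Corollary~\ref{cor:subbound} is asymptotically $1-\textbf{v}_q^D(\F_{q^m}^n,d-1)\,q^{-\ell(ns+1-k)}$ and that $\bar\Theta$ (as defined there) is asymptotically $1+\textbf{v}_q^D(\F_{q^m}^n,d-1)\,q^{-\ell(ns+1-k)}$, so that the upper bound of Corollary~\ref{cor:subbound} is asymptotically $\bigl(1+\textbf{v}_q^D(\F_{q^m}^n,d-1)\,q^{-\ell(ns+1-k)}\bigr)^{-1}$, all as $n\to+\infty$.

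From here the three assertions follow exactly as in Theorem~\ref{thm:sublimq}: part~(i) by combining the lower bound of Corollary~\ref{cor:subbound} with the trivial inequality $0\le\delta_q^D(\F_{q^m}^n,q^{\ell k},\ell,d)$ and taking $\liminf_{n\to+\infty}$; part~(ii) by taking $\limsup_{n\to+\infty}$ in the upper bound and noting that $\textbf{v}_q^D(\F_{q^m}^n,d-1)\in\Omega(q^{\ell(ns+1-k)})$ is precisely the statement $\liminf_{n\to+\infty}\textbf{v}_q^D(\F_{q^m}^n,d-1)\,q^{-\ell(ns+1-k)}>0$, which forces the limsup of the upper bound strictly below~$1$; and the ``in particular'' dichotomy by observing that in the $o$-regime the lower bound of~(i) tends to~$1$ while in the $\omega$-regime the upper bound of~(ii) tends to~$0$.

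The main obstacle is purely bookkeeping rather than conceptual: one must check that the triples of parameters occurring in the $q$-binomial coefficients genuinely meet the linearity-and-ordering requirements of Lemma~\ref{lem:asymptotics}(a), and one should dispose separately of the degenerate cases $k(n)\in\{1,ns\}$ (where certain $q$-binomials collapse, possibly forcing $\bar\Theta=1$) and of the uninteresting situation in which $\textbf{v}_q^D(\F_{q^m}^n,d-1)$ remains bounded in $n$; once these are handled, no idea beyond the proof of Theorem~\ref{thm:sublimq} is needed.
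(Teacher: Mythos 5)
Your plan is exactly the paper's intended proof: the authors explicitly omit the proof of this theorem with the remark that it ``follows the same arguments as the proof of Theorem~\ref{thm:sublimq}'', after noting that the parameters of the $q$-binomials in Corollary~\ref{cor:subbound} are linear in $n$ so that Lemma~\ref{lem:asymptotics}(a) applies and the factors $\pi(q^\ell)$ cancel in all ratios. So in structure and in every key step your proposal coincides with what the paper does.

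One step deserves more care than you give it: the simplification ``$q^{\ell}-1\sim q^{\ell}$''. This is legitimate in Theorem~\ref{thm:sublimq} because there $q\to+\infty$, but here $q$ and $\ell$ are \emph{fixed} and $n\to+\infty$, so $q^{\ell}-1$ is just a constant whose ratio to $q^{\ell}$ is the constant $q^{\ell}/(q^{\ell}-1)\neq 1$. Consequently what Lemma~\ref{lem:asymptotics}(a) actually gives is
\begin{align*}
\frac{(\textbf{v}_q^{D}(\F_{q^{m}}^n,d-1)-1)\dstirling{ns-1}{k-1}_{q^{\ell}}}{(q^{\ell}-1)\dstirling{ns}{k}_{q^{\ell}}} \sim \frac{q^{\ell}}{q^{\ell}-1}\cdot\frac{\textbf{v}_q^{D}(\F_{q^{m}}^n,d-1)}{q^{\ell(ns+1-k)}} \quad \text{as } n\to+\infty,
\end{align*}
and similarly for $\bar\Theta$. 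This discrepancy is harmless for part (ii) (replacing $q^{\ell}-1$ by $q^{\ell}$ only weakens the upper bound, which is the safe direction, and the conclusion $<1$ survives since the correction is a fixed constant $>1$) and for the ``in particular'' dichotomy (a constant factor does not change membership in $o$ or $\omega$). But for part (i) the lower bound you can actually extract from Corollary~\ref{cor:subbound} carries the extra factor $q^{\ell}/(q^{\ell}-1)$ in front of $\textbf{v}_q^{D}(\F_{q^{m}}^n,d-1)q^{-\ell(ns+1-k)}$; the inequality exactly as displayed in (i) is not delivered by this argument. This imprecision is inherited from the paper (compare Remark~\ref{rem:gilbvarsh2}, where the authors do keep track of the analogous constant $q^{\ell}/(q^{\ell}+1)$ precisely in the $n\to+\infty$ and $s\to+\infty$ regimes), so you should either carry the constant through part (i) or note explicitly that it does not affect (ii) or the dichotomy. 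Your remaining bookkeeping remarks (cancellation of $\pi(q^\ell)$, the degenerate cases $k(n)\in\{1,ns\}$, and the case of bounded ball volume where $\textbf{v}-1\not\sim\textbf{v}$) are all apt.
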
 
In the last theorem concerning the asymptotic density of $\F_{q^{\ell}}$-linear codes we let $s$, go to infinity. 
\begin{theorem}\label{thm:sublims}
    Let $q \in Q$, $1 \le n, \ell$ and $1 \le d < +\infty$ be integers. Define a linear function $k : \mathbb{N} \rightarrow \mathbb{N}$ such that $(k(s))_{s \ge 1}$ describes a sequence of integers with $1 \le k(s) \le ns$ for all $1 \le s$ and consider the sequence $(\F_{q^{\ell s}}^n)_{s \geq 1}$.
\begin{itemize}
    \item[(i)] We have
\begin{align*} 
  \max\left\{ \liminf_{s \to +\infty} \left( 1-\frac{\textbf{v}_q^{D}(\F_{q^{\ell s}}^n,d-1)}{q^{\ell(ns+1-k)}} \right),0 \right\} \leq   \liminf_{s \to +\infty}\delta_q^D(\F_{q^{\ell s}}^n,q^{\ell k},\ell,d). 
\end{align*}
\item[(ii)]
If $\textbf{v}_q^{D}(\F_{q^{\ell s}}^n,d-1) \in \Omega\left(q^{\ell(ns+1-k)}\right) \, \text{as} \, \, s \to +\infty, \, \text{then}$ 
\begin{align*}
      \limsup_{s \to +\infty} \delta_q^D(\F_{q^{\ell s}}^n,q^{\ell k},\ell,d)\leq \limsup_{s \to +\infty}\left(\frac{1}{1 + \textbf{v}_q^{D}(\F_{q^{\ell s}}^n,d-1)q^{-\ell(ns+1-k)}}\right) <1.
\end{align*}
\end{itemize}
In particular,
\begin{align*}
  \lim_{s \to +\infty}\delta_q^D(\F_{q^{\ell s}}^n,q^{\ell k},\ell,d) =   \begin{cases}
 1 \quad &\textnormal{ if $\textbf{v}_q^{D}(\F_{q^{\ell s}}^n,d-1) \in o(q^{\ell(ns+1-k)})$ as $s \to +\infty$,}  \\
    0 \quad &\textnormal{ if $\textbf{v}_q^{D}(\F_{q^{\ell s}}^n,d-1) \in \omega(q^{\ell(ns+1-k)})$ as $s \to +\infty$.} 
    \end{cases}
\end{align*}
\end{theorem}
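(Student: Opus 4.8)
The plan is to mirror the proof of Theorem~\ref{thm:sublimq} almost verbatim, the only changes being which parameter tends to infinity and which estimate for the $q$-binomial coefficient is invoked. Recall that Corollary~\ref{cor:subbound} gives, for the sequence $(\F_{q^{\ell s}}^n)_{s \ge 1}$ with $k = k(s)$,
\begin{align*}
1 - \frac{(\textbf{v}_q^{D}(\F_{q^{\ell s}}^n,d-1)-1)\dstirling{ns-1}{k-1}_{q^{\ell}}}{(q^{\ell}-1)\dstirling{ns}{k}_{q^{\ell}}}
\le \delta_q^D(\F_{q^{\ell s}}^n,q^{\ell k},\ell,d)
\le 1 - \frac{(\textbf{v}_q^{D}(\F_{q^{\ell s}}^n,d-1)-1)\dstirling{ns-1}{k-1}_{q^{\ell}}}{\bar\Theta(q^{\ell}-1)\dstirling{ns}{k}_{q^{\ell}}},
\end{align*}
with $\bar\Theta$ as in that corollary. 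The key observation is that when $s \to +\infty$ (with $q,n,\ell$ fixed), the quantities $ns$, $ns-1$, $ns-2$, $k = k(s)$, $k-1$, $k-2$ are all \emph{linear} functions of $s$ satisfying the hypotheses of Lemma~\ref{lem:asymptotics}\ref{lem:asymptoticss}. So I would apply part \ref{lem:asymptoticss} of Lemma~\ref{lem:asymptotics} (the $\pi(q^\ell)$ estimate, valid as the linear argument tends to infinity) to each of the three $q$-binomial coefficients $\dstirling{ns}{k}_{q^\ell}$, $\dstirling{ns-1}{k-1}_{q^\ell}$, $\dstirling{ns-2}{k-2}_{q^\ell}$.

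First I would verify that $ns - k$, $(ns-1)-(k-1) = ns-k$, and $(ns-2)-(k-2) = ns-k$ all coincide, so the exponents of $q^\ell$ produced by Lemma~\ref{lem:asymptotics} differ only in the $k$-dependent linear factor, exactly as in the proof of Theorem~\ref{thm:sublimq}; the surviving $\pi(q^\ell)$ factors cancel between numerator and denominator in each ratio. This gives, just as there,
\begin{align*}
\frac{(\textbf{v}_q^{D}(\F_{q^{\ell s}}^n,d-1)-1)\dstirling{ns-1}{k-1}_{q^{\ell}}}{(q^{\ell}-1)\dstirling{ns}{k}_{q^{\ell}}} \sim \frac{\textbf{v}_q^{D}(\F_{q^{\ell s}}^n,d-1)}{q^{\ell(ns+1-k)}} \quad \text{as } s \to +\infty,
\end{align*}
which together with $0 \le \delta_q^D(\F_{q^{\ell s}}^n,q^{\ell k},\ell,d)$ yields statement~(i). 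For statement~(ii) I would carry out the same algebra as in Theorem~\ref{thm:sublimq}: estimate numerator and denominator of the upper bound separately, obtain
\begin{align*}
\frac{(\textbf{v}_q^{D}(\F_{q^{\ell s}}^n,d-1)-1)\dstirling{ns-1}{k-1}_{q^{\ell}}}{\bar\Theta(q^{\ell}-1)\dstirling{ns}{k}_{q^{\ell}}} \sim \frac{\textbf{v}_q^{D}(\F_{q^{\ell s}}^n,d-1)}{q^{\ell(ns-k+1)} + \textbf{v}_q^{D}(\F_{q^{\ell s}}^n,d-1)} \quad \text{as } s \to +\infty,
\end{align*}
and then use that the $\Omega$-hypothesis, i.e.\ $0 < \liminf_{s\to+\infty} \textbf{v}_q^{D}(\F_{q^{\ell s}}^n,d-1)q^{-\ell(ns+1-k)}$, forces the $\limsup$ of the upper bound to be strictly below $1$. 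The final case distinction (limit $1$ when the volume is $o(q^{\ell(ns+1-k)})$, limit $0$ when it is $\omega(q^{\ell(ns+1-k)})$) is then immediate by sandwiching with the two bounds.

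The only genuinely new point to check — and the step I would be most careful about — is that Lemma~\ref{lem:asymptotics}\ref{lem:asymptoticss} is genuinely applicable here: the lemma is stated for $\dstirling{k(n)}{\tilde k(n)}_{q^\ell}$ with $k,\tilde k$ linear in a single variable tending to infinity, so I must confirm that with $q,n,\ell$ held fixed, the maps $s \mapsto ns$ and $s \mapsto k(s)$ (and their $\pm 1$, $\pm 2$ shifts) are exactly of that form, and that $q^\ell$ there is a constant $>1$ as required. This is routine but worth a sentence, because it is the precise reason the $s \to +\infty$ case behaves identically to the $q \to +\infty$ and $\ell \to +\infty$ cases rather than requiring the $\pi$-free estimate of part~\ref{lem:asymptoticsq}. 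Given that check, the proof is word-for-word the argument of Theorem~\ref{thm:sublimq} with $q$ replaced by $s$ throughout, which is why the authors are justified in omitting it.
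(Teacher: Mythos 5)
Your proposal is exactly the argument the paper intends: the authors explicitly omit the proof of Theorem~\ref{thm:sublims}, saying it ``follows the same arguments'' as Theorem~\ref{thm:sublimq}, with Lemma~\ref{lem:asymptotics}\ref{lem:asymptoticss} (applied to the linear-in-$s$ functions $s\mapsto ns$ and $s\mapsto k(s)$ and their shifts, all with common difference $ns-k$) replacing part~\ref{lem:asymptoticsq}, and with the $\pi(q^{\ell})$ factors cancelling in every ratio. Your closing paragraph is precisely the verification the paper gestures at, so in approach the two proofs coincide.

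One caveat, which applies equally to the paper's own omitted proof: the transfer from Theorem~\ref{thm:sublimq} is not quite word-for-word at the step where you claim
$\frac{(\textbf{v}_q^{D}(\F_{q^{\ell s}}^n,d-1)-1)\dstirling{ns-1}{k-1}_{q^{\ell}}}{(q^{\ell}-1)\dstirling{ns}{k}_{q^{\ell}}} \sim \textbf{v}_q^{D}(\F_{q^{\ell s}}^n,d-1)\,q^{-\ell(ns+1-k)}$.
In Theorem~\ref{thm:sublimq} this uses $q^{\ell}-1\sim q^{\ell}$ and $q^{\ell k}-1 \sim q^{\ell k}$, which hold as $q\to+\infty$ (or $\ell \to +\infty$) but not when $q$ and $\ell$ are fixed constants; likewise $\textbf{v}_q^{D}(\cdot,d-1)-1\sim \textbf{v}_q^{D}(\cdot,d-1)$ requires the ball volume to diverge with $s$. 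Using the exact identity $\dstirling{ns-1}{k-1}_{q^{\ell}}/\dstirling{ns}{k}_{q^{\ell}}=(q^{\ell k}-1)/(q^{\ell ns}-1)$, the ratio of the two sides of your ``$\sim$'' tends to $q^{\ell}/(q^{\ell}-1)>1$ in the typical case $k(s)\to+\infty$ and diverging volume. This is harmless for part (ii) (the bound actually obtained is \emph{stronger} than the one claimed) and for the $o/\omega$ dichotomy, since a bounded constant factor cannot change whether the relevant quantities tend to $0$ or $+\infty$; but it means the lower bound in part (i) is only established with an extra factor $q^{\ell}/(q^{\ell}-1)$ multiplying the volume term. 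If you want part (i) exactly as stated, you should either carry that constant explicitly or note that it can be absorbed only when $\textbf{v}_q^{D}(\F_{q^{\ell s}}^n,d-1)q^{-\ell(ns+1-k)}$ tends to $0$ or $+\infty$.
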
 

\begin{remark} \label{rem:gilbvarsh2}
Since the term ${q^{\ell ns}}/{q^{\ell(ns+1)}}$ tends to zero both as $q \to +\infty$, respectively $\ell \to +\infty$ Theorem~\ref{thm:sublimq} and Theorem~\ref{thm:sublimell} imply that a uniformly random $\F_{q^{\ell}}$-linear code in $\F_{q^m}^n$ of minimum distance at least $d$, for some integer $1 \le d < +\infty$, will attain the Gilbert-Varshamov bound with high probability for large $q$ and also for large linearity degree $\ell$.
However, when either $n$ or $s$ tend to infinity (and the other parameters are treated as constants), then it can easily be seen that the probability that $\F_{q^\ell}$-linear codes attain the Gilbert-Varshamov bound is upper bounded by $q^{\ell}/(q^{\ell}+1)$.
\end{remark}

\begin{theorem}
Let $(\F_{q^m}^n,D)$ be a metric space where $D$ is a translation-invariant metric on $\F_{q^m}^n$ and let $\ell$ be a divisor of $m$. The probability that a uniformly random $\F_{q^\ell}$-linear code in $(\F_{q^m}^n,D)$ satisfies the Gilbert-Varshamov bound is (infinitesimally close to) 1 if we let $q$ go to infinity.
\end{theorem}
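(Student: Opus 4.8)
The plan is to derive this as a direct consequence of Theorem~\ref{thm:sublimq}, by identifying the Gilbert--Varshamov regime with the hypothesis ``$\textbf{v}_q^{D}(\F_{q^m}^n,d-1) \in o(q^{\ell(ns+1-k)})$ as $q \to +\infty$'' of that theorem. Recall from Remark~\ref{rem:gilbvarsh} that a code meeting the Gilbert--Varshamov bound has cardinality at least $q^{\ell ns}/\textbf{v}_q^{D}(\F_{q^m}^n,d-1)$; since $\textbf{v}_q^{D}(\F_{q^m}^n,d-1) \ge 1$ this cardinality is at most $q^{\ell ns}$, so it makes sense to fix the $\F_{q^\ell}$-dimension $k$ of the code to be the largest integer with $q^{\ell k} \le q^{\ell ns}/\textbf{v}_q^{D}(\F_{q^m}^n,d-1)$, i.e. $k = ns - \lceil \log_{q^\ell} \textbf{v}_q^{D}(\F_{q^m}^n,d-1)\rceil$ (and note that since $d \le n$, for a genuine metric on a space of $q^m$-ary $n$-tuples the ball volume $\textbf{v}_q^{D}(\F_{q^m}^n,d-1)$ grows at most polynomially in a fixed power of $q$, so $k$ is well-defined and eventually constant or slowly varying).

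First I would make the interpretation precise: for the statement to be non-vacuous one works with a fixed $d$ and a sequence of dimensions $k = k(q)$ chosen so that $q^{\ell k}$ is (infinitesimally close to, from below) the Gilbert--Varshamov cardinality $q^{\ell ns}/\textbf{v}_q^{D}(\F_{q^m}^n,d-1)$. With this choice the target exponent in Theorem~\ref{thm:sublimq} becomes
\begin{align*}
q^{\ell(ns+1-k)} = q^{\ell} \cdot \frac{q^{\ell ns}}{q^{\ell k}} \ \gtrsim\ q^{\ell}\,\textbf{v}_q^{D}(\F_{q^m}^n,d-1),
\end{align*}
so that $\textbf{v}_q^{D}(\F_{q^m}^n,d-1) \in O\!\left(q^{-\ell} q^{\ell(ns+1-k)}\right) \subseteq o\!\left(q^{\ell(ns+1-k)}\right)$ as $q \to +\infty$, because the extra factor $q^{-\ell} \to 0$ (here we use $\ell \ge 1$). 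Then I would invoke part~(i) of Theorem~\ref{thm:sublimq}, which gives
\begin{align*}
\liminf_{q \to +\infty} \delta_q^D(\F_{q^m}^n, q^{\ell k}, \ell, d) \ \ge\ \max\left\{ \liminf_{q \to +\infty}\left(1 - \frac{\textbf{v}_q^{D}(\F_{q^m}^n,d-1)}{q^{\ell(ns+1-k)}}\right),\, 0\right\} = 1,
\end{align*}
so that the density tends to $1$. Since $\delta_q^D(\F_{q^m}^n,q^{\ell k},\ell,d)$ is exactly the proportion of $\F_{q^\ell}$-linear codes of that cardinality with minimum distance at least $d$, and all such codes meet the Gilbert--Varshamov bound by construction of $k$, this is precisely the assertion that a uniformly random $\F_{q^\ell}$-linear code meets the bound with probability approaching $1$.

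The main obstacle is purely bookkeeping rather than conceptual: one must be careful that rounding $\log_{q^\ell}\textbf{v}_q^{D}(\F_{q^m}^n,d-1)$ to an integer $k$ does not destroy the $o(\cdot)$ estimate, and one must phrase ``attains the Gilbert--Varshamov bound'' in a way that is compatible with the discreteness of admissible cardinalities $q^{\ell k}$ (hence the parenthetical ``infinitesimally close to'' in the statement). A clean way to handle this is to take $k$ as the floor above and observe that the ceiling in the exponent costs at most a single factor of $q^\ell$, which is harmless since we already have a spare factor $q^\ell$ from $ns+1-k$ versus $ns-k$; any sub-constant slack left over is absorbed into the $o$-notation. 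One should also note the edge case where $\textbf{v}_q^{D}(\F_{q^m}^n,d-1) \ge q^{\ell ns}$ (forcing $k \le 0$), which only happens for degenerate parameters and can be excluded at the outset. With these caveats dispatched, the theorem follows immediately from Theorem~\ref{thm:sublimq}(i).
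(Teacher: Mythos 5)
Your proof is correct and is essentially the argument the paper itself relies on: the theorem carries no separate proof and is justified by the preceding Remark~\ref{rem:gilbvarsh2}, which makes exactly your observation that at the Gilbert--Varshamov cardinality the ratio $\textbf{v}_q^{D}(\F_{q^m}^n,d-1)/q^{\ell(ns+1-k)}$ is at most $q^{-\ell}\to 0$, so Theorem~\ref{thm:sublimq}(i) applies. Your bookkeeping around the possible $q$-dependence of $k$ is in fact more careful than the paper's; if one wants to avoid invoking the fixed-$k$ asymptotic theorem with a varying $k$, the non-asymptotic lower bound of Corollary~\ref{cor:subbound} already gives density at least $1-O(q^{-\ell})$ uniformly in $k$.
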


\section{Application I: Codes in the Hamming Metric} \label{sec:Hamming}

In this section we determine the asymptotic density of codes in $\F_{q^{m}}^{n}$, equipped with the Hamming metric. In particular, we study the probability that a uniformly random code in $\F_{q^{m}}^{n}$ of a given cardinality achieves the Singleton bound with equality as one of the four parameters $q,n,s$ or $\ell$ tends to infinity (where $m=s\ell$) and the other three parameters are treated as constants.
 We start by introducing the required preliminaries.

\begin{definition}
Let $x \in \F_{q^m}^n$. The \emph{Hamming weight} of $x$ is $\wH(x)$ where $\wH$ is the function defined as
    \begin{equation*}
       \wH  :  \F_{q^m}^n \longrightarrow \mathbb{N}, \hspace{0.5em} x \mapsto |\{ i \in [n] :   x_{i}  \neq 0 \}|.
    \end{equation*}
Let $x,y \in \F_{q^m}^n$, then the \emph{Hamming distance} between $x$ and $y$ is $\dH(x,y) := \wH(x-y)$.
\end{definition}
Throughout this section, we are working in the metric space $(\F_{q^m}^n,\dH)$. As in Section~\ref{sec:prelim}, we denote nonlinear codes of cardinality $S$ in $\F_{q^m}^n$ and of Hamming distance at least $d$ as $[\F_{q^m}^n,S,0,d]^{H}$-codes. Similarly, we call $\F_{q^\ell}$-linear codes of dimension $k$ and minimum Hamming distance at least $d$ $[\F_{q^m}^n,q^{\ell k},\ell,d]^{H}$-codes.

It is a desirable property for a code to have large cardinality and large minimum distance at the same time. The trade-off between the cardinality and the minimum distance of a Hamming-metric code is captured by the following famous result. 

\begin{theorem}[Singleton bound; \text{see \cite[Theorem 5.4]{delsarte1978bilinear}}] \label{thm:singletonlikeHamming}
Let $\mC \subseteq \mathbb{F}_{q^{m}}^{n}$ be a $[\F_{q^m}^n,q^{\ell k},\ell,d]^{H}$-code. We have $q^{\ell k} \le q^{m(n-d+1)}.$
\end{theorem}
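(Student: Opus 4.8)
The plan is to prove the Singleton bound for $\F_{q^\ell}$-linear Hamming-metric codes by the classical puncturing/shortening argument, adapted to the subfield-linear setting, rather than invoking Delsarte's general result directly. Let $\mC \subseteq \F_{q^m}^n$ be a nonzero $\F_{q^\ell}$-linear code with minimum Hamming distance at least $d$, so $|\mC| = q^{\ell k}$ for some $k$; the goal is the inequality $q^{\ell k} \le q^{m(n-d+1)}$, equivalently $\ell k \le m(n-d+1)$.

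First I would fix a set $I \subseteq [n]$ of $d-1$ coordinates and consider the projection $\pi_I \colon \F_{q^m}^n \to \F_{q^m}^{n-d+1}$ that deletes the coordinates indexed by $I$. This map is $\F_{q^\ell}$-linear. The key claim is that $\pi_I$ is injective on $\mC$: if $x, y \in \mC$ with $x \ne y$ agree outside $I$, then $x - y$ is a nonzero codeword supported on the $d-1$ coordinates in $I$, hence $\wH(x-y) \le d-1 < d$, contradicting the minimum distance assumption. Here I use that $\mC$ is closed under subtraction, which holds since it is $\F_{q^\ell}$-linear (indeed only additivity is needed). Injectivity gives $|\mC| \le |\F_{q^m}^{n-d+1}| = q^{m(n-d+1)}$, which is exactly $q^{\ell k} \le q^{m(n-d+1)}$.

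The step requiring the most care is simply making sure the cardinality bound is stated correctly and that the edge cases are handled: one should check $d \le n$ implicitly (if $d > n$ there is no code of minimum distance $d$ with more than one element, so the statement is vacuous or trivial), and the case $k = 0$ (the zero code, $|\mC| = 1$) makes the inequality trivially true. I do not anticipate a genuine obstacle here — the argument is essentially the textbook proof of the Singleton bound, and the only novelty is observing that it needs only the additive group structure of $\mC$, so it applies verbatim to $\F_{q^\ell}$-linear (and in fact even to additive, i.e.\ $\F_p$-linear) codes, with the ambient cardinality $q^m$ per coordinate accounting for the $q^{m(n-d+1)}$ on the right-hand side. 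Alternatively, one could cite \cite[Theorem 5.4]{delsarte1978bilinear} as indicated and give a one-line reduction, but I prefer the self-contained puncturing argument.
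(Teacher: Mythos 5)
Your proof is correct. The paper itself gives no proof of this statement --- it simply cites \cite[Theorem 5.4]{delsarte1978bilinear} --- so there is nothing to compare against; your self-contained puncturing argument (project away $d-1$ coordinates, injectivity on $\mC$ from the minimum-distance assumption, hence $|\mC| \le q^{m(n-d+1)}$) is the standard and valid route. One small remark: you do not even need the additive structure of $\mC$ for injectivity, since two distinct codewords agreeing outside $I$ satisfy $\dH(x,y) \le |I| = d-1$ directly by the definition of the Hamming distance; this makes the bound hold verbatim for the nonlinear codes considered elsewhere in the paper as well.
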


A code in $(\F_{q^m}^n, \dH)$ is called \emph{MDS} (\emph{maximum distance separable}) \emph{code} if its cardinality meets the Singleton bound with equality.

To estimate the asymptotic density of  MDS codes we need the size of the Hamming-metric ball and its estimates. It is well-known and easy to see that the volume of the Hamming-metric ball of radius $r$ is
\begin{equation} \label{eq:sizeHammingVolume}
     \textbf{v}_{q}^{\textnormal{H}}(\F_{q^m}^n,r):= \sum_{i=0}^r\binom{n}{i}(q^{m}-1)^{i},
\end{equation}
for any $0 \le r < \infty$.
The following lemma states the asymptotic estimates of the Hamming-metric ball as $q \to +\infty$, as $n\to +\infty$ and as $m\to +\infty$. All three statements can easily be derived from Equation~\eqref{eq:sizeHammingVolume}.

\begin{lemma} \label{lem:hamballasy}
The following estimates hold.
\begin{itemize}
    \item[(i)] Let $0 \le r \le n$. We have $\textbf{v}_{q}^{\textnormal{H}}(\F_{q^m}^n,r) \sim \binom{n}{r}q^{rm}$ as $q \to +\infty$.
    \item[(ii)] Let $0 \le r \le n$. We have $\textbf{v}_{q}^{\textnormal{H}}(\F_{q^m}^n,r) \sim \binom{n}{r}q^{rm}$ as $m \to +\infty$.
    \item[(iii)] Let $0 \le r < +\infty$. We have $\textbf{v}_{q}^{\textnormal{H}}(\F_{q^m}^n,r) \sim \binom{n}{r}(q^{m}-1)^r$ as $n \to +\infty$.
\end{itemize}

\end{lemma}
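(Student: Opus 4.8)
The plan is to read everything off the closed form~\eqref{eq:sizeHammingVolume}. Write
$$\textbf{v}_{q}^{\textnormal{H}}(\F_{q^m}^n,r) = \binom{n}{r}(q^m-1)^r + \sum_{i=0}^{r-1}\binom{n}{i}(q^m-1)^i ,$$
and call the first summand the \emph{main term} and the sum the \emph{error term}. In each of the three regimes I will show that the error term is $o$ of the main term, so that $\textbf{v}_{q}^{\textnormal{H}}(\F_{q^m}^n,r) \sim \binom{n}{r}(q^m-1)^r$; for (i) and (ii) I then chain this with $(q^m-1)^r \sim q^{rm}$, using transitivity of $\sim$. If $r=0$ all three statements read $1 \sim 1$, so I may assume $r \ge 1$.

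For (i) (fix $n,m,r$, let $q \to +\infty$) and (ii) (fix $q,n,r$, let $m \to +\infty$) the argument is the same, with the large quantity being $q^m$: the error term is a sum of terms $\binom{n}{i}(q^m-1)^i$ with $i \le r-1$, hence lies in $O\bigl((q^m)^{r-1}\bigr)$, whereas the main term has order $(q^m)^r$, so their ratio tends to $0$ and the dominance claim holds. Finally $(q^m-1)^r \sim q^{rm}$ both as $q \to +\infty$ and as $m \to +\infty$, since expanding by the binomial theorem the subleading terms are smaller by at least a factor $q^m$; combining gives $\textbf{v}_{q}^{\textnormal{H}}(\F_{q^m}^n,r) \sim \binom{n}{r}q^{rm}$.

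For (iii) (fix $q,m,r$, let $n \to +\infty$) the quantity $q^m-1$ is a nonzero constant, and $\binom{n}{i}$ is a polynomial in $n$ of degree $i$, so $\binom{n}{i}(q^m-1)^i$ is too. The error term therefore lies in $O(n^{r-1})$, which is negligible next to the main term $\binom{n}{r}(q^m-1)^r$, whose order is $n^r$. Hence $\textbf{v}_{q}^{\textnormal{H}}(\F_{q^m}^n,r) \sim \binom{n}{r}(q^m-1)^r$. Note that in this regime one may \emph{not} replace $(q^m-1)^r$ by $q^{rm}$: with $q$ and $m$ fixed the two differ by a constant factor bounded away from $1$ whenever $r \ge 1$, which is precisely why the stated estimate in (iii) keeps the factor $(q^m-1)^r$.

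The computation is routine and I do not anticipate any genuine obstacle. The only point worth keeping track of is the asymmetry between the cases just described: in (i) and (ii) two asymptotic equivalences are chained, so the cleaner estimate $\binom{n}{r}q^{rm}$ is available, whereas in (iii) only the dominance of the top term survives, forcing the estimate to retain $(q^m-1)^r$.
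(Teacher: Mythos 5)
Your argument is correct and is exactly the derivation the paper has in mind: the paper omits the proof, saying only that all three estimates follow easily from the closed formula~\eqref{eq:sizeHammingVolume}, and your dominant-term analysis (top summand dominates, plus $(q^m-1)^r \sim q^{rm}$ in the regimes where $q^m$ grows) is the intended routine verification. The remark on why (iii) must retain the factor $(q^m-1)^r$ is a nice touch and consistent with the statement.
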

\subsection{Nonlinear MDS Codes}
In this short subsection we determine the density of nonlinear MDS codes in $\F_q^n$ (i.e., we set $m=1$, as before) as either $q$ or $n$ tends to infinity. We summarize the obtained results in the following theorem. 

\begin{theorem} \label{thm:nonlinearnHamming}
Let $2 \leq d$ be an integer.
\begin{itemize}
\item[(i)]
Let $2 \le n$ be an integer. Then we have
$$\lim_{q \to +\infty} \delta_q^{H}(\F_{q}^n,q^{n-d+1},0,d)= 0.$$
\item[(ii)]
We have
$$\lim_{n \to +\infty} \delta_q^{H}(\F_{q}^n,q^{n-d+1},0,d)= 0.$$
\end{itemize}
In particular, nonlinear MDS codes in $\F_q^n$ are sparse both as $q \to +\infty$, respectively as $n \to +\infty$.
\begin{proof}
From the estimates given in Lemma~\ref{lem:hamballasy} we get
\begin{align*}
\lim_{q \to +\infty}\frac{q^{n}}{\textbf{v}_{q}^{\textnormal{H}}(\F_{q}^{n},d-1)q^{2(n-d+1)}} = \lim_{q \to +\infty}\frac{1}{\binom{n}{d-1}q^{n-d+1}} =0
\end{align*}
which, by Theorem~\ref{thm:nonLinearLimq}, gives the first result.

Similarly, we have
\begin{align*}
\lim_{n \to +\infty}\frac{q^{n}}{\textbf{v}_{q}^{\textnormal{H}}(\F_{q}^{n},d-1)q^{2(n-d+1)}} = \lim_{n \to +\infty}\frac{1}{\binom{n}{d-1}q^{n}}\left( \frac{q^{2}}{q-1}\right)^{d-1} =0
\end{align*}
which, together with Theorem~\ref{thm:nonLinearLimn}, implies the second result.
\end{proof}
\end{theorem}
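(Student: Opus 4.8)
The plan is to reduce both parts to the general dichotomies of Theorem~\ref{thm:nonLinearLimq} and Theorem~\ref{thm:nonLinearLimn}, applied in the metric space $(\F_q^n,\dH)$, i.e.\ with $m=1$, to the sequence of cardinalities $S_q=S_n=q^{n-d+1}$ forced by the Singleton bound (Theorem~\ref{thm:singletonlikeHamming} with $\ell=1$, $k=n-d+1$). Note that an MDS code of cardinality at least $2$ requires $n-d+1\ge 1$, so $d\le n$ holds throughout (and automatically once $n$ is large in part (ii)). Those theorems say that $\delta_q^{H}(\F_q^n,S,0,d)\to 0$ as soon as $\textbf{v}_{q}^{\textnormal{H}}(\F_q^n,d-1)\in\omega(q^{n}S^{-2})$ in the relevant variable. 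Since here $q^{n}S^{-2}=q^{\,n-2(n-d+1)}=q^{\,2d-2-n}$, the whole argument boils down to estimating the Hamming ball of radius $d-1$ in each asymptotic regime and comparing it to $q^{\,2d-2-n}$ — equivalently, showing $q^{n}/\bigl(\textbf{v}_{q}^{\textnormal{H}}(\F_q^n,d-1)\,q^{2(n-d+1)}\bigr)\to 0$.

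For part (i) I would invoke Lemma~\ref{lem:hamballasy}(i) with $m=1$, giving $\textbf{v}_{q}^{\textnormal{H}}(\F_q^n,d-1)\sim\binom{n}{d-1}q^{d-1}$ as $q\to+\infty$. Then $q^{n}/\bigl(\textbf{v}_{q}^{\textnormal{H}}(\F_q^n,d-1)\,q^{2(n-d+1)}\bigr)\sim 1/\bigl(\binom{n}{d-1}q^{\,n-d+1}\bigr)\to 0$, because $n-d+1\ge 1$; equivalently $\textbf{v}_{q}^{\textnormal{H}}(\F_q^n,d-1)\in\omega(q^{\,2d-2-n})$. Hence the $\omega$-hypothesis of Theorem~\ref{thm:nonLinearLimq} is satisfied and $\lim_{q\to+\infty}\delta_q^{H}(\F_q^n,q^{n-d+1},0,d)=0$.

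For part (ii) the only change is the ball estimate: as $n\to+\infty$, Lemma~\ref{lem:hamballasy}(iii) gives $\textbf{v}_{q}^{\textnormal{H}}(\F_q^n,d-1)\sim\binom{n}{d-1}(q-1)^{d-1}$, so $q^{n}/\bigl(\textbf{v}_{q}^{\textnormal{H}}(\F_q^n,d-1)\,q^{2(n-d+1)}\bigr)\sim \binom{n}{d-1}^{-1}q^{-n}\bigl(q^{2}/(q-1)\bigr)^{d-1}$, which tends to $0$ since $q$ and $d$ are fixed while $\binom{n}{d-1}q^{n}\to+\infty$; again this says $\textbf{v}_{q}^{\textnormal{H}}(\F_q^n,d-1)\in\omega(q^{\,2d-2-n})$. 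Theorem~\ref{thm:nonLinearLimn} then yields $\lim_{n\to+\infty}\delta_q^{H}(\F_q^n,q^{n-d+1},0,d)=0$, and the concluding ``in particular'' is just the definition of sparsity.

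I do not expect a genuine obstacle: each part is a substitution of the correct ball asymptotics into an already established dichotomy. The only point requiring a little care is the exponent bookkeeping $q^{n}S^{-2}=q^{\,2d-2-n}$ together with the choice of which estimate from Lemma~\ref{lem:hamballasy} applies — the $q\to+\infty$ estimate in (i) versus the $n\to+\infty$ estimate in (ii). In the latter the comparison pits the polynomially growing factor $\binom{n}{d-1}$ against the exponentially decaying $q^{-n}$, which is precisely why the limit is still $0$.
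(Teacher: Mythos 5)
Your proposal is correct and follows essentially the same route as the paper: substitute $S=q^{n-d+1}$ into the nonlinear dichotomies of Theorem~\ref{thm:nonLinearLimq} and Theorem~\ref{thm:nonLinearLimn}, using the Hamming ball estimates of Lemma~\ref{lem:hamballasy} for $q\to+\infty$ and $n\to+\infty$ respectively, and the exponent bookkeeping matches the paper's displayed limits exactly. Your explicit remark that $d\le n$ is forced (so $n-d+1\ge 1$) is a minor point the paper leaves implicit, but it is not a different argument.
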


\subsection{(Sub)linear MDS Codes}
Since we are now in the (sub)linear case, we fix a divisor $\ell$ of $m$ such that $\F_{q^{\ell}}$ is a subfield of $\F_{q^{m}}$ and let $s:= [\F_{q^{m}} \colon \F_{q^{\ell}}]$. We then apply the results of Section~\ref{sec:(Sub)linear_asym} to derive density results of  $[\F_{q^m}^n,q^{m(n-d+1)},\ell,d]^{H}$-codes when one of the four parameters $q, \ell, n$ or $s$ tends to infinity.

First we consider the asymptotic density of $\F_{q^{\ell}}$-linear MDS codes as the field size $q$ tends to infinity.
\begin{theorem} \label{thm:sublinHammingq}
Let $1 \leq n, \ell, s,d$ be integers such that $1 \leq d \leq n$.
Then we have
\begin{align*}
\lim_{q \to +\infty} \delta_q^{H}(\F_{q^{m}}^n,q^{m(n-d+1)},\ell,d)= 1,
\end{align*}
i.e., MDS codes in $\F_{q^m}^n$ are dense as $q \to +\infty.$
\end{theorem}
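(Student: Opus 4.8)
The plan is to apply Theorem~\ref{thm:sublimq} with the specific parameters coming from the MDS setting. Recall that an MDS code in $\F_{q^m}^n$ that is $\F_{q^\ell}$-linear has cardinality $q^{m(n-d+1)}$; since $m = \ell s$, this means its $\F_{q^\ell}$-dimension is $k = s(n-d+1)$, so in the notation of Theorem~\ref{thm:sublimq} we take $k = s(n-d+1)$. The key quantity governing the dichotomy in that theorem is the exponent $\ell(ns+1-k)$, which here evaluates to
\begin{align*}
\ell(ns + 1 - s(n-d+1)) = \ell(ns + 1 - sn + sd - s) = \ell(sd - s + 1) = \ell(s(d-1)+1).
\end{align*}
So by Theorem~\ref{thm:sublimq}, MDS codes are dense as $q \to +\infty$ precisely when $\textbf{v}_q^{H}(\F_{q^m}^n, d-1) \in o\left(q^{\ell(s(d-1)+1)}\right)$ as $q \to +\infty$.

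Next I would estimate the Hamming ball volume. By Lemma~\ref{lem:hamballasy}(i), since $0 \le d-1 \le n$, we have $\textbf{v}_q^{H}(\F_{q^m}^n, d-1) \sim \binom{n}{d-1} q^{(d-1)m}$ as $q \to +\infty$. Using $m = \ell s$, this is $\sim \binom{n}{d-1} q^{\ell s(d-1)}$. Comparing the exponents: the ball volume grows like $q^{\ell s (d-1)}$ (up to the constant $\binom{n}{d-1}$), while the threshold is $q^{\ell(s(d-1)+1)} = q^{\ell s(d-1)} \cdot q^{\ell}$. Since $\ell \ge 1$, the ratio $\textbf{v}_q^{H}(\F_{q^m}^n, d-1) / q^{\ell(s(d-1)+1)} \sim \binom{n}{d-1} q^{-\ell} \to 0$ as $q \to +\infty$. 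Hence indeed $\textbf{v}_q^{H}(\F_{q^m}^n, d-1) \in o\left(q^{\ell(s(d-1)+1)}\right)$, and Theorem~\ref{thm:sublimq} gives $\lim_{q\to+\infty} \delta_q^H(\F_{q^m}^n, q^{m(n-d+1)}, \ell, d) = 1$.

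There is essentially no serious obstacle here; the whole argument is a bookkeeping exercise matching the abstract parameters of Theorem~\ref{thm:sublimq} to the concrete MDS parameters and invoking the asymptotic ball estimate of Lemma~\ref{lem:hamballasy}(i). The only point requiring minor care is the degenerate case $d = 1$ (where the ball of radius $d-1 = 0$ has volume $1$, the code is the whole space, and density is trivially $1$) and ensuring $1 \le k \le ns$ holds so that Theorem~\ref{thm:sublimq} applies — this follows from $1 \le d \le n$, which forces $1 \le n-d+1 \le n$ and hence $s \le k = s(n-d+1) \le ns$. One should also double-check that the Singleton bound (Theorem~\ref{thm:singletonlikeHamming}) guarantees the cardinality $q^{m(n-d+1)}$ is the correct target and that such codes of this dimension exist in the relevant range, but for the density statement itself the existence is not needed — only the identification of the cardinality with the dimension $k = s(n-d+1)$.
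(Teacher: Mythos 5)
Your proposal is correct and follows essentially the same route as the paper: both identify $k = s(n-d+1)$, invoke Lemma~\ref{lem:hamballasy}(i) to estimate $\textbf{v}_q^{H}(\F_{q^m}^n,d-1) \sim \binom{n}{d-1}q^{(d-1)m}$, and observe that the ratio against $q^{\ell(ns+1-k)} = q^{\ell(s(d-1)+1)}$ behaves like $\binom{n}{d-1}q^{-\ell} \to 0$, so Theorem~\ref{thm:sublimq} applies. Your extra checks (that $1 \le k \le ns$ and the degenerate case $d=1$) are sound bookkeeping the paper leaves implicit.
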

\begin{proof}
From the estimate of the volume given in Lemma~\ref{lem:hamballasy} we have
\begin{align*} 
\lim_{q \to +\infty} \frac{q^{m(n-d+1)} \, \textbf{v}_{q}^{\textnormal{H}}(\F_{q^{m}}^n,d-1)}{q^{\ell(ns+1)}} = \lim_{q \to +\infty} \frac{\binom{n}{d-1}}{q^{\ell}}=0
\end{align*}
which, by Theorem~\ref{thm:sublimq}, gives the statement of the theorem.
\end{proof}
Analogously, by the asymptotic formulas given in Theorem~\ref{thm:sublimell}, Theorem~\ref{thm:sublinHammingq} reads as follows for increasing field extension degree $\ell$.
\begin{theorem} \label{thm:sublinHammingell}
Let $q \in Q$ and $1 \leq n, s,d$ be integers such that $1 \leq d \leq n$.
Then
\begin{align*}
\lim_{\ell \to +\infty} \delta_q^{H}(\F_{q^{\ell s}}^n,q^{\ell s(n-d+1)},\ell,d)= 1.
\end{align*}
\end{theorem}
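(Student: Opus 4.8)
The plan is to transcribe, essentially line by line, the proof of Theorem~\ref{thm:sublinHammingq}, replacing the asymptotics as $q \to +\infty$ by the asymptotics as $\ell \to +\infty$: apply Theorem~\ref{thm:sublimell} to the sequence $(\F_{q^{\ell s}}^n)_{\ell \ge 1}$ equipped with the Hamming distance, and feed in the $\ell \to +\infty$ ball estimate from Lemma~\ref{lem:hamballasy}(ii). First I would fix the parameters. Here $m = \ell s$, and by the Singleton bound (Theorem~\ref{thm:singletonlikeHamming}) a $[\F_{q^m}^n, q^{\ell k}, \ell, d]^{H}$-code is MDS exactly when $q^{\ell k} = q^{m(n-d+1)} = q^{\ell s(n-d+1)}$, that is, when $k = s(n-d+1)$; since $1 \le d \le n$, this $k$ satisfies $1 \le k \le ns$, so it is an admissible choice in Theorem~\ref{thm:sublimell}. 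With this $k$, the exponent governing the threshold in Theorem~\ref{thm:sublimell} is
\[
\ell(ns + 1 - k) = \ell\bigl(ns + 1 - s(n - d + 1)\bigr) = \ell\bigl(1 + s(d-1)\bigr).
\]

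Next I would estimate the Hamming ball volume as $\ell \to +\infty$. Since $s$ is held fixed we have $m = \ell s \to +\infty$, and $0 \le d-1 \le n-1 \le n$, so Lemma~\ref{lem:hamballasy}(ii) applies along the subsequence $m = \ell s$ and gives $\textbf{v}_{q}^{\textnormal{H}}(\F_{q^{\ell s}}^n, d-1) \sim \binom{n}{d-1} q^{(d-1)\ell s}$ as $\ell \to +\infty$. Comparing with the threshold,
\[
\frac{\textbf{v}_{q}^{\textnormal{H}}(\F_{q^{\ell s}}^n, d-1)}{q^{\ell(ns+1-k)}} \;\sim\; \frac{\binom{n}{d-1}\, q^{(d-1)\ell s}}{q^{\ell(1 + s(d-1))}} \;=\; \frac{\binom{n}{d-1}}{q^{\ell}} \;\longrightarrow\; 0 \qquad \text{as } \ell \to +\infty .
\]
Hence $\textbf{v}_{q}^{\textnormal{H}}(\F_{q^{\ell s}}^n, d-1) \in o\bigl(q^{\ell(ns+1-k)}\bigr)$ as $\ell \to +\infty$, and Theorem~\ref{thm:sublimell} yields $\lim_{\ell \to +\infty} \delta_q^{H}(\F_{q^{\ell s}}^n, q^{\ell k}, \ell, d) = 1$. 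Substituting $k = s(n-d+1)$ gives exactly the claimed limit, i.e. $\F_{q^{\ell}}$-linear MDS codes in $\F_{q^{\ell s}}^n$ are dense as $\ell \to +\infty$.

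I do not anticipate any genuine obstacle: the argument is a straightforward adaptation of the proof of Theorem~\ref{thm:sublinHammingq}. The only point deserving a sentence of justification is that Lemma~\ref{lem:hamballasy}(ii), stated as an equivalence for $m \to +\infty$, may be invoked along the subsequence $m \in \{s, 2s, 3s, \dots\}$ with $s$ constant, which is immediate since a convergent sequence remains convergent along any subsequence. If one prefers to bypass even that remark, the required $o$-bound follows directly from the closed form \eqref{eq:sizeHammingVolume} via the crude estimate $\textbf{v}_{q}^{\textnormal{H}}(\F_{q^{\ell s}}^n, d-1) \le d\,2^n\, q^{\ell s(d-1)}$, whose ratio to $q^{\ell(1+s(d-1))}$ equals $d\,2^n q^{-\ell} \to 0$ as $\ell \to +\infty$.
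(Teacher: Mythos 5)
Your proposal is correct and is exactly the argument the paper intends: the paper states Theorem~\ref{thm:sublinHammingell} as the ``analogous'' result obtained by running the proof of Theorem~\ref{thm:sublinHammingq} with Theorem~\ref{thm:sublimell} in place of Theorem~\ref{thm:sublimq}, which is precisely what you do (taking $k=s(n-d+1)$, estimating $\textbf{v}_{q}^{\textnormal{H}}(\F_{q^{\ell s}}^n,d-1)\sim\binom{n}{d-1}q^{(d-1)\ell s}$ and getting the ratio $\binom{n}{d-1}/q^{\ell}\to 0$). Your extra remarks about applying Lemma~\ref{lem:hamballasy}(ii) along the subsequence $m=\ell s$, or bypassing it with the crude bound from~\eqref{eq:sizeHammingVolume}, are sound but not needed beyond what the paper implicitly assumes.
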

If we let the code length $n$ tend to infinity, we obtain the following sparsity result. We state this result for completeness, even though it is well known that MDS codes do not exist at all for large $n$; see Remark \ref{rem:MDSconj}.

\begin{theorem} \label{thm:sublinHammingn}
Let $q \in Q$ and $1 \leq \ell, s$, $2 \leq d < \infty$ be integers. We have
\begin{align*}
\lim_{n \to +\infty} \delta_q^{H}(\F_{q^{m}}^n,q^{m(n-d+1)},\ell,d)= 0.
\end{align*}
\begin{proof}
By Lemma~\ref{lem:hamballasy} we have
\begin{align*}
  \textbf{v}_{q}^{\textnormal{H}}(\F_{q^{m}}^n,d-1) \sim \binom{n}{d-1}(q^{m}-1)^{d-1} \quad \textnormal{as} \, n \to + \infty.  
\end{align*}
One easily checks that $\textbf{v}_{q}^{\textnormal{H}}(\F_{q^{m}}^n,d-1) \in \omega(q^{\ell (s(d-1)+ 1)}).$ This combined with Theorem~\ref{thm:sublimn} gives the statement of the theorem.
\end{proof}
\end{theorem}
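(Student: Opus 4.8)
The plan is to reduce the claim to the general sparsity criterion of Theorem~\ref{thm:sublimn}(ii), which asserts that an $\F_{q^\ell}$-linear code family becomes sparse as $n \to +\infty$ as soon as the ambient ball volume is $\omega(q^{\ell(ns+1-k)})$. So the work consists entirely of (a) identifying the right dimension parameter $k$ for MDS codes and (b) checking that the Hamming ball of radius $d-1$ grows fast enough.

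First I would rewrite the MDS cardinality in terms of the $\F_{q^\ell}$-dimension. Since $m = s\ell$, an $\F_{q^\ell}$-linear code in $\F_{q^m}^n$ of cardinality $q^{m(n-d+1)} = q^{\ell s(n-d+1)}$ has $\F_{q^\ell}$-dimension $k(n) = s(n-d+1)$. This is a linear function of $n$, and for $n \ge d$ — which is all that matters for the limit as $n \to +\infty$ — one has $1 \le k(n) \le ns$, so the hypotheses of Theorem~\ref{thm:sublimn} are satisfied on the relevant tail. With this substitution the exponent in the criterion collapses to a constant: $ns + 1 - k(n) = ns + 1 - s(n-d+1) = s(d-1) + 1$.

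Next I would estimate the Hamming ball. By Lemma~\ref{lem:hamballasy}(iii), $\textbf{v}_q^{\textnormal{H}}(\F_{q^m}^n,d-1) \sim \binom{n}{d-1}(q^m-1)^{d-1}$ as $n \to +\infty$. Because $d \ge 2$, the factor $\binom{n}{d-1}$ is a polynomial in $n$ of degree $d-1 \ge 1$ and hence unbounded, so $\textbf{v}_q^{\textnormal{H}}(\F_{q^m}^n,d-1) \to +\infty$ while $q^{\ell(s(d-1)+1)}$ is a fixed constant independent of $n$. Therefore $\textbf{v}_q^{\textnormal{H}}(\F_{q^m}^n,d-1) \in \omega(q^{\ell(ns+1-k)})$ as $n \to +\infty$.

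Applying the ``in particular'' part of Theorem~\ref{thm:sublimn}(ii) with this estimate then yields $\lim_{n\to+\infty}\delta_q^{\textnormal{H}}(\F_{q^m}^n,q^{m(n-d+1)},\ell,d)=0$, which is the claim. The argument is essentially bookkeeping: the only step needing genuine attention is verifying that the MDS cardinality corresponds to the admissible linear dimension function $k(n)=s(n-d+1)$ and that $ns+1-k$ simplifies to the constant $s(d-1)+1$ — after which the hypothesis $d\ge 2$ is precisely what forces the ball volume to dominate that constant, and the general asymptotic machinery finishes the proof.
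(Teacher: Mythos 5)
Your proposal is correct and follows the paper's own argument: identify the MDS cardinality with the $\F_{q^\ell}$-dimension $k(n)=s(n-d+1)$, use Lemma~\ref{lem:hamballasy}(iii) to see that the ball volume grows like $\binom{n}{d-1}(q^m-1)^{d-1}$ and hence dominates the constant threshold $q^{\ell(s(d-1)+1)}$, and invoke Theorem~\ref{thm:sublimn}. The only difference is that you spell out the bookkeeping (the simplification $ns+1-k=s(d-1)+1$ and the admissibility of $k(n)$ on the tail $n\ge d$) which the paper leaves as ``one easily checks.''
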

The last parameter we consider is the degree $s$ of the field extension $\F_{q^{m}}/ \F_{q^{\ell}}$. For this parameter we will only upper bound the asymptotic density $\lim_{s \to +\infty} \delta_q^{H}(\F_{q^{\ell s}}^n,q^{\ell s(n-d+1)},\ell,d)$.

\begin{theorem} 
Let $q \in Q$ and $2 \leq n$, $1 \leq \ell$, $2 \leq d \leq n$ be integers.
Then
\begin{align*}
\limsup_{s \to +\infty} \delta_q^{H}(\F_{q^{\ell s}}^n,q^{\ell s(n-d+1)},\ell,d)\leq \frac{1}{1+ \binom{n}{d-1}q^{-\ell}} <1.
\end{align*}
\end{theorem}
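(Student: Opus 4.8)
The plan is to deduce this directly from part~(ii) of Theorem~\ref{thm:sublims} applied with $D=\dH$, once the parameters are matched up correctly and the Hamming ball volume is estimated as $s\to+\infty$.

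First I would unwind the notation: with $m=\ell s$, the prescribed cardinality $q^{\ell s(n-d+1)}$ equals $q^{\ell k}$ for the linear function $k=k(s):=s(n-d+1)$. Because $2\le d\le n$, one has $1\le n-d+1\le n-1$, so $1\le k(s)\le ns$ for every $s\ge 1$; thus $k$ satisfies the hypotheses required in Theorem~\ref{thm:sublims}. Next, since $\ell$ is held fixed, $m=\ell s\to+\infty$ as $s\to+\infty$, and as $0\le d-1\le n$, Lemma~\ref{lem:hamballasy}(ii) gives
\[
\textbf{v}_{q}^{\textnormal{H}}(\F_{q^{\ell s}}^n,d-1)\ \sim\ \binom{n}{d-1}q^{(d-1)\ell s}\qquad\text{as }s\to+\infty.
\]
A direct computation yields $\ell(ns+1-k(s))=\ell\bigl(1+s(d-1)\bigr)$, hence
\[
\textbf{v}_{q}^{\textnormal{H}}(\F_{q^{\ell s}}^n,d-1)\,q^{-\ell(ns+1-k)}\ \sim\ \binom{n}{d-1}q^{-\ell},
\]
a strictly positive constant. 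In particular $\textbf{v}_{q}^{\textnormal{H}}(\F_{q^{\ell s}}^n,d-1)\in\Omega\bigl(q^{\ell(ns+1-k)}\bigr)$ as $s\to+\infty$, so the hypothesis of Theorem~\ref{thm:sublims}(ii) is met.

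Then I would invoke Theorem~\ref{thm:sublims}(ii) to get
\[
\limsup_{s\to+\infty}\delta_q^{H}(\F_{q^{\ell s}}^n,q^{\ell s(n-d+1)},\ell,d)\ \le\ \limsup_{s\to+\infty}\frac{1}{1+\textbf{v}_{q}^{\textnormal{H}}(\F_{q^{\ell s}}^n,d-1)\,q^{-\ell(ns+1-k)}}\ =\ \frac{1}{1+\binom{n}{d-1}q^{-\ell}},
\]
where the last equality uses the asymptotic equivalence above (the quantity inside actually converges). Finally, since $1\le d-1\le n-1$ forces $\binom{n}{d-1}\ge 1>0$, the right-hand side is strictly less than $1$, giving the claim.

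I do not expect a genuine obstacle here; the argument is a routine application of the machinery of Section~\ref{sec:(Sub)linear_asym}. The only points that need care are bookkeeping: verifying that $k(s)=s(n-d+1)$ is an admissible linear function with $1\le k(s)\le ns$, and being careful to use the $m\to+\infty$ estimate of Lemma~\ref{lem:hamballasy}(ii) (since here it is $m=\ell s$, equivalently $s$, that grows while $n$ stays fixed) rather than the $n\to+\infty$ estimate of part~(iii).
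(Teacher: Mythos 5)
Your proposal is correct and follows essentially the same route as the paper's proof: both verify the $\Omega$-condition by computing $\lim_{s\to+\infty}\textbf{v}_{q}^{\textnormal{H}}(\F_{q^{\ell s}}^n,d-1)\,q^{-\ell(s(d-1)+1)}=\binom{n}{d-1}q^{-\ell}$ via Lemma~\ref{lem:hamballasy} and then invoke Theorem~\ref{thm:sublims}(ii). Your write-up is just slightly more explicit about the bookkeeping (admissibility of $k(s)=s(n-d+1)$ and which asymptotic estimate of the ball volume applies).
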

\begin{proof}
From the estimate of $\textbf{v}_{q}^{\textnormal{H}}(\F_{q^{\ell s}}^n,d-1)$ given in Lemma~\ref{lem:hamballasy} we get
\begin{align} \label{eq:sublinHammings}
\lim_{s \to +\infty} \frac{q^{\ell s(n-d+1)} \cdot \textbf{v}_{q}^{\textnormal{H}}(\F_{q^{\ell s}}^n,d-1)}{q^{\ell(ns+1)}} = \frac{\binom{n}{d-1}}{q^{\ell}}.
\end{align}
Hence $\textbf{v}_{q}^{\textnormal{H}}(\F_{q^{\ell s}}^n,d-1) \in \Omega(q^{\ell (s(d-1)+ 1)})$ as $s\to + \infty$ and applying Theorem~\ref{thm:sublims} gives
\begin{align} \label{eq:upperBoundHamming}
    \limsup_{s \to +\infty} \delta_q^{H}(\F_{q^{\ell s}}^n,q^{\ell s(n-d+1)},\ell,d) \leq \limsup_{s \to +\infty}\left( \frac{1}{1+\textbf{v}_{q}^{\textnormal{H}}(\F_{q^{\ell s}}^n,d-1)q^{-\ell(s(d-1)+1)}}\right).
\end{align}
Now the desired upper bound follows from \eqref{eq:sublinHammings} and \eqref{eq:upperBoundHamming}. 
\end{proof}

 \begin{remark}\label{rem:MDSconj}
Note that fixing the linearity degree of the code and studying the asymptotic density of MDS codes for the parameters $q,n, \ell$ and $s$ leads to different results. For growing field size, i.e.,  $q\to + \infty$ (or $\ell \to + \infty$), $[\F_{q^{m}}^n,q^{m(n-d+1)},\ell,d]^{H}$-codes are dense. On the contrary, if we let $n$ grow, then MDS codes are sparse. This is in line with the MDS conjecture, which implies that no non-trivial $[\F_{q^m}^{n}, q^{m(n-d+1)}, \ell, d]$-codes exist if $n>q^m+2$.
Considering the asymptotic density for $s \to +\infty$, one can see that $[\F_{q^{m}}^n,q^{m(n-d+1)},\ell,d]^{H}$-codes are not dense, but the question whether MDS codes are sparse or not for large $s$ remains open.
\end{remark}

\section{Application II: Codes in the Rank Metric} \label{sec:Rank}

In this section we apply the results of Sections~\ref{sec:bounds} and~\ref{sec:asy} to codes in the rank metric. We start by quickly recalling the required preliminaries on rank-metric codes.

\begin{definition}
Let $x \in \F_{q^m}^n$. The \emph{rank weight} of $x$ is defined as the dimension of the $\mathbb{F}_{q}$-span of its entries. More formally, for $x \in \F_{q^m}^n$ we define $\wrk(x)$ as 
\begin{equation*}
        \wrk  :  \F_{q^m}^n \longrightarrow \mathbb{N}, \hspace{0.5em} x \mapsto  \dim_{\mathbb{F}_{q}}\langle x_{1}, \dots , x_{n} \rangle.
\end{equation*}
The \emph{rank distance} between $x,y \in \F_{q^m}^n$ is then defined as $\drk(x,y):=\wrk(x-y)$.
\end{definition}

One can show that $\drk$ is a translation-invariant metric on $\F_{q^m}^n$ and throughout this section we always work with the metric space $(\F_{q^m}^n, \drk)$. At times we will take advantage of the following observation.

\begin{remark} \label{rem:matrixiso}
 When we equip the matrix space $\F_q^{m\times n}$ with the metric $\tilde{D}^\textnormal{rk}(X,Y):=\rk(X-Y)$, for all $X,Y \in \F_q^{m\times n}$, then  
a suitable vector space isomorphism from $(\F_{q^m}^n, D^\rk)$ to $(\F_q^{m\times n}, \tilde{D}^\textnormal{rk})$ is also an isometry, see e.g.~\cite{gorla2018codes}. 
 However, the linearity degree of a code is generally not preserved. In particular, the image of an $\F_{q^\ell}$-linear code in $\F_{q^m}^{n}$ is ``only'' $\F_q$-linear in $\F_q^{m\times n}$, for any $1\leq \ell \leq m$. Nonetheless, for $\ell\in \{0,1\}$, the set of $[\F_{q^m}^n, S, \ell, d]^\rk$-codes is in one-to-one-correspondence with the set of $[\F_{q^n}^m, S, \ell, d]^\rk$-codes.\footnote{If $\ell$ is also a divisor of $n$, then there exist $\F_{q^\ell}$-isomorphisms from $\F_{q^m}^n$ to $\F_{q^\ell}^{s\times n}\cong \F_{q^\ell}^{sn}$, and from $\F_{q^\ell}^{sn}\cong \F_{q^\ell}^{m\times (n/\ell)}$ to  $ \F_{q^n}^{m}$. Defining the $\F_q$-rank metric in a suitable manner on $\F_{q^\ell}^{sn}$, we get that these isomorphisms are isometries again. Then the set of  $[\F_{q^m}^n, S, \ell, d]^\rk$-codes is in one-to-one-correspondence with the set of $[\F_{q^n}^m, S, \ell, d]^\rk$-codes, also for larger $\ell$.}
\end{remark}

As usual, we denote a nonlinear code in $(\F_{q^m}^n, \drk)$ of cardinality $S$ and minimum distance at least $d$ as a $[\F_{q^m}^n,S,0,d]^\textnormal{rk}$-code. If the code is also $\F_{q^\ell}$-linear and then we say it is a $[\F_{q^m}^n,S,\ell,d]^\textnormal{rk}$-code.

A rank-metric code cannot both have large dimension and minimum distance. The following result by Delsarte shows the relation between these two quantities.

\begin{theorem}[Singleton-like bound; \text{see \cite[Theorem 5.4]{delsarte1978bilinear}}] \label{thm:singletonlike}
Let $\mC \subseteq \F_{q^m}^n$ be a rank-metric code. We have $|\mC| \le q^{\max\{n,m\}(\min\{n,m\}-\drk(\mC)+1)}.$
\end{theorem}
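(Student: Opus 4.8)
The plan is to reduce everything to the classical Singleton bound for the Hamming metric (Theorem~\ref{thm:singletonlikeHamming}), exploiting the standard inequality that relates the rank weight and the Hamming weight of a vector in $\F_{q^m}^n$. First I would record the elementary observation that for every $x \in \F_{q^m}^n$ one has $\wrk(x) \le \wH(x)$, since the $\F_q$-span of the nonzero entries of $x$ has dimension at most the number of nonzero entries. Consequently, for any code $\mC \subseteq \F_{q^m}^n$, passing to differences of codewords gives $\drk(\mC) \le \dH(\mC)$. Now $\mC$ is some $[\F_{q^m}^n, q^{\ell k}, \ell, \dH(\mC)]^{H}$-code (after writing $|\mC| = q^{\ell k}$ in the $\F_{q^\ell}$-linear case, or more generally treating it as a nonlinear Hamming code), so Theorem~\ref{thm:singletonlikeHamming} yields $|\mC| \le q^{m(n - \dH(\mC) + 1)} \le q^{m(n - \drk(\mC) + 1)}$. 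This already proves the bound with $m$ in the exponent, i.e. the case $m \ge n$ of $\max\{n,m\} = m$ and $\min\{n,m\} = n$.

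To obtain the symmetric statement, the key step is to invoke the matrix transpose symmetry recorded in Remark~\ref{rem:matrixiso}: a suitable vector-space isomorphism identifies $(\F_{q^m}^n, \drk)$ with $(\F_q^{m\times n}, \tilde D^{\rk})$, and on the matrix side, transposition $X \mapsto X^\top$ is a rank-preserving bijection between $\F_q^{m\times n}$ and $\F_q^{n\times m}$. Hence $\mC$ corresponds to a rank-metric code in $\F_q^{n\times m}$, equivalently (via the inverse isomorphism) a code $\mC' \subseteq \F_{q^n}^m$ with $|\mC'| = |\mC|$ and $\drk(\mC') = \drk(\mC)$. Applying the bound just established to $\mC'$ in $\F_{q^n}^m$ gives $|\mC| = |\mC'| \le q^{n(m - \drk(\mC) + 1)}$. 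Taking the minimum of the two bounds $q^{m(n-\drk(\mC)+1)}$ and $q^{n(m-\drk(\mC)+1)}$, and noting that the smaller exponent is exactly $\max\{n,m\}\bigl(\min\{n,m\} - \drk(\mC) + 1\bigr)$, completes the proof.

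The one point requiring a little care — and the main (though still minor) obstacle — is the transpose argument when $\drk(\mC)$ could change: it does not, because rank is preserved under transposition and under the isometries of Remark~\ref{rem:matrixiso}, so $\drk$ is genuinely a symmetric function of the parameters $n$ and $m$. One should also handle the degenerate case $|\mC| = 1$ (where $\drk(\mC)$ is not defined) trivially, and note that when $|\mC| \ge 2$ the quantity $\drk(\mC) \le \min\{n,m\}$ automatically, so the exponent is nonnegative and the bound is meaningful. No asymptotic machinery from Sections~\ref{sec:bounds}--\ref{sec:asy} is needed here; this is purely the combinatorial Singleton-type estimate, included as a preliminary for the density computations that follow.
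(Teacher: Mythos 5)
Your proposal is correct, and it is worth noting that the paper itself offers no proof of this statement: it is quoted as a known result of Delsarte, whose original argument goes through the association scheme of bilinear forms. Your route --- observe $\wrk(x)\le \wH(x)$, hence $\drk(\mC)\le \dH(\mC)$, apply the Hamming Singleton bound to get $|\mC|\le q^{m(n-\drk(\mC)+1)}$, and then use the transposition isometry of Remark~\ref{rem:matrixiso} to obtain the symmetric bound $|\mC|\le q^{n(m-\drk(\mC)+1)}$, of which the minimum is exactly $q^{\max\{n,m\}(\min\{n,m\}-\drk(\mC)+1)}$ --- is the standard elementary folklore proof and is entirely sound; your verification that the smaller exponent is the one with $\max\{n,m\}$ as the outer factor is the right computation, since $m(n-d+1)-n(m-d+1)=(m-n)(1-d)$. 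The only point deserving explicit care is that Theorem~\ref{thm:singletonlikeHamming} as stated in the paper covers only $\F_{q^\ell}$-linear codes of cardinality $q^{\ell k}$, whereas the rank-metric statement concerns arbitrary (possibly nonlinear) codes, and moreover the transposed code $\mC'\subseteq\F_{q^n}^m$ generally retains no linearity over the new extension field; so you genuinely need the nonlinear Singleton bound $|\mC|\le |\Sigma|^{n-d+1}$. You flag this, and it is a one-line classical fact (projection onto any $n-d+1$ coordinates is injective on a code of minimum Hamming distance $d$), so this is not a gap, but it should be stated rather than inherited from the linear-only theorem in the paper.
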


We call a rank-metric code meeting the bound in Theorem~\ref{thm:singletonlike} with equality an \emph{MRD} (\emph{maximum rank distance}) code.

\begin{remark} \label{rem:quasiMRD}
Let $\ell$ be a divisor of $m$ and let $s=[\F_{q^{m}}\colon \F_{q^{\ell}}]$. The dimension over $\F_q$ of any $\F_{q^\ell}$-linear rank-metric code $\mC \subseteq \F_{q^{\ell s}}^n$ has to be divisible by $\ell$. In particular, if the dimension of $\mC$ is $k$ over $\F_{q^\ell}$ where $0 \le k \le ns$ and $d=\drk(\mC)$, then from the bound in Theorem \ref{thm:singletonlike}, it follows
\begin{align} \label{eq:largestk}
    \ell k \le \max\{n,\ell s\}(\min\{n,\ell s\}-d+1).
\end{align}
If $\ell s \le n$, then the largest integer $k$ that satisfies \eqref{eq:largestk} is 
\begin{align}  \label{eq:largestk2}
    k^{*}=\left\lfloor \frac{n(\ell s -d+1)}{\ell}\right\rfloor.
\end{align}
Clearly, codes attaining the largest possible dimension $k^*$ in~\eqref{eq:largestk2} are not necessarily MRD (it depends on whether or not the fraction on the RHS of \eqref{eq:largestk2} evaluated for the parameters of the considered code is an integer or not). If a code attains this bound with equality but is not MRD, then it is called a \emph{quasi-MRD code}.
\end{remark}

In order to make use of the results in Sections~\ref{sec:bounds} and~\ref{sec:asy} we need the volume of the ball in the rank 
metric and its asymptotic estimates. The volume of the rank-metric ball of radius $r$ is given by
\begin{equation} \label{eq:rkballasy}
    \bbqrk{\F_{q^m}^n,r}:= \sum_{i=0}^r \qbin{n}{i}{q} \prod_{j=0}^{i-1}(q^{m}-q^{j})
\end{equation}
for any $0 \le r \le \min\{n,m\}$; see for example \cite{gabidulin}. 

As we are mainly interested in asymptotic results we need the following lemma, which states the well-known asymptotic estimates of the rank-metric ball as $q \to +\infty$, as $m \to +\infty$ and as~$n \to +\infty$.

\begin{lemma} \label{lem:rkballasy}
The following estimates hold.
\begin{itemize}
    \item[(i)] Let $0 \le r \le \min\{n,m\}$. We have $\bbqrk{\F_{q^m}^n,r} \sim q^{r(m+n-r)}$ as $q \to +\infty$.
    \item[(ii)] Let $0 \le r \le n$. We have $\bbqrk{\F_{q^m}^n,r} \sim
    \qbin{n}{r}{q}q^{rm}$ as $m \to +\infty$.
    \item[(iii)] Let $0 \le r \le m$. We have $\bbqrk{\F_{q^m}^n,r} \sim \qbin{m}{r}{q}q^{rn}$ as $n \to +\infty$.
\end{itemize}
\end{lemma}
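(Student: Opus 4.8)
The plan is to derive all three estimates directly from the closed formula \eqref{eq:rkballasy}, namely
$$\bbqrk{\F_{q^m}^n,r} = \sum_{i=0}^r \qbin{n}{i}{q} \prod_{j=0}^{i-1}(q^{m}-q^{j}),$$
by identifying in each regime the unique dominant summand (the term $i=r$) and showing that the remaining terms are asymptotically negligible. The general strategy is the same for all three parts: write the $i=r$ term explicitly, find its asymptotic size, and bound the sum of the lower-order terms $i<r$ by the $i=r-1$ term times a constant number of summands, then check that this is $o$ of the leading term in the relevant variable.

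For part (i), as $q\to+\infty$ with $m,n$ fixed, I would use the standard estimate $\qbin{n}{i}{q}\sim q^{i(n-i)}$ and $\prod_{j=0}^{i-1}(q^m-q^j)\sim q^{im}$, so the $i$-th summand is asymptotic to $q^{i(n-i)+im} = q^{i(m+n-i)}$. The exponent $i(m+n-i)$ is strictly increasing in $i$ on the range $0\le i\le r$ (since $r\le\min\{n,m\}\le (m+n)/2$), hence the summand with $i=r$ dominates and $\bbqrk{\F_{q^m}^n,r}\sim q^{r(m+n-r)}$; the $i<r$ contributions are $O(q^{(r-1)(m+n-r+1)})=o(q^{r(m+n-r)})$ since the exponent gap is $(m+n-2r+1)\ge 1$.

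For parts (ii) and (iii), which are symmetric under swapping $m$ and $n$ (consistent with Remark~\ref{rem:matrixiso}), I would treat (ii) in detail and note (iii) follows by the same argument with the roles of $m$ and $n$ exchanged and the constraint $r\le m$. As $m\to+\infty$ with $q,n,r$ fixed: here $\qbin{n}{i}{q}$ is a fixed constant (independent of $m$), while $\prod_{j=0}^{i-1}(q^m-q^j)\sim q^{im}$. So the $i$-th summand is $\Theta(q^{im})$, the exponent $im$ is strictly increasing in $i$, and the $i=r$ term dominates: $\bbqrk{\F_{q^m}^n,r}\sim \qbin{n}{r}{q}q^{rm}$, with the lower terms being $O(q^{(r-1)m})=o(q^{rm})$. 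For (iii) one rewrites $\qbin{n}{i}{q}\prod_{j=0}^{i-1}(q^m-q^j)$ symmetrically (using that the rank-metric ball in $\F_{q^m}^n$ has the same size as in $\F_{q^n}^m$, or just directly: $\prod_{j=0}^{i-1}(q^m-q^j) = q^{\binom{i}{2}}\prod_{j=0}^{i-1}(q^{m-j}-1)$ and $\qbin{n}{i}{q}\sim q^{i(n-i)}$ as $n\to+\infty$ is \emph{not} quite right since $n\to\infty$ too — instead use $\qbin{n}{i}{q}\prod_{j=0}^{i-1}(q^m-q^j)$ and note the $q^{ni}$-type growth), so that the $i$-th term grows like $q^{ni}$ times a constant depending on $q,m,i$, again making $i=r$ dominant and yielding $\bbqrk{\F_{q^m}^n,r}\sim\qbin{m}{r}{q}q^{rn}$.

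The main obstacle, if any, is purely bookkeeping: making sure that in each regime the quantity being sent to infinity really does produce a strictly increasing sequence of exponents in $i$ over the full range $0\le i\le r$, so that the tail $\sum_{i<r}$ is genuinely of smaller order. For (i) this requires the hypothesis $r\le\min\{n,m\}$ to guarantee monotonicity of $i\mapsto i(m+n-i)$; for (ii) and (iii) monotonicity is immediate. Since the excerpt explicitly states these are ``well-known'' estimates and that ``similar asymptotic estimates have been proved in~\cite[Section 6]{gruica2022common}'', I would phrase the argument briefly, citing that reference for the routine verifications, and simply record the three asymptotic equivalences with a one-line indication of the dominant-term argument rather than a full computation.
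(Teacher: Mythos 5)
Your argument is correct, and since the paper states this lemma without proof (as ``well-known'' estimates read off from the closed formula \eqref{eq:rkballasy}), your dominant-term analysis is precisely the standard argument it implicitly relies on: in each regime the $i=r$ summand dominates, with the exponent gap $m+n-2r+1\ge 1$ in (i) guaranteed by $r\le\min\{n,m\}$. The only loose end is in (iii), and both of your fixes work: either invoke the transposition symmetry of Remark~\ref{rem:matrixiso} to reduce (iii) to (ii), or check directly that $\qbin{n}{r}{q}\prod_{j=0}^{r-1}(q^m-q^j)\sim q^{rn}\prod_{j=0}^{r-1}\frac{q^{m-j}-1}{q^{r-j}-1}=\qbin{m}{r}{q}\,q^{rn}$ as $n\to+\infty$.
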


\subsection{Nonlinear MRD Codes}

In this short subsection we investigate the density of (possibly nonlinear) MRD codes both as $q \to +\infty$ and $n \to +\infty$. 

Note that by Remark~\ref{rem:matrixiso} computing the asymptotic densities for $n \to +\infty$ and $m \to +\infty$ for nonlinear MRD codes in $\F_{q^m}^n$ are the same (up to transposition) and thus we will only treat the case where $n \to +\infty$ in this subsection.

In the following theorem we provide asymptotic results on the density function of nonlinear MRD codes both as $q \to +\infty$ and $n \to +\infty$.

\begin{theorem} \label{thm:nonlinmrdqn}
Let $2 \le d$ be an integer. 
\begin{itemize}
    \item[(i)] Let $2 \le m$ and $2 \le n$ be integers. We have 
    \begin{align*}
    \lim_{q \to +\infty}\delta_q^{\textnormal{rk}} (\F_{q^m}^n, q^{\max\{n,m\}(\min\{n,m\}-d+1)},0,d) =0.
    \end{align*}
    \item[(ii)] Let $2 \le m$ be an integer and suppose that $d \le m$. We have
    \begin{align*}
    \lim_{n \to +\infty}\delta_q^{\textnormal{rk}} (\F_{q^m}^n, q^{n(m-d+1)},0,d) =0.
    \end{align*}
\end{itemize}
Thus, nonlinear MRD codes are sparse both as $q \to +\infty$ and as $n \to +\infty$.
\end{theorem}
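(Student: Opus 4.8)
The plan is to derive both statements directly from the nonlinear asymptotic dichotomy in Theorem~\ref{thm:nonLinearLimq} and Theorem~\ref{thm:nonLinearLimn}, using the asymptotic estimates for the rank-metric ball in Lemma~\ref{lem:rkballasy}. In the nonlinear setting the ambient space is $\F_q^n$ in the generic notation of Section~\ref{sec:asy}, but here we are working directly in $\F_{q^m}^n$ with $m$ fixed, so the role of ``$q^n$'' in those theorems is played by the ground set size $q^{mn}$, and the role of ``$\textbf{v}_q^{D}(\F_q^n,d-1)$'' is played by $\bbqrk{\F_{q^m}^n,d-1}$. The cardinality sequence is the constant (in the relevant limit variable) MRD size $S = q^{\max\{n,m\}(\min\{n,m\}-d+1)}$. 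So in each case it suffices to check that $\bbqrk{\F_{q^m}^n,d-1} \in \omega\!\left((q^{mn})\,S^{-2}\right)$ in the appropriate limit, which by the dichotomy forces the density to $0$.

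For part (i), I would first note that $d-1 \le \min\{n,m\}$ is forced (otherwise the ball radius exceeds the diameter and the MRD parameters are degenerate), so Lemma~\ref{lem:rkballasy}(i) applies and gives $\bbqrk{\F_{q^m}^n,d-1} \sim q^{(d-1)(m+n-d+1)}$ as $q\to+\infty$. Writing $N = \max\{n,m\}$ and $M = \min\{n,m\}$, the MRD size is $S = q^{N(M-d+1)}$, so $(q^{mn})\,S^{-2} = q^{mn - 2N(M-d+1)} = q^{NM - 2N(M-d+1)} = q^{N(2(d-1) - M)}$. Thus I want to show
\begin{align*}
\lim_{q\to+\infty} \frac{q^{mn}}{\bbqrk{\F_{q^m}^n,d-1}\, S^{2}} = \lim_{q\to+\infty} \frac{1}{q^{(d-1)(m+n-d+1)}\, q^{N(2(d-1)-M)}} = \lim_{q\to+\infty} q^{-e} = 0,
\end{align*}
where the exponent $e = (d-1)(m+n-d+1) + N(2(d-1)-M)$. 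A short computation using $m+n = M+N$ and $d\ge 2$ shows $e > 0$ (the dominant contribution is $(d-1)N$ coming from the $+N$ from the ball and the remaining terms; one checks $e = (d-1)(M-d+1) + (d-1)N + (d-1)(N-M) + \dots$ — the cleanest route is to just substitute and observe $e = (d-1)(m+n-d+1) - N(M - 2(d-1)) \ge (d-1)(m+n-d+1) - NM + 2(d-1)N$ and bound each piece). Then Theorem~\ref{thm:nonLinearLimq} applied with ground-set exponent $mn$ yields the limit $0$.

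For part (ii), with $m$ fixed and $d\le m$ we have $\min\{n,m\} = m$ and $\max\{n,m\} = n$ for $n$ large, so $S = q^{n(m-d+1)}$, and by Lemma~\ref{lem:rkballasy}(iii) we have $\bbqrk{\F_{q^m}^n,d-1} \sim \qbin{m}{d-1}{q} q^{n(d-1)}$ as $n\to+\infty$, with $\qbin{m}{d-1}{q}$ a nonzero constant independent of $n$. Then
\begin{align*}
\frac{q^{mn}}{\bbqrk{\F_{q^m}^n,d-1}\, S^{2}} \sim \frac{q^{mn}}{\qbin{m}{d-1}{q} q^{n(d-1)} q^{2n(m-d+1)}} = \frac{1}{\qbin{m}{d-1}{q}}\, q^{n(m - (d-1) - 2(m-d+1))} = \frac{q^{-n(m-d+1)}}{\qbin{m}{d-1}{q}},
\end{align*}
which tends to $0$ as $n\to+\infty$ since $m-d+1 \ge 1$. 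Hence $\bbqrk{\F_{q^m}^n,d-1} \in \omega(q^{mn} S^{-2})$ and Theorem~\ref{thm:nonLinearLimn} gives the limit $0$. The final sentence about sparsity is then immediate from the definition of sparse.

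The main obstacle is a bookkeeping one rather than a conceptual one: the asymptotic results in Section~\ref{sec:asy} are phrased for the ambient space $\F_q^n$ (i.e.\ with ground-set size $q^n$), and one must be careful to apply them with $q^n$ replaced by $q^{mn}$ — equivalently, one could pass through the matrix isometry of Remark~\ref{rem:matrixiso} and work in $\F_q^{m\times n}$, where the ground set genuinely has size $q^{mn}$ and the nonlinear theorems apply verbatim. I would take the latter route to avoid re-deriving anything: the isometry preserves cardinality and minimum rank distance (and for nonlinear codes there is no linearity to preserve), so $\delta_q^{\rk}$ is unchanged, and Theorems~\ref{thm:nonLinearLimq} and~\ref{thm:nonLinearLimn} apply directly with $n$ there equal to $mn$ here. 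The only genuine computation is verifying the exponent sign $e>0$ in part (i), which I would present as a one-line inequality after substitution.
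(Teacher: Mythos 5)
Your overall strategy is exactly the paper's: apply the nonlinear dichotomies (Theorems~\ref{thm:nonLinearLimq} and~\ref{thm:nonLinearLimn}) with ground set $q^{mn}$ and plug in the ball estimates of Lemma~\ref{lem:rkballasy}, and your part~(ii) computation is correct and matches the paper's. However, in part~(i) there is a concrete sign error that invalidates the one computation you identify as "the only genuine" one. Having correctly found $q^{mn}S^{-2}=q^{N(2(d-1)-M)}$, the ratio you need is
\begin{align*}
\frac{q^{mn}}{\bbqrk{\F_{q^m}^n,d-1}\,S^2}=\frac{q^{N(2(d-1)-M)}}{q^{(d-1)(m+n-d+1)}}=q^{-e'},\qquad e'=(d-1)(m+n-d+1)-N\bigl(2(d-1)-M\bigr),
\end{align*}
whereas you wrote the denominator as $q^{(d-1)(m+n-d+1)}q^{N(2(d-1)-M)}$ and accordingly defined $e=(d-1)(m+n-d+1)+N(2(d-1)-M)$. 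That $e$ is \emph{not} always positive: for $d=2$ and $m=n=10$ it equals $19-80=-61$, so the "one-line inequality $e>0$" you defer to is false as stated and the proof as written does not close. The repair is immediate: the correct exponent factors as $e'=(M-d+1)(N+d-1)=(\min\{n,m\}-d+1)(\max\{n,m\}+d-1)$, which is strictly positive precisely on the non-degenerate range $d\le\min\{n,m\}$ that you already singled out, and with this correction your argument coincides with the paper's proof. (Your framing device of replacing $q^n$ by $q^{mn}$ in the nonlinear theorems, or equivalently passing through the isometry of Remark~\ref{rem:matrixiso}, is fine and is implicitly what the paper does.)
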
 
\begin{proof}
From the estimate given in Lemma~\ref{lem:rkballasy} we get \begin{align*}
\lim_{q \to +\infty} \frac{q^{mn}}{\bbqrk{\F_{q^m}^n,d-1}q^{2\max\{n,m\}(\min\{n,m\}-d+1)}} = \lim_{q \to +\infty} \frac{1}{q^{(d-1)(m+n+2\max\{n,m\}-d+1)+mn}} = 0
\end{align*}
which by Theorem~\ref{thm:nonLinearLimq} proves the first result.

Similarly, we have
\begin{align*}
  \lim_{n \to +\infty} \frac{q^{mn}}{\bbqrk{\F_{q^m}^n,d-1}q^{2n(m-d+1)}} =  \lim_{n \to +\infty} \frac{1}{\qbin{m}{d-1}{q}q^{n(m-d+1)}} =0
\end{align*}
where we used the asymptotic estimate for $n \to +\infty$ in Lemma~\ref{lem:rkballasy}. The second statement in the theorem is then a consequence of Theorem~\ref{thm:nonLinearLimn}. 
\end{proof}

\subsection{(Sub)linear MRD Codes}
The problem of determining whether ($\F_q$-linear and $\F_{q^m}$-linear) MRD codes are dense or sparse has been studied before and we start this subsection by revisiting the results and their approaches that have been developed so far (see e.g.~\cite{antrobus2019maximal,byrne2020partition,gluesing2020sparseness,gruica2022common,gruica2022rank,neri2018genericity}), one of which is the application of the results presented in Section~\ref{sec:bounds}. We give a short recap of some of the results obtained with the other approaches in the language of this paper.

In~\cite{gruica2022common} it was shown that $[\F_{q^m}^n,q^{m(n-d+1)},1,d]$-MRD codes are sparse as $q \to +\infty$ unless $d=1$ or $n=d=2$. Together with a result from~\cite{antrobus2019maximal}, where they computed the exact asymptotic density for when $n=d=2$ using the theory of spectrum-free matrices, the density question for $\F_q$-linear MRD codes as $q\to +\infty$ is fully solved. Moreover, in~\cite{gluesing2020sparseness}, for the case where $m=n=d=3$, an exact asymptotic estimate as for the density function of $[\F_{q^m}^n,q^{m(n-d+1)},1,d]$-MRD codes $q \to +\infty$ was provided. This result was later generalized in~\cite{gruica2022rank} for any $m=n=d$ that are prime. 

Both the approach of~\cite{gluesing2020sparseness} and the one of~\cite{gruica2022rank} are based on the connection between full-rank MRD codes and semifields. 

Finally, using the Schwartz-Zippel Lemma, in~\cite{neri2018genericity} it was shown that $\F_{q^m}$-linear MRD codes are dense as $m \to +\infty$. For $\F_q$-linear MRD codes it has only been shown that they are \emph{not} dense as $m \to +\infty$ (see~\cite{antrobus2019maximal,byrne2020partition,gruica2022common}). Whether they are sparse or not is to this day still an open question.

As the density of MRD codes has already been investigated rather thoroughly, in this subsection we will just close the gap by discussing the remaining asymptotic density results. We start by looking at the (sub)linear case for the field size $q$ tending to infinity. As explained in Remark~\ref{rem:quasiMRD}, when $\ell s < n$, then the codes of maximum possible dimension are not necessarily MRD codes. For this reason, the following theorem is split into two parts.

\begin{theorem} \label{thm:sublinmrdq}
Let $n$, $\ell$ and $s$ be positive integers and let $2 \le d \le n$. 
\begin{itemize}
    \item[(i)] If $n \le \ell s$ then we have
    \begin{align*}
\lim_{q \to +\infty} \delta_q^\textnormal{rk}(\F_{q^{\ell s}}^n,q^{\ell s(n-d+1)},\ell,d) =
\begin{cases}
     1 \quad &\textnormal{ if $(d-1)(n-d+1) < \ell$,} \\
    0 \quad &\textnormal{ if $\ell < (d-1)(n-d+1)$.}
    \end{cases}
\end{align*}
Moreover, if $\ell = (d-1)(n-d+1)$ then $\lim_{q \to +\infty} \delta_q^\textnormal{rk}(\F_{q^{\ell s}}^n,q^{\ell s(n-d+1)},\ell,d) \le 1/2$.
    \item[(ii)] If $\ell s < n$ then we have
   \begin{align*}
\lim_{q \to +\infty} \delta_q^\textnormal{rk}(\F_{q^{\ell s}}^n,q^{\ell k},\ell,d) =
\begin{cases}
     1 \quad &\textnormal{ if $(d-1)(\ell s -d+1)+r < \ell$,} \\
    0 \quad &\textnormal{ if $\ell < (d-1)(\ell s -d+1)+r$.}
    \end{cases}
\end{align*}
where $k:=\lfloor n(\ell s-d+1)/\ell \rfloor$ and $r:=n(d-1)-\ell \lceil n(d-1)/\ell \rceil $.
Moreover, if $\ell = (d-1)(\ell s -d+1)+r$ then $\lim_{q \to +\infty} \delta_q^\textnormal{rk}(\F_{q^{\ell s}}^n,q^{\ell k},\ell,d) \le 1/2$.
\end{itemize}
\end{theorem}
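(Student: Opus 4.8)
The plan is to reduce everything to an application of Theorem~\ref{thm:sublimq}, whose hypotheses only involve comparing the volume of the rank-metric ball $\bbqrk{\F_{q^{\ell s}}^n,d-1}$ with the power $q^{\ell(ns+1-k)}$, using the asymptotic estimate $\bbqrk{\F_{q^m}^n,d-1} \sim q^{(d-1)(m+n-d+1)}$ as $q \to +\infty$ from Lemma~\ref{lem:rkballasy}(i). So the first step is purely bookkeeping: identify the correct dimension parameter $k$ in each case and compute the exponent $e := (d-1)(m+n-d+1) - \ell(ns+1-k)$, where $m = \ell s$. If $e < 0$ then the ball volume is in $o(q^{\ell(ns+1-k)})$ and Theorem~\ref{thm:sublimq} gives density $1$; if $e > 0$ it is in $\omega(q^{\ell(ns+1-k)})$ and the density is $0$; if $e = 0$, the ball volume is asymptotically a constant multiple of $q^{\ell(ns+1-k)}$ (in fact asymptotically equal, since the leading constant is $1$), so it is in $\Omega(q^{\ell(ns+1-k)})$ and part (ii) of Theorem~\ref{thm:sublimq} gives the upper bound $\limsup \le 1/(1+1) = 1/2$.

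For part (i), with $n \le \ell s$ we have $\min\{n,m\} = n$, $\max\{n,m\} = \ell s$, and the MRD code has $\F_q$-dimension $\ell s(n-d+1)$, hence $\F_{q^\ell}$-dimension $k = s(n-d+1)$. Then $ns + 1 - k = ns + 1 - s(n-d+1) = s(d-1) + 1$, so $\ell(ns+1-k) = \ell(s(d-1)+1)$. On the other hand the ball exponent is $(d-1)(m+n-d+1) = (d-1)(\ell s + n - d + 1)$. Subtracting, $e = (d-1)(\ell s + n - d + 1) - \ell s(d-1) - \ell = (d-1)(n-d+1) - \ell$. Thus $e<0 \iff (d-1)(n-d+1) < \ell$, $e>0 \iff \ell < (d-1)(n-d+1)$, and $e = 0$ in the boundary case, which delivers exactly the three cases in the statement, including the $1/2$ bound.

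For part (ii), with $\ell s < n$, one uses the maximal admissible $\F_{q^\ell}$-dimension $k = \lfloor n(\ell s - d + 1)/\ell \rfloor$ from Remark~\ref{rem:quasiMRD} (now $\min\{n,m\} = \ell s$, $\max\{n,m\} = n$). Here I would be careful with the floor: write $\ell k = n(\ell s - d + 1) - \rho$ where $\rho \in \{0, \dots, \ell-1\}$ is the remainder, and check that $\rho$ equals the quantity $r := n(d-1) - \ell\lceil n(d-1)/\ell\rceil$... wait, this needs a sign check --- since $\ell k = n\ell s - n(d-1) - \rho$ with $0 \le \rho < \ell$, we get $\rho \equiv -n(d-1) \pmod \ell$, i.e. $\rho = \ell\lceil n(d-1)/\ell\rceil - n(d-1) = -r$; so the paper's $r$ is $-\rho$ and one should track this carefully when plugging in. Computing $\ell(ns+1-k) = \ell ns + \ell - \ell k = \ell ns + \ell - n(\ell s - d + 1) + \rho = n(d-1) + \ell + \rho$, and the ball exponent $(d-1)(n + \ell s - d + 1)$, the difference is $e = (d-1)(\ell s - d + 1) + (d-1)n - n(d-1) - \ell - \rho = (d-1)(\ell s - d+1) - \ell - \rho$. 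Substituting $\rho = -r$ gives $e = (d-1)(\ell s - d + 1) + r - \ell$, which reproduces the claimed trichotomy. The main obstacle, and the only genuinely delicate point, is getting this floor/remainder identity and its sign exactly right so that the exponent computation matches the stated $r$; everything else is a mechanical substitution into Theorem~\ref{thm:sublimq} together with Lemma~\ref{lem:rkballasy}(i).
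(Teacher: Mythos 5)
Your proposal is correct and follows essentially the same route as the paper: apply Theorem~\ref{thm:sublimq} with the estimate $\bbqrk{\F_{q^{\ell s}}^n,d-1}\sim q^{(d-1)(\ell s+n-d+1)}$ from Lemma~\ref{lem:rkballasy}(i), compute the exponent difference, and in part (ii) use $k=\lfloor n(\ell s-d+1)/\ell\rfloor = ns-\lceil n(d-1)/\ell\rceil$. Your explicit sign-tracking of the remainder (noting that the paper's $r$ is the negative of the natural remainder $\rho$) is a worthwhile clarification of a point the paper leaves implicit, but the argument is the same.
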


\begin{proof}
For the first part of the theorem, note that we have
\begin{align*}
    \lim_{q \to +\infty} \frac{\bbqrk{\F_{q^{\ell s}}^n,d-1}q^{\ell s (n-d+1)}}{q^{\ell(ns+1)}}= \lim_{q \to +\infty} q^{(d-1)(n-d+1)-\ell} = 0
\end{align*}
if and only if $(d-1)(n-d+1) < \ell$. 
On the other hand, 
\begin{align*}
    \lim_{q \to +\infty} \frac{q^{\ell(ns+1)}}{\bbqrk{\F_{q^{\ell s}}^n,d-1}q^{\ell s (n-d+1)}}= \lim_{q \to +\infty} \frac{1}{q^{(d-1)(n-d+1)-\ell}} = 0
\end{align*}
if and only if $\ell < (d-1)(n-d+1)$.
Finally, the case $\ell = (d-1)(n-d+1)$ is another easy application of Theorem~\ref{thm:sublimq}.

To prove the second part of the theorem, we proceed analogously to the arguments for the first part, where we additionally use the fact that $k = \lfloor{ n(\ell s-d+1)}/{\ell}\rfloor = ns-\lceil n(d-1)/\ell \rceil$.
\end{proof}

We now turn to the question if $\mathbb{F}_{q^{\ell}}$-linear MRD codes in $\F_{q^{\ell s}}^n$ are dense or sparse as $\ell$, $s$ and~$n$ tend to infinity.

Recall that in~\cite{neri2018genericity} it was shown that $\F_{q^m}$-linear MRD codes are dense as $m \to +\infty$. The following result generalizes this fact, and shows that also $\F_{q^\ell}$-linear MRD codes in $\F_{q^{\ell s }}^n$ are dense as $\ell \to +\infty$.

\begin{theorem}
Let $q \in Q$ and suppose that $3 \le n$ and $1 \le \ell,d$ are integers with $1 \leq d \leq n$.
Then 
$$\lim_{\ell \to +\infty} \delta_q^\textnormal{rk}(\F_{q^{\ell s}}^n,q^{ \ell s(n-d+1)},\ell,d)= 1,$$
i.e., $\F_{q^\ell}$-linear MRD codes in $\F_{q^{\ell s }}^n$ are dense as $\ell \to +\infty$.
\end{theorem}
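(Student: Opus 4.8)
The plan is to apply Theorem~\ref{thm:sublimell} with $D = \drk$, so that it suffices to show that the rank-metric ball of radius $d-1$ is asymptotically negligible compared to $q^{\ell(ns+1-k)}$ as $\ell \to +\infty$, where $k = s(n-d+1)$ is the $\F_{q^\ell}$-dimension of an MRD code meeting the bound in Theorem~\ref{thm:singletonlike} (here $n \le \ell s$ eventually, so $\min\{n,\ell s\} = n$ and the Singleton-like bound reads $q^{\ell k} \le q^{\ell s(n-d+1)}$, giving $k = s(n-d+1)$ and hence $ns + 1 - k = s(d-1) + 1$). Concretely, I would show
\begin{align*}
\bbqrk{\F_{q^{\ell s}}^n,d-1} \in o\!\left(q^{\ell(s(d-1)+1)}\right) \quad \text{as } \ell \to +\infty,
\end{align*}
and then invoke the ``in particular'' clause of Theorem~\ref{thm:sublimell} to conclude the limit is $1$.

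First I would record, using the estimate in Lemma~\ref{lem:rkballasy}(ii) (with $m = \ell s \to +\infty$), that
\begin{align*}
\bbqrk{\F_{q^{\ell s}}^n,d-1} \sim \qbin{n}{d-1}{q} q^{\ell s(d-1)} \quad \text{as } \ell \to +\infty.
\end{align*}
Note $\qbin{n}{d-1}{q}$ is a constant in $\ell$. Then the ratio
\begin{align*}
\frac{\bbqrk{\F_{q^{\ell s}}^n,d-1}}{q^{\ell(s(d-1)+1)}} \sim \frac{\qbin{n}{d-1}{q} q^{\ell s(d-1)}}{q^{\ell s(d-1)} q^{\ell}} = \frac{\qbin{n}{d-1}{q}}{q^{\ell}} \xrightarrow{\ell \to +\infty} 0,
\end{align*}
which is exactly the $o$-condition required. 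Applying Theorem~\ref{thm:sublimell}(i)--(ii) then gives $\lim_{\ell \to +\infty} \delta_q^\textnormal{rk}(\F_{q^{\ell s}}^n, q^{\ell s(n-d+1)}, \ell, d) = 1$.

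One bookkeeping point to handle carefully: Theorem~\ref{thm:sublimell} is stated with a fixed integer $k$ with $1 \le k \le ns$, whereas here the relevant dimension $k = s(n-d+1)$ is fixed once $s$ is fixed, so this is fine; I should just check $1 \le s(n-d+1) \le ns$, which holds since $1 \le d \le n$. A second subtlety is the hypothesis $n \le \ell s$ needed so that the maximum-dimension MRD code really has dimension $s(n-d+1)$ and is genuinely MRD (not merely quasi-MRD); since $\ell \to +\infty$ with $n, s$ fixed, this holds for all large $\ell$, and the density limit is unaffected by finitely many terms.

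I do not expect a serious obstacle here — the argument is essentially a direct substitution of the rank-ball asymptotics into the general machinery of Section~\ref{sec:asy}. The only mild care needed is the observation that $\ell$ plays the role of the growing field-extension parameter in Lemma~\ref{lem:rkballasy}(ii) (with $m = \ell s$), so that the $q$-binomial coefficient $\qbin{n}{d-1}{q}$ stays constant while the power $q^{\ell}$ in the denominator of the ratio grows without bound; this is what drives the ratio to $0$. The hypothesis $3 \le n$ in the statement is more than enough for the degenerate cases to be excluded, so I would not need to treat $n \le 2$ separately.
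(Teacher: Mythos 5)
Your proposal is correct and follows essentially the same route as the paper: both substitute the rank-ball asymptotic $\bbqrk{\F_{q^{\ell s}}^n,d-1} \sim \qbin{n}{d-1}{q}\,q^{(d-1)\ell s}$ (you via Lemma~\ref{lem:rkballasy}(ii) with $m=\ell s$, the paper directly from~\eqref{eq:rkballasy}) into the little-$o$ criterion of Theorem~\ref{thm:sublimell} with $k=s(n-d+1)$, so that the ratio reduces to $\qbin{n}{d-1}{q}/q^{\ell}\to 0$. Your extra bookkeeping on $1\le k\le ns$ and on $n\le \ell s$ holding for large $\ell$ is sound and, if anything, slightly more careful than the paper's write-up.
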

\begin{proof} 
By~\eqref{eq:rkballasy} we have
\begin{align*}
    \bbqrk{\F_{q^{\ell s}}^n,d-1} \sim \qbin{n}{d-1}{q}q^{(d-1)\ell s} \quad \textnormal{ as $\ell \to +\infty$.}
\end{align*}
Therefore $q^{\ell s(n-d+1)} \in o\left({q^{\ell (ns-1)}}/{\bbqrk{\F_{q^{\ell s}}^n,d-1}}\right)$ as $\ell \to +\infty$, and Theorem~\ref{thm:sublimell} concludes the proof.
\end{proof}

\begin{theorem} \label{thm:mrdlims}
Let $q \in Q$ and suppose that $3 \le n$ and  $1 \le \ell,d$ are integers with $2 \leq d \leq n$.
Then 
$$\limsup_{s \to +\infty} \delta_q^\textnormal{rk}(\F_{q^{\ell s}}^n,q^{\ell s(n-d+1)},\ell,d) \le \frac{q^\ell}{q^\ell+\qbin{n}{d-1}{q}} < 1.$$
\end{theorem}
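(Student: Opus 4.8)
The plan is to obtain this as a direct application of the general upper bound in Theorem~\ref{thm:sublims}(ii), with $D = \drk$, after pinning down the asymptotics of the rank-metric ball as $s \to +\infty$. The first step is to observe that sending $s \to +\infty$ (with $q,n,\ell,d$ fixed) is the same as sending $m = \ell s \to +\infty$, so the governing estimate is Lemma~\ref{lem:rkballasy}(ii) applied with radius $r = d-1$; since $2 \le d \le n$ we have $1 \le d-1 \le n-1 \le n$, so its hypothesis holds and
$$\bbqrk{\F_{q^{\ell s}}^n, d-1} \sim \qbin{n}{d-1}{q}\, q^{(d-1)\ell s} \qquad \text{as } s \to +\infty.$$
A small but essential point is that it is this ``$m \to +\infty$'' estimate, not the ``$n \to +\infty$'' one, that is relevant here.

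Next I would match parameters with Theorem~\ref{thm:sublims}. The prescribed cardinality $q^{\ell s(n-d+1)}$ corresponds to $\F_{q^\ell}$-dimension $k(s) = s(n-d+1)$, which is a linear function of $s$ satisfying $1 \le k(s) \le ns$ for every $s \ge 1$ (using $1 \le n-d+1 \le n$). A one-line computation gives $\ell\big(ns + 1 - k(s)\big) = \ell\big(s(d-1)+1\big)$, so combining with the displayed estimate,
$$\frac{\bbqrk{\F_{q^{\ell s}}^n, d-1}}{q^{\ell(ns+1-k(s))}} \longrightarrow \frac{\qbin{n}{d-1}{q}}{q^\ell} > 0 \qquad \text{as } s \to +\infty.$$
In particular $\bbqrk{\F_{q^{\ell s}}^n, d-1} \in \Omega\big(q^{\ell(ns+1-k(s))}\big)$, which is precisely the hypothesis required to invoke part~(ii) of Theorem~\ref{thm:sublims}.

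Finally, Theorem~\ref{thm:sublims}(ii) bounds $\limsup_{s\to+\infty}\delta_q^\textnormal{rk}(\F_{q^{\ell s}}^n, q^{\ell s(n-d+1)},\ell,d)$ above by
$$\limsup_{s \to +\infty} \frac{1}{1 + \bbqrk{\F_{q^{\ell s}}^n, d-1}\, q^{-\ell(s(d-1)+1)}} = \frac{1}{1 + \qbin{n}{d-1}{q}/q^\ell} = \frac{q^\ell}{q^\ell + \qbin{n}{d-1}{q}},$$
where the first equality uses that the denominator converges to the positive constant found above. Since $1 \le d-1 \le n-1$, the $q$-binomial $\qbin{n}{d-1}{q}$ is a positive integer, so the right-hand side is strictly less than $1$, which is the claim. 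There is no genuine obstacle here — the argument is pure bookkeeping — the only steps requiring care are the substitution $m = \ell s$ (so that the correct case of Lemma~\ref{lem:rkballasy} is used) and the correct evaluation of the exponent $\ell(ns+1-k)$ appearing in Theorem~\ref{thm:sublims}.
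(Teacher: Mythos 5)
Your argument is correct and is essentially the paper's own proof: both apply Theorem~\ref{thm:sublims}(ii) after noting that $\bbqrk{\F_{q^{\ell s}}^n,d-1}\sim\qbin{n}{d-1}{q}q^{(d-1)\ell s}$ as $s\to+\infty$ and that $\ell(ns+1-k)=\ell(s(d-1)+1)$, so the ratio tends to $\qbin{n}{d-1}{q}/q^{\ell}>0$. Your extra checks (the $\Omega$ hypothesis, $1\le k(s)\le ns$, and the strict inequality) are exactly the bookkeeping the paper leaves implicit.
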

\begin{proof} 
We will apply Theorem~\ref{thm:sublims} for proving the statement. First note that by~\eqref{eq:rkballasy} we have
\begin{align*}
    \bbqrk{\F_{q^{\ell s}}^n,d-1} \sim \qbin{n}{d-1}{q}q^{(d-1)\ell s} \quad \textnormal{ as $s \to +\infty$.}
\end{align*}
In particular, we have
\begin{align*}
    1- \frac{\bbqrk{\F_{q^{\ell s}}^n,d-1}}{q^{\ell(ns-s(n-d+1)+1)} + \bbqrk{\F_{q^{\ell s}}^n,d-1}} =  \frac{q^{\ell(s(d-1)+1)}}{q^{\ell(s(d-1)+1)} + \bbqrk{\F_{q^{\ell s}}^n,d-1}} \sim \frac{q^\ell}{q^\ell+\qbin{n}{d-1}{q}}
\end{align*}
as $s \to +\infty$, which proves the theorem.
\end{proof}

We now investigate the density of codes in $\F_{q^{\ell s}}^n$ as their vector length $n$ tends to infinity. Recall again that by Remark \ref{rem:quasiMRD}, in this setting, codes of the largest possible dimension are not necessarily MRD. 
\begin{theorem} \label{thm:mrdlimn}
Let $q \in Q$, let $1 \le \ell,s$ be integers and fix $2 \le d \le \ell s$. 
We have
$$\limsup_{n \to +\infty} \delta_q^\textnormal{rk}(\F_{q^{\ell s}}^n,q^{\ell k(n)},\ell,d) \le \frac{1}{1+\qbin{m}{d-1}{q}q^{-2\ell}} < 1,$$
where $k(n):=\lfloor{ n(\ell s-d+1)}/{\ell}\rfloor$ for all $2 \le n$.
\end{theorem}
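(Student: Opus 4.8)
The plan is to apply Theorem~\ref{thm:sublimn}, exactly as in the proof of Theorem~\ref{thm:sublinHammingn} and in the proofs of the preceding rank-metric results, but now keeping track of the extra slack that arises from taking the floor in $k(n)=\lfloor n(\ell s-d+1)/\ell\rfloor$. First I would use the asymptotic estimate of the rank-metric ball as $n\to+\infty$ from Lemma~\ref{lem:rkballasy}(iii), namely
\begin{align*}
\bbqrk{\F_{q^{\ell s}}^n,d-1} \sim \qbin{m}{d-1}{q}\,q^{n(d-1)} \quad \text{as } n\to+\infty,
\end{align*}
where $m=\ell s$; note that $d\le \ell s$ guarantees $d-1\le m$, so the estimate is valid. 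Next I would compute the exponent $\ell(ns+1-k(n))$ that appears in the denominator in Theorem~\ref{thm:sublimn}. Writing $k(n)=ns-\lceil n(d-1)/\ell\rceil$ (using the identity $\lfloor n(\ell s-d+1)/\ell\rfloor = ns-\lceil n(d-1)/\ell\rceil$ already invoked in the proof of Theorem~\ref{thm:sublinmrdq}), we get
\begin{align*}
\ell(ns+1-k(n)) = \ell\left(1+\left\lceil \frac{n(d-1)}{\ell}\right\rceil\right) = \ell + \ell\left\lceil \frac{n(d-1)}{\ell}\right\rceil.
\end{align*}
The key point is that $\ell\lceil n(d-1)/\ell\rceil \le n(d-1)+\ell-1 < n(d-1)+\ell$, so $\ell(ns+1-k(n)) < n(d-1)+2\ell$; more precisely $\ell(ns+1-k(n)) \le n(d-1)+2\ell - 1$.

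With these two computations in hand, I would form the ratio governing the bound in Theorem~\ref{thm:sublimn}(ii):
\begin{align*}
\frac{\bbqrk{\F_{q^{\ell s}}^n,d-1}}{q^{\ell(ns+1-k(n))}} \gtrsim \frac{\qbin{m}{d-1}{q}\,q^{n(d-1)}}{q^{n(d-1)+2\ell}} = \frac{\qbin{m}{d-1}{q}}{q^{2\ell}},
\end{align*}
which is a positive constant bounded below, independent of $n$. Hence $\bbqrk{\F_{q^{\ell s}}^n,d-1}\in\Omega(q^{\ell(ns+1-k(n))})$ as $n\to+\infty$, so the hypothesis of Theorem~\ref{thm:sublimn}(ii) is met and we obtain
\begin{align*}
\limsup_{n\to+\infty} \delta_q^\textnormal{rk}(\F_{q^{\ell s}}^n,q^{\ell k(n)},\ell,d) \le \limsup_{n\to+\infty}\frac{1}{1+\bbqrk{\F_{q^{\ell s}}^n,d-1}\,q^{-\ell(ns+1-k(n))}} \le \frac{1}{1+\qbin{m}{d-1}{q}q^{-2\ell}} < 1,
\end{align*}
where in the last inequality I would feed in the lower estimate of the ratio derived above; the final strict inequality holds because $\qbin{m}{d-1}{q}q^{-2\ell}>0$.

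The main obstacle I anticipate is not any deep argument but the bookkeeping with the floor/ceiling: one must verify carefully that $k(n)$ is a genuine linear-type sequence to which Lemma~\ref{lem:asymptotics} (and hence Theorem~\ref{thm:sublimn}) applies, and that the rounding really contributes at most a bounded multiplicative factor $q^{2\ell}$ rather than something growing in $n$. Along the way one should double-check that $2\le d\le \ell s$ ensures $d-1\le m$ so that $\qbin{m}{d-1}{q}$ is a well-defined nonzero $q$-binomial coefficient, and that $k(n)\ge 1$ and $k(n)\le ns$ for all large $n$ so that $\delta_q^\textnormal{rk}$ is defined; these are routine given $d\le \ell s$. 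A minor subtlety worth stating explicitly is that the constant on the right-hand side is the ``worst-case'' bound obtained by using $\ell(ns+1-k(n))\le n(d-1)+2\ell-1$; if one instead tracks the exact residue, the exponent $2\ell$ could be replaced by a smaller quantity for some residue classes of $n\bmod \ell$, but the stated uniform bound is exactly what is claimed and suffices.
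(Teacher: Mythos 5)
Your proposal is correct and follows essentially the same route as the paper's proof: apply Theorem~\ref{thm:sublimn} with the estimate $\bbqrk{\F_{q^{\ell s}}^n,d-1}\sim \qbin{m}{d-1}{q}q^{n(d-1)}$ from Lemma~\ref{lem:rkballasy}(iii), rewrite $k(n)=ns-\lceil n(d-1)/\ell\rceil$, and bound the ceiling defect by $\ell$ to absorb the rounding into the factor $q^{-2\ell}$. The only difference is cosmetic bookkeeping (your $\le n(d-1)+2\ell-1$ versus the paper's $\le n(d-1)+2\ell$), which does not change the stated bound.
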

\begin{proof} 
Applying Theorem \ref{thm:sublimn} and the asymptotic estimate as $n \to +\infty$ we get
\begin{align} \label{eq:samesame}
\limsup_{n \to +\infty} \delta_q^\textnormal{rk}(\F_{q^{\ell s}}^n,q^{\ell k(n)},\ell,d) \le \limsup_{n \to +\infty} \frac{1}{1+\qbin{m}{d-1}{q}q^{n(d-1)-\ell \lceil n(d-1)/\ell \rceil -\ell}},
\end{align}
where we used that $k(n) = \lfloor{ n(\ell s-d+1)}/{\ell}\rfloor = ns-\lceil n(d-1)/\ell \rceil$. Since $0 \le \ell \lceil n(d-1)/\ell \rceil -n(d-1) \le \ell$ we infer the statement in the theorem.
\end{proof}

Recall that by Remark~\ref{rem:matrixiso} there is a one-to-one correspondence between $\F_q$-linear MRD codes in $\F_{q^m}^n$, and $\F_q$-linear MRD codes in $\F_{q^n}^m$. Therefore, for $\ell=1$, the upper bound for the density as $m \to +\infty$ in Theorem~\ref{thm:mrdlims} and the upper bound for the density as $n \to +\infty$ in~\eqref{eq:samesame} are the same (up to exchanging $n$ and $m$).\footnote{Analogously for larger $\ell$ if $\ell$ divides $n$.} 

\section{Application III: Codes in the Sum-Rank Metric}\label{sec:SumRank}

In the scope of this paper, we only investigate sum-rank-metric codes $\mC$ in $\F_{q^m}^n$ that consist of block vectors $[x_1 | \dots | x_t]^\top \in \F_{q^m}^n$, where all blocks have equal size such that $x_i \in \F_{q^m}^{\eta}$ for all $i \in [t]$ and $n=\eta t$.
We define the sum-rank weight and sum-rank distance as follows.

\begin{definition}\label{def:sumrk}
Let $x =  [x_1 | \dots | x_t]^\top \in \F_{q^m}^n$. The ($t$-)\emph{sum-rank weight} of $x$ is $ \omega^{sr,t}(x)$ where~$ \wsr$ is the function defined as
    \begin{equation*}
        \wsr  :  \F_{q^m}^n \longrightarrow \mathbb{N}, \hspace{0.5em} x \mapsto \sum_{i=1}^{t}\wrk(x_{i}),
    \end{equation*}
where $\wrk(x_{i}) := \dim_{\F_{q}}\langle x_{i,1}, \dots ,x_{i,\eta} \rangle $  denotes the dimension of the $\F_{q}$-span of the entries of $x_{i}$.
For vectors $x,y \in \F_{q^m}^n$ the (t-)\emph{sum-rank distance} is $\dsr(x,y):=\wsr(x-y)$.
\end{definition}

Note that we clearly have that $\wrk(x_{i}) \leq \min\{m,\eta\}$ for $x_i \in \F_{q^m}^{\eta}$.
As usual, if $\mC \subseteq \F_{q^m}^n$ is an $\F_{q^{\ell}}$-linear subspace of dimension $k$ and minimum distance at least $d$ we say that $\mC$ is a $[\F_{q^m}^n,q^{\ell k},\ell,d]^{\textnormal{sr,t}}$-code, and if it is (possibly) nonlinear and has cardinality $S$, we say it is a $[\F_{q^m}^n,S,0,d]^{\textnormal{sr,t}}$-code.

\begin{remark} \label{rem:specialsumrkinst}
Note that if we set $\eta=1$ in Definition~\ref{def:sumrk}, then the sum-rank weight of a vector $x \in \F_{q^m}^n$ is the number of nonzero entries of $x$, and thus it reduces to the Hamming metric. On the other hand, if $t=1$ then it is easy to see that the sum-rank weight of a vector is the standard rank weight. Therefore, codes in the Hamming metric and codes in the rank metric can be seen as special instances of codes in the sum-rank metric.
\end{remark}

\begin{theorem} (Singleton-type bound; \text{see \cite[Theorem  3.2]{byrne2021Fundamental}}) \label{thm:sumrksingl}
Let $\mC \subseteq \F_{q^m}^n$ be a code in the sum-rank metric. Then the cardinality of $\mC$ is upper bounded by 
\begin{equation} \label{eq:Singletonlike_sr}
   |\mC| \leq q^{\max\{m, \eta \}(t\min\{m,\eta \}-d+1)}.
\end{equation}
\end{theorem}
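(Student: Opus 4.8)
The plan is to prove this Singleton-type bound by a puncturing argument carried out in the matrix model, in the spirit of the classical proofs of the Singleton bound for the Hamming metric and of the Delsarte bound for the rank metric.

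First I would pass to a matrix representation. Write $\mu := \min\{m,\eta\}$ and $\nu := \max\{m,\eta\}$. Fixing an $\F_q$-linear isometry $\F_{q^m}^{\eta} \to \F_q^{m\times \eta}$ that sends the rank weight of a vector to the rank of the associated matrix (as in Remark~\ref{rem:matrixiso}), and composing with transposition in case $\eta < m$, each block $x_i$ of a sum-rank codeword $x = [x_1 | \dots | x_t]^\top$ is identified with a matrix $X_i \in \F_q^{\mu\times\nu}$ satisfying $\wrk(x_i) = \rk(X_i)$. In this way $\mC$ becomes a subset of $(\F_q^{\mu\times\nu})^t$, and $\dsr(x,y) = \sum_{i=1}^{t}\rk(X_i - Y_i)$ for $x,y\in\mC$ with block matrices $X_i$ and $Y_i$. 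Set $d := \dsr(\mC)$ (the statement being trivial when $|\mC|\le 1$), and note that $2\le|\mC|$ forces $1\le d\le t\mu$, since the sum-rank weight of any nonzero vector is at most $t\mu$.

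Next I would set up the puncturing. Choose integers $r_1,\dots,r_t$ with $0\le r_i\le \mu$ and $\sum_{i=1}^{t} r_i = d-1$, which is possible precisely because $d-1\le t\mu$. For each $i$ let $\pi_i \colon \F_q^{\mu\times\nu}\to\F_q^{(\mu-r_i)\times\nu}$ delete the last $r_i$ rows, and let $\pi := (\pi_1,\dots,\pi_t)$ act block-wise. The elementary fact I would invoke is that deleting $r$ rows of a matrix lowers its rank by at most $r$ (the row space of the original matrix is spanned by that of the punctured matrix together with the $r$ deleted rows). Hence for distinct $x,y\in\mC$ we get
\begin{align*}
\sum_{i=1}^{t}\rk\big(\pi_i(X_i - Y_i)\big) \;\ge\; \sum_{i=1}^{t}\big(\rk(X_i - Y_i) - r_i\big) \;=\; \dsr(x,y) - (d-1) \;\ge\; 1,
\end{align*}
so $\pi(x)\neq\pi(y)$; that is, $\pi$ is injective on $\mC$.

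Finally I would count: the codomain of $\pi$ has cardinality $\prod_{i=1}^{t} q^{\nu(\mu-r_i)} = q^{\nu(t\mu - \sum_i r_i)} = q^{\nu(t\mu - d + 1)}$, and injectivity of $\pi$ on $\mC$ yields $|\mC|\le q^{\nu(t\mu-d+1)} = q^{\max\{m,\eta\}(t\min\{m,\eta\}-d+1)}$, which is the claimed inequality~\eqref{eq:Singletonlike_sr}. I do not expect a genuine obstacle in this argument; the two places that need care are (i) the transposition step, which is exactly what makes $\nu=\max\{m,\eta\}$ (rather than $\mu$) the outer exponent and hence gives the sharp form of the bound — puncturing columns instead would produce the weaker exponent $\mu(t\nu - d+1)\ge \nu(t\mu-d+1)$ — and (ii) the row-deletion rank inequality used in the displayed estimate.
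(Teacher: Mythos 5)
Your argument is correct. The paper does not prove this statement at all — it simply cites \cite[Theorem 3.2]{byrne2021Fundamental} — and your block-wise puncturing argument (pass to the matrix model, transpose so that rows are deleted on the $\min\{m,\eta\}$ side, remove a total of $d-1$ rows distributed as $r_1,\dots,r_t\le\mu$, and use that deleting $r$ rows drops the rank by at most $r$ to get injectivity on $\mC$) is the standard proof and is essentially the one given in the cited reference. The two points you flag as needing care (transposing to make $\max\{m,\eta\}$ the outer exponent, and the row-deletion rank inequality) are handled correctly, and the feasibility of the choice $\sum_i r_i=d-1$ with $0\le r_i\le\mu$ indeed follows from $d\le t\min\{m,\eta\}$.
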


A sum-rank-metric code is called \emph{MSRD} (\emph{maximum sum-rank distance}) code if its cardinality meets the Singleton-type bound in Theorem~\ref{thm:sumrksingl} with equality.

\begin{remark} \label{rem:quasiMSRD}
Let $\ell$ be a divisor of $m$ and let $s=[\F_{q^{m}}\colon \F_{q^{\ell}}]$. The $\F_q$-dimension of any $\F_{q^\ell}$-linear sum-rank-metric code $\mC \subseteq \F_{q^{\ell s}}^n$ has to be divisible by $\ell$. In particular, if  $\dim_{\F_{q^{\ell}}}(\mC)=k$, where $0 \le k \le ns$, then from the bound in Equation \eqref{eq:Singletonlike_sr}, we have
\begin{align} \label{eq:largestk_sr}
    \ell k \le \max\{\ell s, \eta \}(t\min\{\ell s ,\eta \}-d+1),
\end{align}
where $d=\drk(\mC)$. If $\ell s  \le \eta$, then the largest integer $k$ that satisfies \eqref{eq:largestk_sr} is 
\begin{align*}  
   k^{*}:=\left\lfloor \frac{\eta(\ell s t -d+1)}{\ell}\right\rfloor.
\end{align*}
Clearly, a code $\mC$ with $\dim_{\F_{q^{\ell}}}(\mC)= k^{*}$ is not necessarily MSRD. A code $\mC$ with $\dim_{\F_{q^{\ell}}}(\mC)= k^{*}$ that does not attain the Singleton bound with equality is called a \emph{quasi-MSRD code}.
\end{remark}

In what follows, we define the set of all possible partitions of a number $r$ into exactly $t$ parts by $U_{r}$, where each part is upper bounded by $\min\{m,\eta\}$. This set of partitions will be needed for giving the volume of the sum-rank-metric ball.

\begin{notation}
Let $2 \leq n$ and $0 \le r \le t  \min\{m,\eta\}$. Suppose that $1 \leq t$ is an integer and let $n = \eta t$. We define 
\begin{equation*} 
    U_{r}:= \left\{u=(u_{1}, \dots, u_{t}) \in \mathbb{N}_{0}^{t} \bigm|\, \sum_{i=1}^{t}u_{i}=r, u_{i} \leq \min\{m,\eta\} \textnormal{ for all } i \in [t] \right\}.
\end{equation*}
\end{notation}

The following lemma gives a closed formula for the volume of the sum-rank-metric ball of a given radius.
\begin{lemma}\cite{byrne2021Fundamental}  
    Let $0 \leq r \le t\min\{m,\eta\}$. We have 
\begin{equation*} 
    \bbqsr{\F_{q^m}^n, r} :=  \displaystyle\sum_{h=0}^r \displaystyle\sum_{u \in U_{h}} \displaystyle\prod_{i=1}^t\dstirling{\eta}{u_{i}}_{q} \prod_{j=0}^{u_{i}-1}(q^{m}-q^{j}).
\end{equation*}
\end{lemma}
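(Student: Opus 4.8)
The plan is to partition the sum-rank ball of radius $r$ according to the \emph{rank profile} of its elements and then count each class separately, using that the $t$ blocks of a vector are chosen independently.

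First I would record the single-block count. For $0\le j\le\min\{m,\eta\}$, the number of vectors $y\in\F_{q^m}^{\eta}$ with $\wrk(y)=j$ is $\dstirling{\eta}{j}_{q}\prod_{i=0}^{j-1}(q^{m}-q^{i})$: this is exactly the $i=j$ summand of the rank-metric ball formula \eqref{eq:rkballasy} applied with block length $\eta$, since the rank-metric ball is the disjoint union of the rank spheres (see also \cite{gabidulin}). If one prefers a self-contained derivation, fix an $\F_q$-basis of $\F_{q^m}$ to identify $\F_{q^m}^{\eta}$ with $\F_q^{m\times\eta}$ so that $\wrk(y)$ becomes the matrix rank, then build a rank-$j$ matrix by choosing its row space in $\dstirling{\eta}{j}_{q}$ ways and a surjection of $\F_q^m$ onto it in $\prod_{i=0}^{j-1}(q^{m}-q^{i})$ ways. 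Write $N(j)$ for this number, extended by $N(j)=0$ when $j>\min\{m,\eta\}$; this is consistent with the formula, since then one of the factors $q^{m}-q^{i}$ (or $\dstirling{\eta}{j}_{q}$) vanishes.

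Next I would split a vector as $x=[x_1|\cdots|x_t]^{\top}\in\F_{q^m}^{n}$ with $n=\eta t$, and use that $\wsr(x)=\sum_{i=1}^{t}\wrk(x_i)$ is determined by the profile $u(x):=(\wrk(x_1),\dots,\wrk(x_t))$, whose coordinates each lie in $\{0,\dots,\min\{m,\eta\}\}$. Then
\begin{equation*}
\bigl\{x\in\F_{q^m}^{n}:\wsr(x)\le r\bigr\}=\bigsqcup_{h=0}^{r}\ \bigsqcup_{u\in U_h}\bigl\{x\in\F_{q^m}^{n}:u(x)=u\bigr\}
\end{equation*}
is a disjoint union, because $U_h$ is by definition precisely the set of admissible profiles summing to $h$. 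For a fixed $u=(u_1,\dots,u_t)\in U_h$ the blocks are chosen independently, so the corresponding class has size $\prod_{i=1}^{t}N(u_i)=\prod_{i=1}^{t}\dstirling{\eta}{u_i}_{q}\prod_{j=0}^{u_i-1}(q^{m}-q^{j})$. Summing over $u\in U_h$ and then over $0\le h\le r$ yields the claimed closed form for $\bbqsr{\F_{q^m}^n,r}$.

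The argument is essentially bookkeeping, so I do not anticipate a genuine obstacle. The only points needing care are verifying that the profile classes really do partition the ball — which relies on the additivity of $\wsr$ over blocks together with the built-in bound $u_i\le\min\{m,\eta\}$ in the definition of $U_h$ — and the single-block count $N(j)$, which one can simply import from the known rank-metric sphere size underlying \eqref{eq:rkballasy}.
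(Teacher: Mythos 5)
Your argument is correct: the block-wise decomposition by rank profile, the independence of the $t$ blocks, and the standard count $\dstirling{\eta}{j}_{q}\prod_{i=0}^{j-1}(q^{m}-q^{i})$ for the rank-$j$ sphere in $\F_{q^m}^{\eta}$ together give exactly the stated formula, and your handling of the constraint $u_i\le\min\{m,\eta\}$ built into $U_h$ is the right bookkeeping. Note that the paper itself offers no proof of this lemma — it is imported from \cite{byrne2021Fundamental} — so your write-up is simply a correct, self-contained derivation along the standard lines one would expect in that reference.
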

The following lemma gives the asymptotic estimates of the volume of the sum-rank-metric ball as $q$ goes to infinity (and the other parameters are treated as constants).
 
\begin{lemma} \label{lem:srkballasy}
Let $t$ be a divisor of $n$, let $\eta = \frac{n}{t}$ and let $0 \le r \le t\min\{m,\eta\}$. We have
\begin{align*}
     \bbqsr{\F_{q^m}^n, r} \sim \binom{t}{\tilde{z}} q^{\frac{\tilde{z}^2}{t}-\tilde{z}+r(m+\eta -\frac{r}{t})}
 \quad \textnormal{ as $q \to +\infty$,}
\end{align*}
where $\tilde{z} \equiv r \pmod t$.
\end{lemma}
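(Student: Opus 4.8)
The plan is to start from the closed-form expression for $\bbqsr{\F_{q^m}^n,r}$ given above (see \cite{byrne2021Fundamental}),
\[
\bbqsr{\F_{q^m}^n, r} = \sum_{h=0}^r \sum_{u \in U_h} \prod_{i=1}^t \dstirling{\eta}{u_i}_q \prod_{j=0}^{u_i-1}(q^m - q^j),
\]
and to isolate the summands of largest order in $q$. Since $m$, $\eta$, $t$ and $r$ are fixed, the number of summands is independent of $q$, so it suffices to determine the exact order in $q$ of each summand and then add up those of maximal order.

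First I would compute the order of a single summand. By Lemma~\ref{lem:asymptotics}\ref{lem:asymptoticsq} applied with $\ell=1$ we have $\dstirling{\eta}{u_i}_q\sim q^{u_i(\eta-u_i)}$, and $\prod_{j=0}^{u_i-1}(q^m-q^j)=q^{mu_i}\prod_{j=0}^{u_i-1}(1-q^{j-m})\sim q^{mu_i}$, both as $q\to+\infty$ and each with leading coefficient $1$. Multiplying these finitely many equivalences over $i\in[t]$ gives, for $u\in U_h$,
\[
\prod_{i=1}^t \dstirling{\eta}{u_i}_q \prod_{j=0}^{u_i-1}(q^m-q^j)\;\sim\;q^{E(h,u)},\qquad E(h,u):=(m+\eta)h-\sum_{i=1}^t u_i^2,
\]
because $\sum_{i=1}^t\bigl(u_i(\eta-u_i)+mu_i\bigr)=(m+\eta)h-\sum_{i=1}^t u_i^2$. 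The exponents $E(h,u)$ are integers, so any two summands whose orders differ do so by at least $1$.

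Next I would maximise $E(h,u)$ over $0\le h\le r$ and $u\in U_h$. For fixed $h$ this amounts to minimising $\sum_i u_i^2$ subject to $\sum_i u_i=h$ and $0\le u_i\le\min\{m,\eta\}$; a standard smoothing argument (if two parts differ by at least $2$, lowering the larger by $1$ and raising the smaller by $1$ preserves feasibility and strictly decreases $\sum_i u_i^2$) shows that the minimum is attained precisely by the \emph{balanced} compositions, namely those with $z$ parts equal to $a+1$ and $t-z$ parts equal to $a$, where $a=\lfloor h/t\rfloor$ and $z=h-ta$; these are feasible because $\lceil h/t\rceil\le\lceil r/t\rceil\le\min\{m,\eta\}$, using $r\le t\min\{m,\eta\}$. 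Hence $\max_{u\in U_h}E(h,u)=f(h):=(m+\eta)h-\bigl(ta^2+z(2a+1)\bigr)$, and a short computation yields $f(h+1)-f(h)=(m+\eta)-2\lfloor h/t\rfloor-1$. For $0\le h\le r-1$ we have $h<r\le t\min\{m,\eta\}$, so $\lfloor h/t\rfloor\le\min\{m,\eta\}-1$ and therefore $f(h+1)-f(h)\ge\max\{m,\eta\}-\min\{m,\eta\}+1\ge 1$. Thus $f$ is strictly increasing on $\{0,\dots,r\}$, its maximum is $f(r)$, and it is attained on $U_r$ exactly by the balanced compositions of $r$, of which there are $\binom{t}{\tilde z}$, where $\tilde z\equiv r\pmod t$ with $0\le\tilde z<t$.

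Finally I would assemble the estimate: the $\binom{t}{\tilde z}$ summands of maximal order each contribute $\sim q^{f(r)}$, while all the other (finitely many) summands are $O(q^{f(r)-1})$, so $\bbqsr{\F_{q^m}^n,r}\sim\binom{t}{\tilde z}q^{f(r)}$. It then remains to rewrite $f(r)$: with $a=(r-\tilde z)/t$ one has $ta^2+\tilde z(2a+1)=\frac{(r-\tilde z)^2+2\tilde z(r-\tilde z)}{t}+\tilde z=\frac{r^2-\tilde z^2}{t}+\tilde z$, hence $f(r)=(m+\eta)r-\frac{r^2}{t}+\frac{\tilde z^2}{t}-\tilde z=\frac{\tilde z^2}{t}-\tilde z+r\bigl(m+\eta-\frac{r}{t}\bigr)$, which is the claimed exponent. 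I expect the only real work to be the double extremisation — first over $u$ for fixed $h$ via the smoothing argument, then over $h$ — together with the routine but error-prone arithmetic simplifying $f(r)$ into the stated closed form; the feasibility check for the balanced compositions and the strict monotonicity of $f$ both reduce immediately to the hypothesis $r\le t\min\{m,\eta\}$.
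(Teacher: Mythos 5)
Your proposal is correct and follows essentially the same route as the paper: estimate each summand as $q^{(m+\eta)h-\sum_i u_i^2}$, identify the balanced compositions as the minimizers of $\sum_i u_i^2$ for fixed $h$, show the exponent is maximized at $h=r$, and count the $\binom{t}{\tilde z}$ dominant terms. The only differences are cosmetic: you use a discrete smoothing argument where the paper uses Lagrange multipliers plus concavity, and your explicit computation $f(h+1)-f(h)\ge 1$ is a slightly more careful version of the paper's monotonicity argument for the leading term $h(mt+n-h)$.
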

\begin{proof}

Note that for any $u = (u_1,\dots,u_t) \in U_{h}$ we have
\begin{align*}
    \prod_{i=1}^t \qbin{\eta}{u_i}{q} \prod_{j=0}^{u_i-1} (q^m-q^j) \sim q^{\sum_{i=1}^t u_i(m+\eta-u_i)} = q^{h(\eta +m)-\sum_{i=1}^t u_{i}^{2}} \quad \textnormal{ as } q \to +\infty.
\end{align*}
Let $z \equiv h \pmod t$. Using Lagrange multipliers one sees that the maximum of the real-valued function $f:\R^t \longrightarrow \R, \, f  :  (x_1, \dots, x_t) \mapsto -\sum_{i=1}^t x_{i}^{2}$ in the region of $\R^t$ constrained by 
\begin{align*}
    \sum_{i=1}^t x_i = h, \quad 0 \le x_i \le \eta \textnormal{ for all } i \in \{1, \dots, t\}
\end{align*}
is attained for $x^{*} \in \R^t$ with $x^{*}_{1} = \dots = x^{*}_{t}= \frac{h}{t}$. Note that $f$ is concave in all $t$ variables, hence the integer partition $u \in U_{h}$ where $f$ reaches its maximum is $u^{*}= (u^{*}_{1},\dots,u^{*}_{t})$ with $u^{*}_{1} = \dots = u^{*}_{i_{z}}= \lfloor \frac{h}{t} \rfloor +1$ and $u^{*}_{k}=\lfloor \frac{h}{t} \rfloor$ where $k \in [t]\setminus \{i_{1} , \dots ,i_{z} \}$.
This gives
\begin{align*}
    \max_{u \in U_{h}}\left( -\sum_{i=1}^t u_{i}^{2} \right) &= -z(\left\lfloor \frac{h}{t} \right\rfloor+1)^{2}-(t-z)\left\lfloor \frac{h}{t} \right\rfloor^{2}\\
    &=- 2\left\lfloor \frac{h}{t} \right\rfloor z- \left\lfloor \frac{h}{t} \right\rfloor^{2}t-z \\
    &= -\left\lfloor \frac{h}{t} \right\rfloor(z+h)-z\\
    &= -\frac{(h-z)(h+z)}{t}-z = -\frac{h^{2}-z^{2}}{t}-z.
\end{align*}
Hence
\begin{align*}
    \displaystyle\sum_{u \in U_{h}} \displaystyle\prod_{i=1}^t\dstirling{n_{i}}{u_{i}}_{q} \prod_{j=0}^{u_{i}-1}(q^{m}-q^{j}) \sim \binom{t}{z}q^{h(m + \eta)-z-h^{2}/t + z^{2}/t} \quad \textnormal{ as $q \to +\infty$.}
\end{align*}
Write $p(h) = h(mt+n-h)$ as a polynomial in $h$ with roots at $h=0$, $h=mt+n$ and a maximum at $h^{*}= \frac{mt+n}{2}$. Note that $p$ is monotonically increasing on  $\left[0, \frac{mt+n}{2}\right)$. For $h \le r \le t\min\{\eta,m\}$ we have
$$h  \le r \le \frac{mt+n}{2},$$
from which we can conclude that
$$ \bbqsr{\F_{q^m}^n, r} \sim \binom{t}{\tilde{z}} q^{\frac{\tilde{z}}{t}-\tilde{z}+r(m+\eta -\frac{r}{t}) },$$
where $\tilde{z} \equiv r \pmod t$.
\end{proof}
As we saw in Remark \ref{rem:specialsumrkinst}, both the Hamming and the rank metric are special instances of the sum-rank metric. By analyzing the asymptotic densities of (sub)linear codes in these two metrics, we can infer that their behavior differs, especially when looking at $q \to +\infty$. This interesting fact motivates us to generalize and to analyze the asymptotic density as $q$ tends to infinity in the sum-rank metric.

\subsection{Nonlinear MSRD Codes}

We determine the sparsity of (possibly) nonlinear MSRD codes as $q \to +\infty$. 

\begin{theorem} \label{thm:nonlinearnSr}
Let $2 \leq d$ and $1 \leq m,\eta$ be integers, let $t$ be the number of blocks in the sum-rank metric and let $n=\eta t$.
Then we have
$$\lim_{q \to +\infty} \delta_q^{\textnormal{sr,t}}(\F_{q^{m}}^n,q^{\max\{m,\eta\}(t\min\{m,\eta\}-d+1)},0,d)= 0,$$
i.e., nonlinear MSRD codes are sparse as $q \to +\infty$.

\begin{proof}
By Lemma \ref{lem:srkballasy} we have
\begin{align*}
0 \leq \lim_{q \to +\infty} \frac{q^{mn}}{\textbf{v}_{q}^{\textnormal{sr,t}}(\F_{q^m}^n,d-1)q^{2\max\{m,\eta\}(t\min\{m,\eta\}-d+1)}}\leq \lim_{q \to +\infty} \frac{1}{q^{(d-1)(3\max\{m,\eta\})}}=0
\end{align*}
and Theorem~\ref{thm:nonLinearLimq} proves the statement.
\end{proof}
\end{theorem}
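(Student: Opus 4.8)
The plan is to reduce everything to the general nonlinear asymptotics of Section~\ref{sec:nonlinear-asym}. The space $(\F_{q^m}^n,\dsr)$ is a finite translation-invariant metric space, so Theorem~\ref{thm:nonlinbound}, Corollary~\ref{cor:nonlindensity}, and hence Theorem~\ref{thm:nonLinearLimq}, all apply with the base field $\F_q$ replaced by $\F_{q^m}$; concretely this just amounts to replacing $q^n$ by the ambient cardinality $q^{mn}$ throughout, and since $m$ is fixed, $q\to+\infty$ still drives $q^{mn}\to+\infty$, which is all that the proofs of Section~\ref{sec:nonlinear-asym} use. Writing $S_q:=q^{\max\{m,\eta\}(t\min\{m,\eta\}-d+1)}$ for the MSRD cardinality, the $\omega$-case of Theorem~\ref{thm:nonLinearLimq} then gives
$$\lim_{q\to+\infty}\delta_q^{\textnormal{sr,t}}(\F_{q^m}^n,S_q,0,d)=0\qquad\text{provided}\qquad \bbqsr{\F_{q^m}^n,d-1}\in\omega\!\left(q^{mn}S_q^{-2}\right)\ \text{ as }q\to+\infty .$$
So the whole proof boils down to checking this one growth condition, i.e.\ that the exponent of $q$ in $q^{mn}/\big(S_q^2\,\bbqsr{\F_{q^m}^n,d-1}\big)$ is a strictly negative constant.

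To verify this I would plug in the asymptotic estimate of Lemma~\ref{lem:srkballasy}: with $\tilde z\equiv d-1\pmod t$ and $0\le\tilde z\le t-1$,
$$\bbqsr{\F_{q^m}^n,d-1}\ \sim\ \binom{t}{\tilde z}\,q^{\,\tilde z^2/t-\tilde z+(d-1)(m+\eta-(d-1)/t)}\qquad\text{as }q\to+\infty .$$
Using $mn=\eta t\,m=t\max\{m,\eta\}\min\{m,\eta\}$ and $S_q^2=q^{2\max\{m,\eta\}(t\min\{m,\eta\}-d+1)}$, the exponent of $q$ in the ratio becomes an explicit constant in $m,\eta,t,d$; after collecting terms it can be put in the form $t(a-\mu)(M+a)+\tilde z\,(M-\mu+2a+1)$ with $M=\max\{m,\eta\}$, $\mu=\min\{m,\eta\}$, $a=\lfloor(d-1)/t\rfloor$. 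The only structural input needed is that for nontrivial MSRD parameters one has $2\le d\le t\min\{m,\eta\}$, which forces $0\le a\le\mu-1$; with that, each factor is easily bounded and the exponent comes out $\le-1$, in particular a strictly negative constant. Hence $q^{mn}/\big(S_q^2\,\bbqsr{\F_{q^m}^n,d-1}\big)\to0$, the $\omega$-condition holds, and the theorem follows. One can also sidestep Lemma~\ref{lem:srkballasy} entirely and use the crude lower bound that the radius-$(d-1)$ ball contains every vector whose block-rank profile is a fixed partition of $d-1$ into $t$ parts of size $\le\min\{m,\eta\}$, each block contributing $\asymp q^{u_i(m+\eta-u_i)}$ vectors; choosing the most balanced partition already yields a lower bound of the right order and suffices.

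There is no genuine obstacle here; the only thing requiring care is the exponent bookkeeping — keeping the $\max\{m,\eta\}$ versus $\min\{m,\eta\}$ split straight and correctly handling the floor hidden in $\tilde z$ — together with a sanity check of the boundary regimes, e.g.\ $d=t\min\{m,\eta\}$ (where the MSRD cardinality is smallest, $S_q=q^{\max\{m,\eta\}}$) and $m=\eta=1$ (where $\dsr$ collapses to the Hamming metric and the statement must specialise to Theorem~\ref{thm:nonlinearnHamming}(i), which it does, giving exponent $-(n-d+1)$). Everything else is a one-line invocation of Theorem~\ref{thm:nonLinearLimq}.
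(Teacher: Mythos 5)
Your proposal is correct and follows essentially the same route as the paper: reduce to the general nonlinear asymptotic criterion of Theorem~\ref{thm:nonLinearLimq} (applied with ambient size $q^{mn}$), plug in the ball-volume estimate of Lemma~\ref{lem:srkballasy}, and check that the exponent of $q$ in $q^{mn}/\bigl(S_q^2\,\bbqsr{\F_{q^m}^n,d-1}\bigr)$ is a strictly negative constant. Your explicit bookkeeping — the factorisation $t(a-\mu)(M+a)+\tilde z(M-\mu+2a+1)$ together with $0\le a\le\mu-1$, which indeed forces the exponent to be at most $-1$ — is in fact more careful than the paper's one-line assertion that the ratio is bounded by $q^{-3(d-1)\max\{m,\eta\}}$, and it correctly identifies the implicit hypothesis $d\le t\min\{m,\eta\}$ needed for the argument.
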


\subsection{(Sub)linear MSRD Codes}

Note that if $\ell s < \eta$, then the codes of maximum possible dimension are not necessarily MSRD codes (see also Remark \ref{rem:quasiMSRD}). Due to this reason we consider the two cases $\ell s < \eta$ and $\eta \le \ell s$ separately.

\begin{theorem} \label{thm:asympsrq}
Let $\eta, \ell , s, t ,d$ be positive integers such that $1<t<n$ and $2 \le d  \le t\min\{m,\eta\}$. Let $n = \eta t$ and $\tilde{z} \equiv d-1 \pmod t$. Then define $\theta := (d-1)\left(\min\{m, \eta \} - \frac{d-1}{t} \right) +\frac{\tilde{z}^{2}}{t}-\tilde{z}$.
\begin{itemize}
   \item[(i)]
   If $\eta \le m$ then we have
   \begin{align*}
     \lim_{q \to +\infty}  \delta_q^{\textnormal{sr,t}}(\F_{q^{m}}^n,q^{m(n-d+1)},\ell,d)  = \begin{cases}
 1 \quad  &\textnormal{if }  \theta < \ell,  \\
0 \quad  &\textnormal{if }  \ell < \theta.
      \end{cases}
\end{align*}
   Moreover if $\theta = \ell$ then $  \lim_{q \to +\infty}  \delta_q^{\textnormal{sr,t}}(\F_{q^{m}}^n,q^{m(n-d+1)},\ell,d)  \leq \frac{1}{1+\binom{t}{\tilde{z}}}.$
 \item[(ii)] If $m < \eta$ then we have
  \begin{align*}
     \lim_{q \to +\infty}  \delta_q^{\textnormal{sr,t}}(\F_{q^{m}}^n,q^{\ell k},\ell,d) = \begin{cases}
 1 \quad  &\textnormal{if }  \theta -r  < \ell,   \\
  0 \quad  &\textnormal{if }  \ell < \theta -r.
       \end{cases}
\end{align*}
Moreover, if $\theta -r  = \ell$ then
  $\lim_{q \to +\infty}  \delta_q^\textnormal{sr,t}(\F_{q^{m}}^n,q^{\ell k},\ell,d) \leq \frac{1}{1+\binom{t}{\tilde{z}}}$ ,  
  
  where $k= \left\lfloor \frac{\eta(m t -d+1)}{\ell} \right\rfloor$ and $r = \ell \left( \left\lceil \frac{\eta(d-1)}{\ell} \right\rceil - \frac{\eta(d-1)}{\ell}\right)$.
\end{itemize}
\begin{proof}
By Lemma \ref{lem:srkballasy} we have for the first part of the theorem
\begin{align*}
\lim_{q \to +\infty} \frac{ \bbqsr{\F_{q^{m}}^n, d-1}q^{m(\eta t -d+1)}}{q^{\ell(ns+1)}} = \lim_{q \to +  \infty}\frac{\binom{t}{\tilde{z}}q^{(d-1)(\eta -\frac{d-1}{t})+\frac{\tilde{z}^{2}}{t}-\tilde{z}}}{q^{\ell}} =0,
\end{align*}
if and only if $\theta < \ell$. On the other hand
\begin{align*}
\lim_{q \to +\infty} \frac{q^{\ell(ns+1)}}{\bbqsr{\F_{q^{m}}^n, d-1}q^{m(\eta t -d+1)}} = \lim_{q \to +  \infty}\frac{1}{\binom{t}{\tilde{z}}q^{(d-1)(\eta -\frac{d-1}{t})+\frac{\tilde{z}^{2}}{t}-\tilde{z}-\ell}} =0,
\end{align*}
if and only if $\ell < \theta$. Finally, for $\ell = \theta$ we obtain 
\begin{align*}
\lim_{q \to +\infty} \frac{ \bbqsr{\F_{q^{m}}^n, d-1}q^{m(\eta t -d+1)}}{q^{\ell(ns+1)}} = \binom{t}{\tilde{z}},
\end{align*}
and the third case is another easy consequence of Theorem~\ref{thm:sublimq}. 

To prove the second part of the theorem, we proceed analogously to the arguments for the first part, where we additionally use the fact that $k = \lfloor{ \eta(\ell s t-d+1)}/{\ell}\rfloor = ns-\lceil \eta(d-1)/\ell \rceil$.
\end{proof}
\end{theorem}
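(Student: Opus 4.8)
The plan is to obtain both statements by feeding the sum-rank ball estimate of Lemma~\ref{lem:srkballasy} into the general asymptotic density result of Theorem~\ref{thm:sublimq}; the hypotheses $n=\eta t$ and $d-1<t\min\{m,\eta\}$ guarantee that Lemma~\ref{lem:srkballasy} applies with radius $r=d-1$. Writing $\tilde{z}\equiv d-1\pmod t$, that lemma gives $\bbqsr{\F_{q^m}^n,d-1}\sim\binom{t}{\tilde{z}}\,q^{(d-1)(m+\eta-(d-1)/t)+\tilde{z}^{2}/t-\tilde{z}}$ as $q\to+\infty$, while Theorem~\ref{thm:sublimq} says that $\lim_{q\to+\infty}\delta_q^{\textnormal{sr,t}}(\F_{q^m}^n,q^{\ell k},\ell,d)$ equals $1$ when $\bbqsr{\F_{q^m}^n,d-1}\in o(q^{\ell(ns+1-k)})$, equals $0$ when it is in $\omega(q^{\ell(ns+1-k)})$, and in the limit is at most $\bigl(1+\bbqsr{\F_{q^m}^n,d-1}\,q^{-\ell(ns+1-k)}\bigr)^{-1}$ whenever it is in $\Omega(q^{\ell(ns+1-k)})$. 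So the entire proof reduces to comparing the $q$-exponent of $\bbqsr{\F_{q^m}^n,d-1}$ with $\ell(ns+1-k)$ and splitting into the resulting three regimes.

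First I would do case~(i), where $\eta\le m$, so $\min\{m,\eta\}=\eta$. Here the MSRD cardinality is $q^{m(n-d+1)}$ (using $n=\eta t$), so $\ell k=m(n-d+1)$, and since $m=\ell s$ this forces $k=s(n-d+1)$ and $\ell(ns+1-k)=\ell+m(d-1)$. Subtracting $\ell+m(d-1)$ from the exponent in the Lemma~\ref{lem:srkballasy} estimate collapses it exactly to $\theta-\ell$, so $\bbqsr{\F_{q^m}^n,d-1}\,q^{-\ell(ns+1-k)}\sim\binom{t}{\tilde{z}}\,q^{\theta-\ell}$. This tends to $0$ iff $\theta<\ell$, to $+\infty$ iff $\theta>\ell$, and to $\binom{t}{\tilde{z}}$ iff $\theta=\ell$; feeding these three cases into Theorem~\ref{thm:sublimq} produces the limits $1$ and $0$, and in the boundary case $\limsup_{q\to+\infty}\delta_q^{\textnormal{sr,t}}(\F_{q^m}^n,q^{m(n-d+1)},\ell,d)\le\bigl(1+\binom{t}{\tilde{z}}\bigr)^{-1}$, which is precisely statement~(i).

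For case~(ii), $m<\eta$ and $\min\{m,\eta\}=m$, and a code of maximal admissible $\F_{q^\ell}$-dimension need not be MSRD (cf.\ Remark~\ref{rem:quasiMSRD}); the relevant dimension is $k=\lfloor\eta(mt-d+1)/\ell\rfloor$. The one extra identity to record is $k=ns-\lceil\eta(d-1)/\ell\rceil$, which holds because $\eta m t/\ell=ns$ is an integer (as $m=\ell s$ and $n=\eta t$) and $\lfloor -x\rfloor=-\lceil x\rceil$; consequently $\ell(ns+1-k)=\ell+\ell\lceil\eta(d-1)/\ell\rceil=\ell+\eta(d-1)+r$ with $r=\ell(\lceil\eta(d-1)/\ell\rceil-\eta(d-1)/\ell)$. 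Repeating the exponent comparison, now with $\min\{m,\eta\}=m$, gives $\bbqsr{\F_{q^m}^n,d-1}\,q^{-\ell(ns+1-k)}\sim\binom{t}{\tilde{z}}\,q^{\theta-r-\ell}$, and the trichotomy $\theta-r<\ell$, $\theta-r>\ell$, $\theta-r=\ell$ yields the three conclusions of~(ii) via Theorem~\ref{thm:sublimq} exactly as before.

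I expect the only delicate point to be this exponent bookkeeping: correctly pinning down $\min\{m,\eta\}$ in each case, and, in case~(ii), converting the floor in the quasi-MSRD dimension into the ceiling $\lceil\eta(d-1)/\ell\rceil$ and then into the correction term $r$, so that the exponent difference simplifies to $\theta-\ell$ (respectively $\theta-r-\ell$). Once these identities are in place, the rest is a direct substitution into Lemma~\ref{lem:srkballasy} and Theorem~\ref{thm:sublimq}, and I do not anticipate any further obstacle.
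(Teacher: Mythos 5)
Your proposal is correct and follows essentially the same route as the paper: substitute the sum-rank ball asymptotics of Lemma~\ref{lem:srkballasy} into Theorem~\ref{thm:sublimq} and compare the exponent of $\bbqsr{\F_{q^m}^n,d-1}$ with $\ell(ns+1-k)$, splitting into the three regimes $\theta<\ell$, $\theta>\ell$, $\theta=\ell$ (resp.\ with $\theta-r$ in case (ii)). Your exponent bookkeeping, including the identity $k=ns-\lceil\eta(d-1)/\ell\rceil$ and the resulting correction term $r$, matches the paper's argument exactly; you are merely more explicit in case (ii) where the paper only says ``proceed analogously.''
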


As the density of $\F_{q}$-linear codes in the Hamming metric and rank metric has been studied pretty thoroughly, we will conclude this section by taking a closer look at their counterparts in the sum-rank metric. 

\begin{theorem} \label{thm: Fqlinearsr}
Let $\eta \le m$ and $2 \le d \le n$ be integers. We have \begin{align*}
    \lim_{q \to +\infty}\delta_q^\textnormal{sr,t}(\F_{q^{m}}^n,q^{m( n-d+1)},1,d) = \begin{cases}
         1 \quad &\textnormal{ if $(d-1)(n-d+1) < t$,} \\
         0 \quad &\textnormal{ if $t+t^2/4 < (d-1)(n-d+1)$.}
    \end{cases}
\end{align*}

\end{theorem}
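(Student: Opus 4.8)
The plan is to obtain this as the $\ell=1$ specialisation of the general sum-rank result: apply Theorem~\ref{thm:sublimq} with $\ell=1$ (so $s=m$ and $k=m(n-d+1)$) together with the ball-volume estimate of Lemma~\ref{lem:srkballasy}; equivalently, since $\eta\le m$, this is part~(i) of Theorem~\ref{thm:asympsrq}. The only real content is to rewrite the hypotheses $(d-1)(n-d+1)<t$ and $t+t^{2}/4<(d-1)(n-d+1)$ as the inequalities $\theta<1$, resp.\ $\theta>1$, that drive the dichotomy there.

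First I would record the exponents. With $\ell=1$, $s=m$, $k=m(n-d+1)$ one has $ns+1-k=nm+1-m(n-d+1)=m(d-1)+1$, so Theorem~\ref{thm:sublimq} gives $\delta_q^{\textnormal{sr,t}}(\F_{q^m}^n,q^{m(n-d+1)},1,d)\to 1$ when $\bbqsr{\F_{q^m}^n,d-1}\in o\big(q^{m(d-1)+1}\big)$ and $\to 0$ when $\bbqsr{\F_{q^m}^n,d-1}\in\omega\big(q^{m(d-1)+1}\big)$. By Lemma~\ref{lem:srkballasy} (here $t\mid n$, $\eta=n/t\le m$, so $\min\{m,\eta\}=\eta$), writing $\tilde z$ for the residue of $d-1$ modulo $t$,
\[
\bbqsr{\F_{q^m}^n,d-1}\;\sim\;\binom{t}{\tilde z}\,q^{\,\theta+m(d-1)}\quad\text{as }q\to+\infty,\qquad \theta:=(d-1)\Big(\eta-\tfrac{d-1}{t}\Big)+\tfrac{\tilde z^{2}}{t}-\tilde z,
\]
which is exactly the quantity $\theta$ of Theorem~\ref{thm:asympsrq}. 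Hence the density tends to $1$ if $\theta<1$ and to $0$ if $\theta>1$.

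It then remains to translate the two hypotheses. Multiplying the definition of $\theta$ by $t$ and using $n=\eta t$ gives $t\theta=(d-1)(n-d+1)+\tilde z(\tilde z-t)$. Since $\tilde z\in\{0,1,\dots,t-1\}$, the parabola $z\mapsto z(z-t)$ satisfies $-t^{2}/4\le \tilde z(\tilde z-t)\le 0$ on that range (the minimum at $z=t/2$, the upper bound because $0\le\tilde z<t$). Therefore $(d-1)(n-d+1)<t$ forces $t\theta\le(d-1)(n-d+1)<t$, i.e.\ $\theta<1$, so the density tends to $1$; and $t+t^{2}/4<(d-1)(n-d+1)$ forces $t\theta\ge(d-1)(n-d+1)-t^{2}/4>t$, i.e.\ $\theta>1$, so the density tends to $0$. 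This is exactly the claim.

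I do not anticipate a genuine obstacle: the proof is a mechanical combination of Theorem~\ref{thm:sublimq}/Theorem~\ref{thm:asympsrq} with the ball asymptotics. The only points deserving care are keeping the ``$+1$'' in the exponent $m(d-1)+1$ (which originates from the extra factor $q^{\ell}$ in the denominators of the density bounds of Corollary~\ref{cor:subbound}), and the elementary estimate of $\tilde z(\tilde z-t)$. One should also note that these estimates are merely \emph{sufficient}, so nothing is claimed in the intermediate range $t\le(d-1)(n-d+1)\le t+t^{2}/4$; this is why the theorem takes the form of two one-way implications rather than an if-and-only-if.
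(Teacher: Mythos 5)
Your proposal is correct and follows essentially the same route as the paper: the paper also proves this by specializing Theorem~\ref{thm:asympsrq} to $\ell=1$ (using $\eta\le m$) and invoking the elementary bound $-t/4\le \tilde z^{2}/t-\tilde z\le 0$, which is exactly your estimate $-t^{2}/4\le\tilde z(\tilde z-t)\le 0$ after multiplying by $t$. Your rewriting $t\theta=(d-1)(n-d+1)+\tilde z(\tilde z-t)$ and the resulting one-way implications match the paper's argument.
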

\begin{proof}
Note that, using simple methods from Calculus, one can show that $$-t/4 \le \frac{(\tilde{z})^2}{t}-\tilde{z} \le 0,$$
where $\tilde{z} \equiv d-1 \pmod t$. Combining this with Theorem~\ref{thm:asympsrq}, where we set $\ell=1$ and use the fact that $\eta \le m$ concludes the proof.
\end{proof}
From the assumption $2 \le d \le n$ in Theorem \ref{thm: Fqlinearsr} we get
$$(d-1)(n-d+1) \in \left\{ (n-1), \dots , \left\lfloor \frac{n^{2}}{4}\right\rfloor \right\}.$$
This gives us a further characterization (see Corollary \ref{cor:boundsForeta} and Figure \ref{figure:srbounds} below) of sum-rank-metric codes from which we can extract all values of block length $\eta$, where we observe a sparse behaviour as $q \to +\infty$.
\begin{corollary} \label{cor:boundsForeta}
    Let $\eta \le m$ and $2 \le d \le n$ be integers. We have \begin{align*}
    \lim_{q \to +\infty}\delta_q^\textnormal{sr,t}(\F_{q^{m}}^n,q^{m( n-d+1)},1,d) = \begin{cases}
         1 \quad &\textnormal{ if $\eta < 2/\sqrt{t}$}\\
         0 \quad &\textnormal{ if $\frac{(t+2)^{2}}{4t} < \eta$.}
    \end{cases}
\end{align*}
\end{corollary}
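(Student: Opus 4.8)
The plan is to derive Corollary~\ref{cor:boundsForeta} directly from Theorem~\ref{thm: Fqlinearsr} by translating the conditions on $(d-1)(n-d+1)$ into conditions on the block length $\eta$, using the constraint $n=\eta t$ together with the range $2 \le d \le n$.

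First I would recall, as noted right after Theorem~\ref{thm: Fqlinearsr}, that for integers $d$ with $2 \le d \le n$ the quantity $(d-1)(n-d+1)$ is minimized at the endpoints $d=2$ (or $d=n$), giving value $n-1$, and maximized at the midpoint, giving value $\lfloor n^2/4 \rfloor$. Hence
\begin{align*}
n-1 \le (d-1)(n-d+1) \le \frac{n^2}{4}.
\end{align*}
For the density-$1$ direction, Theorem~\ref{thm: Fqlinearsr} requires $(d-1)(n-d+1) < t$. Since the largest possible value of $(d-1)(n-d+1)$ over all admissible $d$ is at most $n^2/4 = \eta^2 t^2/4$, it suffices to ensure $\eta^2 t^2/4 < t$, i.e.\ $\eta^2 < 4/t$, i.e.\ $\eta < 2/\sqrt{t}$. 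For the density-$0$ direction, Theorem~\ref{thm: Fqlinearsr} requires $t + t^2/4 < (d-1)(n-d+1)$; since the smallest value of $(d-1)(n-d+1)$ over admissible $d$ is $n-1 = \eta t - 1$, it suffices that $t + t^2/4 < \eta t - 1$, and I would check that this is implied by $\frac{(t+2)^2}{4t} < \eta$. Indeed $\frac{(t+2)^2}{4t} = \frac{t}{4} + 1 + \frac{1}{t}$, so $\frac{(t+2)^2}{4t} < \eta$ gives $\eta t > \frac{t^2}{4} + t + 1 > \frac{t^2}{4} + t + 1$, hence $\eta t - 1 > \frac{t^2}{4} + t$, which is exactly the needed inequality $t + t^2/4 < n-1 \le (d-1)(n-d+1)$.

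The main (minor) obstacle is bookkeeping: one must be careful that the quantifier structure matches. Theorem~\ref{thm: Fqlinearsr} is stated for a \emph{fixed} $d$ with $2 \le d \le n$, so the corollary's dichotomy must hold for every such $d$; this is why in the density-$1$ case one bounds $(d-1)(n-d+1)$ from \emph{above} by its maximum over $d$, and in the density-$0$ case one bounds it from \emph{below} by its minimum over $d$. I would also double-check that the hypothesis $\eta \le m$ is preserved (it is simply carried over verbatim) and that in the density-$0$ case the resulting $d$-range is nonempty, which follows since $\eta < 2/\sqrt t$ forces $\eta t < 2\sqrt t$, incompatible with $\eta \ge \frac{(t+2)^2}{4t}$, so the two cases are genuinely disjoint and each is realizable. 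Putting these elementary estimates together with Theorem~\ref{thm: Fqlinearsr} yields the statement.
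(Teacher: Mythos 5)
Your proposal is correct and follows exactly the route the paper intends: it bounds $(d-1)(n-d+1)$ above by $n^2/4 = \eta^2 t^2/4$ for the density-$1$ case and below by $n-1 = \eta t - 1$ for the density-$0$ case, then applies Theorem~\ref{thm: Fqlinearsr}, which is precisely the argument sketched in the remark preceding the corollary. The arithmetic translating $\frac{(t+2)^2}{4t} < \eta$ into $t + t^2/4 < \eta t - 1$ checks out, so nothing is missing.
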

\begin{figure}[hbt!]
  \centering
  \begin{tikzpicture}
   \begin{axis}[ axis lines=middle, xmin=1,xmax=10,
    ymin=1,ymax=4,
    width=\textwidth,
    height=0.4\textwidth,
   legend style={at={(1,0.2)},anchor=south
   east},
    xlabel=$t$, ylabel=$\eta (t)$,restrict y to domain=0:100, ]
        \addplot[color=blue, domain=1:10,samples=301, unbounded coords=discard] {(2/sqrt(x))}; 
         \addlegendentry{$\frac{2}{\sqrt{t}}$}
           \addplot[color=red, domain=1:10,samples=301, unbounded coords=discard] {((x+2)^2/(4*x))};
            \addlegendentry{$\frac{(t+2)^{2}}{4t}$}        
     \addplot[color=green, only marks,
    mark=halfcircle*,
    mark size=0.8pt] coordinates {
               (1,1)
    (2,1)
   (3,1)};
     \addplot[color=green, only marks,
    mark=halfcircle*,
    mark size=0.8pt] coordinates {
 (1,3)
      (1,4)
       (2,3)
      (2,4)
      (3,3)
      (3,4)
      (4,3)
      (4,4)
      (5,3)
      (5,4)
      (6,3)
      (6,4)
      (7,3)
      (7,4)
      (8,4)
          (9,4)
          (10,4)
      };
     \addplot[color=black, only marks,
    mark=halfcircle*,
    mark size=0.8pt] coordinates {
     (4,1)
     (5,1)
     (6,1)
      (7,1)
       (8,1)
        (9,1)
         (10,1)
         (1,2)
          (2,2)
           (3,2)
              (4,2)
     (5,2)
     (6,2)
      (7,2)
       (8,2)
            (8,3)
                 (9,3)
                      (10,3)
        (9,2)
         (10,2)
      };
   
               \end{axis}
  \end{tikzpicture}
    \caption{Block size $\eta$ depending on $t$ to fully characterize the asymptotic density of $\F_{q}$-linear MSRD codes as $q \to +\infty$. The green dots above the red line represent sets of parameters for which MSRD codes are sparse in $\F_{q^{m}}^{\eta t}$, whereas those below the blue line indicate a dense behaviour. The black dots represent sets of parameters for which the asymptotic behaviour of the density is unknown, or as in the case of $t=1$ and $\eta =2$ can not be classified as asymptotically dense or sparse.}
      \label{figure:srbounds}
  \end{figure}
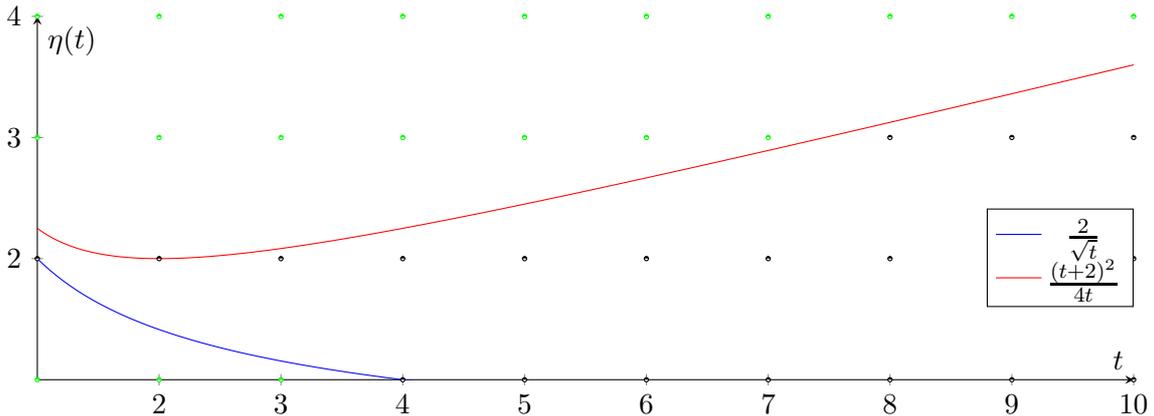
\begin{remark}\label{rem:comparison}
Recall from Remark~\ref{rem:specialsumrkinst} that codes in the Hamming metric and codes in the rank metric are subfamilies of sum-rank-metric codes. It is interesting to observe that these two classes of codes behave very differently with respect to density considerations as $q \to +\infty$. More explicitly, in Section~\ref{sec:Hamming} we saw that $\F_q$-linear MDS codes are dense as $q \to +\infty$. This is in stark contrast with the behavior of $\F_q$-linear MRD codes: they are sparse as $q \to +\infty$ (see Section~\ref{sec:Rank}). Even though there does not seem to be an obvious way to fully characterize for which values of $\eta, t$ and $d$ we have sparsity or density (or non-density) of the corresponding codes, experimental results strongly indicate that, in general, sum-rank-metric codes behave similarly to standard rank-metric codes, with an exception when $\eta=1$ (which is the case of Hamming-metric codes). Therefore MDS codes behave rather \emph{atypical} with respect to density considerations. We collect some examples of the asymptotic behavior of the density of sum-rank-metric codes in Table~\ref{table_sumrk}.
\begin{table}[h!]
    \centering
    \renewcommand\arraystretch{1.2}
 \begin{tabular}{|c|c|c|c|c|c|} 
 \hline
 \;$\eta$ \;&\; $t$ \;& \; $d$ \; & dense  & \emph{not} dense & sparse \\\noalign{\global\arrayrulewidth 1.8pt}
    \hline
    \noalign{\global\arrayrulewidth0.4pt}
 1 & 10 & 5 & $\checkmark$ & & \\
 \hline 
  $2$ & $\ge 1$ & 2 &  & $\checkmark$ &  \\
  \hline
 $\ge 2$ & 10 & 5 &  & & $\checkmark$ \\
  \hline
   $3$ & $\ge 1$ & 3 &  & & $\checkmark$ \\
  \hline
\end{tabular}
\caption{Examples illustrating the asymptotic behavior of the density of $\F_q$-linear sum-rank-metric codes as $q \to +\infty$.}
\label{table_sumrk}
\end{table}
\end{remark}

\FloatBarrier
\subsection*{Conclusion and Outlook}

We developed a general framework for determining densities of codes over finite fields, equipped with an arbitrary translation-invariant metric, with any linearity degree. This unifies several previous results (which were mostly derived with different mathematical tools) on densities of linear and nonlinear codes in the Hamming, rank and sum-rank metric. Moreover, we established new results for codes in these three metrics, answering many of the open question in this field.

One general observation is that nonlinear codes achieving the Gilbert-Varshamov bound in any metric, or the Singleton-type bound in the sum-rank metric (including the Hamming and rank metric as special cases), are always sparse, for both the field size or the length going to infinity. However, in the (sub)linear case, the metric, the linearity degree and the parameter going to infinity make a difference. In some cases we get sparsity, in some density and in some cases we can only derive an upper bound on the asymptotic density, showing that the code family is not dense (but we do not know if they are sparse). 

As a particular case of interest we studied the asymptotic behavior of MSRD codes for $q\rightarrow +\infty$, since MSRD codes generalize both the Hamming, as well as the rank metric. We derived bounds on the number of blocks for $\F_{q}$-linear MSRD codes to be dense or sparse. As explained in Remark \ref{rem:comparison}, the only case where MSRD codes are dense is when they are MDS codes. In all other cases we either get sparsity or an upper bound below $1$ for the density.

There are still many open questions for future work. E.g., in all cases where we obtained upper bounds below $1$ for the density, it is not clear if this bound is sharp or if there are sharper bounds, or if these code families are even sparse. Furthermore, we only presented the asymptotic densities of MSRD codes as the field size $q$ tends to infinity. With similar techniques we can derive results for the other parameters going to infinity, however, it was out of the scope of this paper to include them here; in particular, since they become more difficult
due to the vast number of variable parameters. Moreover, we restricted ourselves to MSRD codes where each block has the same length, and it remains an open problem to study the densities of such codes with different block lengths. Lastly, we will extend our framework to other metric spaces and study the densities of codes over finite rings, equipped with the Hamming or the Lee metric. 

\bigskip

\bibliographystyle{amsplain}
\bibliography{ourbib}

\end{document}